\documentclass[showpacs,pra,twocolumn,superscriptaddress,aps]{revtex4-1}

\usepackage{exscale}
\usepackage{bbm}
\usepackage{graphicx}
\usepackage{latexsym}
\usepackage{times}
\usepackage[T1]{fontenc}
\usepackage{enumerate}
\usepackage{bbold}
\usepackage{color}
\usepackage{mathtools}
\usepackage{changes}
\usepackage[normalem]{ulem}
\usepackage{amsfonts,amsmath,amssymb,amsthm,dsfont}
\usepackage[colorlinks=true,citecolor=blue,urlcolor=blue]{hyperref}

\usepackage{physics}

\theoremstyle{plain}
\newtheorem{thm}{Theorem}
\newtheorem{lem}{Lemma}

\newtheorem{cor}{Corollary}

\newcommand{\de}[1]{\left( #1 \right)}

\newcommand{\DE}[1]{\left\{#1\right\}}
\newcommand{\id}{\mathbbm{1}} 

\begin{abstract}
Experimental detection of multipartite Bell nonlocality faces many challenges. The increase in the size of the system demands a higher fidelity of state preparation, a higher detection efficiency, and a larger number of correlators that need to be estimated. These issues are even more pronounced if one wishes to detect a more restrictive variant of Bell nonlocality called genuine multipartite nonlocality (GMNL). In this work, we introduce two classes of multipartite Bell inequalities that are capable of detecting GMNL correlations and address different experimental hurdles in GMNL detection. The first class consists of the number of correlators that scales linearly with the number of parties. The second class increases the number of correlators and, in return, provides higher noise robustness of GMNL detection. Both classes of inequalities are constructed to detect GMNL in graph states, and in fact both can be used for self-testing of graph states and the performed measurements.

\end{abstract}

\begin{document}
\title{Genuine multipartite nonlocality detection in graph states}
\author{Owidiusz Makuta}
\affiliation{Instituut-Lorentz, Universiteit Leiden, P.O. Box 9506, 2300 RA Leiden, The Netherlands}
\affiliation{$\langle \text{aQa}^\text{L} \rangle$ Applied Quantum Algorithms Leiden, The Netherlands}

\author{Flavio Baccari}
\affiliation{Dipartimento di Fisica e Astronomia \say{G. Galilei} \& Padua Quantum Technologies Research Center, Università degli Studi di Padova, I-35131 Padova, Italy}

\affiliation{INFN, Sezione di Padova, via Marzolo 8, I-35131, Padova, Italy}

\author{Remigiusz Augusiak}
\affiliation{Center for Quantum-Enabled Computing, Center for Theoretical Physics, Polish Academy of Sciences, Aleja Lotnik\'{o}w 32/46, 02-668 Warsaw, Poland}

\maketitle
\section{Introduction}

Bell nonlocality is one of the most fundamental features of quantum mechanics. Its detection remains the strongest test of the quantumness of an experimental setup, as it relies solely on general principles governing probabilities and requires no assumptions about the internal functioning of the measurement devices. As a result, Bell nonlocality is a resource for several applications within the device-independent framework such as quantum cryptography \cite{Zapatero_2023}, self-testing \cite{Supic2020selftestingof}, distributed computing \cite{Buhrman_2010}, secure delegated computing \cite{Reichardt2013-tk} and randomness amplification \cite{Pironio_2010}.

However, the extraordinary strength of Bell nonlocality as a certificate of quantum behavior demands an equally extraordinary experimental setup for its observation. The lack of assumptions about the measurement devices naturally makes their detection more susceptible to noise. Then, its proper implementation requires spatial separation between measurements, which requires either a large distance between the measurement devices or ultra-fast measurements. On top of that, a high detection rate is also a strict requirement, as any missed detections must be counted towards the total tally. Due to these challenges, over 50 years passed between the theoretical discovery of Bell nonlocality \cite{PhysicsPhysiqueFizika.1.195} and its complete experimental verification \cite{Hensen2015-en, PhysRevLett.115.250401, PhysRevLett.115.250402, PhysRevLett.119.010402, Storz2023-tj}.

While the most basic Bell scenario requires only two measurement devices, many applications listed above either benefit from or require a higher number of them. This led researchers to focus on multipartite Bell scenarios, in which, depending on the motivation, one can identify many possible variants of Bell nonlocality \cite{ PhysRevLett.103.090503, PhysRevLett.127.200401,PhysRevD.35.3066, PhysRevLett.109.070401,PhysRevA.88.014102}.  One such variant, called \textit{genuine multipartite nonlocality} (GMNL) \cite{PhysRevD.35.3066, PhysRevLett.109.070401,PhysRevA.88.014102}
describes nonlocal correlations that span the full system size.
Due to this, GMNL serves as a much stronger certificate, as it requires global correlations in the measurement data, making any applications of it less susceptible to attacks by malicious agents.

Although many potential advantages of Bell nonlocality lie in the realm of multipartite scenarios, with few exceptions \cite{PhysRevLett.129.150402, Hu_2025}, most of the research up to this point has focused on Bell nonlocality detection in the simplest scenario with two measurement devices. This is because in the multipartite setting every challenge in nonlocality detection becomes more severe. For example, the increased number of measurements increases the likelihood of any one of them producing a faulty result; hence, high-fidelity experimental equipment is necessary for a multipartite Bell test. Moreover, to witness multipartite variants like GMNL, the number of correlators that need to be estimated grows with the system size, which further amplifies the effects of noise on the data.

However, from a theoretical perspective, we also face an increased difficulty when multiple measurement devices are considered. This is due to the fact that the probability space of all possible measurement statistics grows exponentially with the number of measurement devices, making their full characterisation virtually impossible. Consequently, a possible way out is to design Bell inequalities which 
are tailored to the specific quantum system under study. Several such constructions have been provided recently, such as those for symmetric states \cite{TURA2015370, Curchod_2019}, graph states \cite{PhysRevLett.124.020402, PhysRevLett.95.120405, article, zhao2026certifyingoptimaldeviceindependentquantum}, or stabilizer codes \cite{PhysRevLett.125.260507, PhysRevResearch.4.043215, vallée2025tailoringbellinequalitiesqudit, Bulchandani_2023}. 

In particular, Ref. \cite{PhysRevLett.124.020402} introduced a class of Bell inequalities tailored to $N$-qubit graph states, the so-called Baccari-Augusiak-Šupić-Tura-Acín (BASTA) inequality.
Notably, testing nonlocality with these inequalities requires determining a number of expected values that scales linearly with the number of qubits, making them good candidates for experimental implementation \cite{Wu2021robust,Yang_2022}. What is more, since they enable self-testing, they are well-suited for real-life applications of nonlocality. However, the BASTA inequalities, while being 'experimentally friendly', turn out to be incapable of detecting GMNL. On the other hand, the known way of detecting GMNL in graph states \cite{Almeida_2010} has the opposite problem: it requires a number of measurement settings that scales linearly with the number of parties \cite{Makuta_2025}, which makes the number of correlators that need to be estimated even higher. 

The main aim of this work is to adapt BASTA Bell inequalities to make them capable of detecting genuine nonlocality, while retaining the advantages such as low correlation number and the self-testing property.
The paper is structured as follows: Sec. \ref{sec:pre} introduces graph states, Bell nonlocality and the BASTA inequality. Sec. \ref{sec:res} covers the results of this work: a construction of an inequality detecting GMNL based on the BASTA inequality, its robust self-testing, and a second construction of a GMNL-detecting inequality based on the Svetlichny inequality. We conclude the manuscript with Sec. \ref{sec:con} where we summarise the main conclusions of this work and list remaining open problems.

\section{Preliminaries}\label{sec:pre}
In this section, we introduce definitions and concepts required for the derivation and understanding of our results: graph states and Bell nonlocality.

\subsection{Graph states}
To introduce graph states, we first need to define a couple of notions from graph theory. A graph is a tuple $G=(V,E)$ of the set of vertices $V=\{1,\dots,n\}=[n]$ and the set of edges $E$ connecting them. Next, the \textit{neighbourhood} $N_{i}$ of the vertex $i$ is the set of vertices connected to $i$ by an edge
\begin{equation}
N_{i} \coloneq \{j\in V\;| \;(i,j)\in E\}.
\end{equation}
Notably, in this paper, we assume that the considered graph $G$ is connected, i.e., we can get from any $i\in V$ to any $j\in V$ by traversing the edges of $G$.

With each graph $G$ satisfying the above constraints, one can associate the corresponding \textit{graph state} $\ket{G}$. Let us consider a Hilbert space $\mathcal{H}=(\mathbb{C}^{2})^{\otimes n}$; we associate a subspace $\mathcal{H}_{i}=\mathbb{C}^{2}$ (called qubit) with a vertex $i\in V$. Then, a graph state $\ket{G}$ of the graph $G$ is defined as a unique state from $\mathcal{H}=(\mathbb{C}^{2})^{\otimes n}$ that satisfies
\begin{equation}
 g_{i} \ket{G} = \ket{G},\quad \forall_{i\in[n]},
\end{equation}
where $g_{i}$, called the stabilizing operators of $\ket{G}$, are defined as
\begin{equation}\label{eq: generators}
g_{i} \coloneq X_{i} \prod_{j\in N_{i}}Z_{j},
\end{equation}
and $X_{i},Z_{i}$ are operators acting on $\mathcal{H}_{i}$ with Pauli matrices $X$ and $Z$, respectively, and with the identity $\mathbb{1}$ on every other subspace $\mathcal{H}_{j}$, $j\neq i$. Notice that the stabilizing operators $g_{i}$ encode the information about the graph $G$, as they act with $X$ on the vertex $i$ and with $Z$ on any vertex in its neighbourhood.

Importantly, some graph states $\ket{G},\ket{G'}$ are equivalent to each other up to local basis change, even though they originate from different graphs. In fact, if we can transform a graph $G$ into $G'$ by a series of graph transformations called \textit{local complementation}, it was shown that the resulting graph states $\ket{G},\ket{G'}$ are equivalent up to local unitaries \cite{PhysRevA.69.022316}. The local complementation is an operation acting on vertex $k$ and for each pair $i,j\in N_{k}$ it adds an edge $(i,j)$ if $(i,j)\notin E$, and it removes an edge $(i,j)$ if $(i,j)\in E$.

\subsection{Bell nonlocality}

Let us consider an $n$-partite scenario, where each party $i$ has access to measurement device that, after choosing a measurement setting $x_{i}$, produce an outcome $a_{i}$. Moreover, let us consider a division of the set $[n]$ into two disjoint sets $Q$ and $\overline{Q}$ called a bipartition $Q|\overline{Q}$. For notational simplicity, let $\mathbf{a}_{Q}=\{a_{i}\}_{i\in Q}$ and $\mathbf{x}_{Q}=\{x_{i}\}_{i\in Q}$ represent the outcomes and settings for parties in the set $Q$. Accordingly, $\mathbf{a}$ and $\mathbf{x}$ denote the outcomes and settings for the full set of parties $[n]$.

A \textit{behaviour} $\vec{p}=\{p(\mathbf{a}|\mathbf{x})\}_{\mathbf{a},\mathbf{x}}$ describing this scenario is said to be local if it admits the following decomposition
\begin{equation}\label{eq:local}
p(\mathbf{a}|\mathbf{x})=\sum_{\lambda}q_{\lambda}\prod_{i\in [n]} p(a_{i}|x_{i},\lambda),
\end{equation}
where $q_{\lambda}$ is a probability distribution and $\lambda$ is called a local hidden variable. Then, any behaviour that is not local is called Bell nonlocal (or simply nonlocal).

While this definition extends the traditional notion of Bell nonlocality to the multipartite regime, the latter exhibits a significantly richer structure, offering distinct forms of nonlocality relevant to specific research objectives. Notably, if we wish to capture the notion of "global" nonlocality among all $n$ parties, then we should consider the notion of \textit{genuine multipartite nonlocality} (GMNL) \cite{PhysRevD.35.3066, PhysRevLett.109.070401,PhysRevA.88.014102}. 

To this end, let us first consider $\vec{p}$ that are local across a bipartition $Q|\overline{Q}$
\begin{equation}\label{eq:local_bip}
p(\mathbf{a}|\mathbf{x})=\sum_{\lambda}q_{\lambda}p(\mathbf{a}_{Q}|\mathbf{x}_{Q},\lambda)p(\mathbf{a}_{\overline{Q}}|\mathbf{x}_{\overline{Q}},\lambda),
\end{equation}
where we demand that the marginal probability distributions $p(\mathbf{a}_{Q}|\mathbf{x}_{Q},\lambda)$ are non-signalling \cite{PhysRevLett.109.070401,PhysRevA.88.014102}. Any behaviour that is not local across $Q|\overline{Q}$ is called Bell nonlocal (or simply nonlocal) across $Q|\overline{Q}$. Clearly, a behaviour that is nonlocal across $Q|\overline{Q}$ is also nonlocal in the sense of Eq. \eqref{eq:local}.

Finally, a behaviour $\vec{p}$ is said to be \textit{bi-local} if it admits
\begin{equation}\label{eq:bilocal}
p(\mathbf{a}|\mathbf{x})= \sum_{\substack{Q\subset[n]\\ |Q|\leqslant \lfloor n/2\rfloor}}r_{Q}\sum_{\lambda}q_{\lambda,Q}p(\mathbf{a}_{Q}|\mathbf{x}_{Q},\lambda)p(\mathbf{a}_{\overline{Q}}|\mathbf{x}_{\overline{Q}},\lambda),
\end{equation}
where $r_{Q}$ and $q_{\lambda,Q}$ are probability distributions, $\lfloor\cdot \rfloor$ is the floor function, and we again demand that the marginal probability distributions $p(\mathbf{a}_{Q}|\mathbf{x}_{Q},\lambda)$ are non-signalling. Then, any non-bi-local behaviour, that is, one that does not admit a representation in terms of bi-local models \eqref{eq:bilocal} is termed GMNL. This notion has a very clear interpretation: it excludes behaviours that are local across some bipartition $Q|\overline{Q}$, but also probabilistic mixtures of such behaviours. In this sense, GMNL characterises behaviour that is truly $n$-partite. 

\subsection{Bell inequalities for graph states}

The most common way of detecting nonlocality is through the violation of a Bell inequality. The latter takes the form of a linear expression in $\vec{p}$, also called a \textit{Bell functional}, which is bounded from above by its maximal value over local models. More specifically, a generic form of a Bell inequality reads  
\begin{equation}
    I:=\vec{\alpha}\cdot\vec{p}\leq \beta_L,
\end{equation}
where $\vec{\alpha}$ is a given real multidimensional vector specifying the Bell inequality, $(\cdot)$ is 
the standard scalar product and $\beta_L=\max_{\mathrm{local}}I$ is referred to as the classical bound and is the maximal value of the Bell functional $I$ over the local models \eqref{eq:local}.

The simplest example of a Bell inequality is the Clauser–Horne–Shimony–Holt (CHSH) inequality \cite{PhysRevLett.23.880}, which, in the scenario of two parties, two settings, and two outcomes, states that all local $\vec{p}$ have to satisfy
\begin{equation}\label{eq: chsh}
\langle A_{1;0}A_{2;0}\rangle+ \langle A_{1;0}A_{2;1}\rangle + \langle A_{1;1}A_{2;0}\rangle -\langle A_{1;1}A_{2;1}\rangle\leqslant 2,
\end{equation}
where $A_{i;x_{i}}$ is a dichotomic observable with measurement outcomes $\pm 1$ describing the measurement of party $i$ corresponding to the measurement setting $x_{i}$, and $\langle A_{1;x_{1}}A_{2;x_{2}}\rangle$ are the standard expected values that express in terms of $p(\mathbf{a}|\mathbf{x})$. In our work, we consider only binary measurements, and so using expected values instead of probability distributions is more practical for our constructions. Moreover, the non-signalling constraints on the bi-local behaviours mean that the expected values encode all necessary information about the probability distribution. Therefore, in the rest of this work, we rely solely on the expected values formulation of Bell inequalities.

For any local behaviour, the expression in (\ref{eq: chsh}) is upper-bounded by $\beta_{L}=2$. In contrast, for quantum behaviours, the CHSH expression can attain values as high as  $\beta_{Q}=2\sqrt{2}$, referred to as the quantum bound or maximal quantum violation of the inequality. Noticeably, this maximal quantum value is attained by the maximally entangled state of two qubits, which is the simple example of a graph state. 

Building on the CHSH inequality, a construction of a class of multipartite Bell inequalities tailored to the $N$-qubit graph state was proposed later in \cite{PhysRevLett.124.020402}, which are sometimes referred to as the BASTA  inequalities. They can be stated in a compact form as
\begin{equation}\label{eq: graph_state_ineq}
|N_{r}|\langle \tilde{g}_{r}\rangle + \sum_{i\in [n]\setminus \{r\}} \langle \tilde{g}_{i}\rangle\leqslant |N_{r}|+n-1,
\end{equation}
where $r\in[n]$, and $\tilde{g}_{i}$ are operators constructed from the stabilizing operators $g_{i}$ of a given graph state $\ket{G}$ by the following substitution rules
\begin{equation}\label{eq: substitution_graph_state_ineq}
\begin{aligned}
X_{r}\rightarrow A_{r;0}+A_{r;1},\quad Z_{r}\rightarrow A_{r;0}-A_{r;1},\\
\forall_{i\in[n]\setminus \{r\}},\qquad X_{i}\rightarrow A_{i;0},\quad Z_{i}\rightarrow A_{i;1},
\end{aligned}
\end{equation}
where, as before, $A_{i;x_i}$ are binary Hermitian observables such that $A_{i;x_i}^2=\mathbbm{1}$ for all $x_u$ and $i$.
Throughout this paper, we refer to the party $r$ for which such a substitution is performed as a \textit{rotated party}. 

The maximal quantum value $\beta_{Q}=(2\sqrt{2}-1)|N_{r}|+n-1$ of the inequality \eqref{eq: graph_state_ineq} can be achieved by measuring the following observables
\begin{equation}\label{eq: observable_choice}
A_{r;j}=(X+ (-1)^{j}Z)/\sqrt{2},\quad A_{i;j}=X^{1-j}Z^{j},
\end{equation}
for $i\in[n]\setminus \{r\}$ and $j\in\{0,1\}$ on the graph state $\ket{G}$. 

To quantify the degree of violation for a given inequality, one uses the ratio $\beta_{Q}/\beta_L$, which for the inequality \eqref{eq: graph_state_ineq} amounts to 
\begin{equation}
\frac{\beta_{Q}}{\beta_{L}}=\frac{(2\sqrt{2}-1)|N_{r}|+n-1}{|N_{r}|+n-1}.
\end{equation}
As observed in Ref. \cite{PhysRevLett.124.020402}, for a given graph state, this ratio is maximised when $r$ is the vertex with the largest neighbourhood.

The ratio $\beta_{Q}/\beta_{L}$ is also a good metric for the noise robustness of the given inequality. Given a Bell functional $\langle I\rangle$ for which $\ket{\psi}$ achieves $\beta_{Q}$, if $\langle I\rangle=0$ for a maximally mixed state, then a state $\rho(p)= p \ket{\psi}\!\bra{\psi}+(1-p)2^{-n}\mathbb{1}$ violates inequality $\langle I \rangle \leqslant \beta_{L}$ if $p\geqslant \beta_{L}/\beta_{Q}$.

Notably, the BASTA inequalities were specifically constructed for detecting Bell nonlocality and, in particular, for self-testing graph states with a minimal number of expected values in mind. However, they are incapable of detecting GMNL behaviours originating from quantum states (as explained below). Arguably the most prominent Bell inequality for GMNL detection is that derived by Svetlichny \cite{PhysRevD.35.3066}, which we state here in a recursive form:
\begin{equation}\label{eq: svetlichny}
I_{n}=  I_{n-1}A_{n;0}+I'_{n-1}A_{n;1},
\end{equation}
where $I'_{n-1}$ is equal to $I_{n-1}$ up to the swap of measurement labels $0\leftrightarrow 1$ on each party, and $I_{2}$ is the CHSH operator. Then the inequality $\langle I_{n}\rangle\leq  2^{n-1}$ is satisfied for all bi-local behaviours, whereas the maximal quantum is $2^{n-1}\sqrt{2}$.

\section{Results}\label{sec:res}
Our aim is to adapt the BASTA inequality to detect genuine multipartite nonlocality, particularly in $N$-qubit graph states, while preserving their self-testing capabilities. Before introducing our construction in full generality, we begin with two illustrative examples; subsequently, we generalise these ideas to any graph. Lastly, we build on the same technique to derive the second class of inequalities detecting GMNL based on the Svetlichny inequality \cite{PhysRevD.35.3066}. 

\subsection{GMNL detection of a 3-qubit graph state}\label{sec:3partite}

Let us begin by showing that the BASTA inequalities are unable to detect GMNL in quantum systems. To this end, we consider the simplest tripartite graph $G=(\{1,2,3\},\{(1,2),(1,3)\})$ which corresponds to the $3$-qubit Greenberger-Horne-Zeilinger (GHZ) state \cite{Greenberger1989}. To construct an inequality maximally violated by the associated graph state $\ket{G}$, we have two choices of a rotated party: either $r=1$ or $r=2$, as the case $r=3$ is equivalent to the $r=2$ case due to the symmetry of $G$.

As per Eq. \eqref{eq: graph_state_ineq} the corresponding Bell functionals are
\begin{equation}\label{eq: G3_ineq_r=1}
\begin{aligned}
2 \langle (A_{1;0}+A_{1;1})A_{2;1} A_{3;1}\rangle + & \langle (A_{1;0}-A_{1;1})A_{2;0}\rangle \\ & +\langle (A_{1;0}-A_{1;1}) A_{3;0}\rangle,
\end{aligned}
\end{equation}
for $r=1$ and
\begin{equation}\label{eq: G3_ineq_r=2}
\begin{aligned}
\langle A_{1;0}(A_{2;0}-A_{2;1}) A_{3;1}\rangle + & \langle A_{1;1}(A_{2;0}+A_{2;1})\rangle\\
&+ \langle A_{1;1}A_{3;0}\rangle,
\end{aligned}
\end{equation}
for $r=2$.

To show that these functionals are incapable of detecting GMNL in quantum behaviors, let us show the existence of a bi-local behavior that gives rise to a higher value of these expressions that allowed by quantum theory, which are, $4\sqrt{2}$ and $2\sqrt{2}+1$ for Eq. \eqref{eq: G3_ineq_r=1} and \eqref{eq: G3_ineq_r=2}, respectively. For a bi-local strategy, let us consider a Popescu-Rohrlich (PR) box \cite{Popescu1994PRbox}, which is a non-signalling behaviour that achieves a value of $4$ on the CHSH expression. If we take a PR box between parties $1$ and $2$, and for party $3$ we choose a distribution such that $\langle A_{3;j}\rangle = 1$ for $j= 0,1$, then it is easy to see that both expressions \eqref{eq: G3_ineq_r=1} and \eqref{eq: G3_ineq_r=2} achieve the values of $6$ and $4$, respectively, both being larger than the maximal quantum values of these expressions.

Surprisingly, it is sufficient to modify the prefactor of a single expected value in both expressions to derive Bell inequalities that detect quantum GMNL. The modified Bell functionals read
\begin{equation}\label{eq: G3_gmnl_r=1}
\begin{aligned}
 \langle (A_{1;0}+A_{1;1})A_{2;1} A_{3;1}\rangle + & \langle (A_{1;0}-A_{1;1})A_{2;0}\rangle \\ & +\langle (A_{1;0}-A_{1;1}) A_{3;0}\rangle,
\end{aligned}
\end{equation}
for $r=1$ and
\begin{equation}\label{eq: G3_gmnl_r=2}
\begin{aligned}
\langle A_{1;0}(A_{2;0}-A_{2;1}) A_{3;1}\rangle + & \langle A_{1;1}(A_{2;0}+A_{2;1})\rangle\\
&+ 2 \langle A_{1;1}A_{3;0}\rangle,
\end{aligned}
\end{equation}
for $r=2$.

To compute the quantum bound for these functionals, let us first observe that Eq. \eqref{eq: G3_gmnl_r=2} is upper bounded by Eq. \eqref{eq: G3_ineq_r=2} plus $1$, meaning that the maximal quantum value of Eq. \eqref{eq: G3_gmnl_r=2} is upper bounded by $2\sqrt{2} +2$. Measuring observables \eqref{eq: observable_choice} on the graph state $\ket{G}$ yields the value $2\sqrt{2} +2$, proving that it is the quantum bound on Eq. \eqref{eq: G3_gmnl_r=2}.

As for the expression \eqref{eq: G3_gmnl_r=1}, finding its maximal quantum value requires an alternative approach based on the sum-of-squares decomposition. In fact, one directly checks that the following inequality 
holds true,
\begin{equation}
\begin{aligned}
2\sqrt{5}\mathbbm{1}-\mathcal{B}=&\frac{1}{\sqrt{5}}\left(\mathbb{1}-\sqrt{5}/2\;\tilde{g}_{1}\right)^{2}+\frac{2}{\sqrt{5}}\Big(\mathbb{1}-\sqrt{5}/4\;\tilde{g}_{2}\Big)^{2}\nonumber\\
&+\frac{2}{\sqrt{5}}\left(\mathbb{1}-\sqrt{5}/4\tilde{g}_{3}\right)^{2}.
\end{aligned}
\end{equation}
where $\mathcal{B}$ stands for the Bell operator corresponding to the expression \eqref{eq: G3_gmnl_r=1}. This simple sum-of-squares decomposition directly demonstrates that $2\sqrt{5}\mathbbm{1}-\mathcal{B}\geq 0$, implying that for quantum behaviours the largest value of the Bell functional \eqref{eq: G3_gmnl_r=1} is exactly $2\sqrt{5}$. Moreover, this value is achievable by taking the graph state $\ket{G}$ and the following observables 
\begin{equation}
\begin{aligned}
A_{1;j}=\frac{X+2(-1)^{j}Z}{\sqrt{5}},\qquad A_{i;0}=X,\qquad A_{i;1}=Z,
\end{aligned}
\end{equation}
for $i\in\{2,3\}$.

The maximal bi-local value $\beta_{BL}$ for both functionals can be readily computed by splitting the optimisation into three parts, one for each bipartition, and taking the maximal value over all of them. This approach takes advantage of the fact that for a linear functional $I(\cdot)$ and a bi-local distribution $p(\mathbf{a}|\mathbf{x})$, we have
\begin{equation}
I(p(\mathbf{a}|\mathbf{x}))= \sum_{\substack{Q\subset[n]\\ |Q|\leqslant \lfloor n/2\rfloor}}r_{Q}I(p_{Q|\overline{Q}}(\mathbf{a}|\mathbf{x})),
\end{equation}
where $p_{Q|\overline{Q}}(\mathbf{a}|\mathbf{x})$ is a behaviour local across $Q|\overline{Q}$ \eqref{eq:local_bip}. This implies that for any bi-local $p(\mathbf{a}|\mathbf{x})$ for which $I(p(\mathbf{a}|\mathbf{x}))=\beta_{BL}$, there exists a behaviour $p_{Q|\overline{Q}}(\mathbf{a}|\mathbf{x})$ for some $Q$, such that $I(p_{Q|\overline{Q}}(\mathbf{a}|\mathbf{x}))=\beta_{BL}$.

Let us begin with Eq. \eqref{eq: G3_gmnl_r=2}. The bipartition $2|13$ makes the party $2$ local, leaving us with two choices for the optimization: $\langle A_{2;0}+A_{2;1}\rangle=2$ or $\langle A_{2;0}-A_{2;1}\rangle=2$, which both lead to the value $4$.

Next, the optimization over distributions local with respect to the $1|23$ partition is equivalent to maximizing the following expression:
\begin{equation}
\begin{aligned}
\langle (A_{2;0}+A_{2;1})\rangle\langle A_{1;1}\rangle + \langle (A_{2;0}-A_{2;1})A_{3;1}\rangle\langle A_{1;0}&\rangle\\
+2\langle A_{1;1} \rangle  \langle A_{3;0}&\rangle. 
\end{aligned}
\end{equation}
Notice that we can upper-bound the last term by $2$ after which the first two terms are equivalent to a local model optimisation for the CHSH functional, and so the whole expression is upper-bounded by $4$.

Lastly, for the $12|3$ bipartition, we make use of the fact that
\begin{equation}
\langle A_{1;1}A_{2;j}\rangle + \langle A_{1;1}A_{3;0}\rangle \leqslant \langle A_{2;j}A_{3;0}\rangle +1,
\end{equation}
for $j=0,1$, which follows from the inequality derived in Ref. \cite{PhysRevA.72.022340}, which we generalise in the later part of this work (see Lemma \ref{lem: general_ineq_expected}). This allows us to upper-bound the Bell functional \eqref{eq: G3_gmnl_r=2} by
\begin{equation}
\langle A_{1;0} (A_{2;0}-A_{2;1}) \rangle \langle A_{3;1}\rangle + \langle (A_{2;0}+A_{2;1}) \rangle \langle A_{3;0}\rangle +2.
\end{equation}
We once again get a term related to a CHSH expression over a local model, which implies that the maximum of this expression is $4$.

As for the bi-local bound of the expression \eqref{eq: G3_gmnl_r=1}, the bipartition $1|23$ gives the value of $4$ by the same logic as for Eq. \eqref{eq: G3_gmnl_r=2} for the first bipartition. As for the $12|3$ and $13|2$ bipartitions, they can be computed simultaneously due to the symmetry of Eq. \eqref{eq: G3_gmnl_r=1}. For the $13|2$ bipartition, the sum of the first two terms is upper-bounded by $2$ for any distribution shared by parties $2$ and $3$. This yields a total upper bound of $4$ for the entire expression.

Since for every bipartition and both parties we always got the same value, we have that $\beta_{BL}=4$ for both Eq. \eqref{eq: G3_gmnl_r=1} and \eqref{eq: G3_gmnl_r=2}. This is smaller than the quantum bound for both of the functionals, showing that they are capable of detecting GMNL in quantum behaviours.

\subsection{GMNL inequalities for graph states}\label{sec:ineq_1}

The generalisation of the above tripartite inequalities to $n$-qubit graph states is surprisingly straightforward, as all the necessary components have already been discussed. First, to construct a GMNL witness centred on vertex $r$, we set the coefficients of all terms $\langle \tilde{g}_{i}\rangle$ with $i\in N_{r}$ to $1$, analogous to Eq. \eqref{eq: G3_gmnl_r=2}. For indices $i\notin N_{r}$, we upper-bound the expression using the inequality stated in the following lemma. 
\begin{lem}\label{lem: general_ineq_expected}
Let $S_{0}|S_{1}|S_{2}$ be a partition of the set of parties $[n]$ for $n\geqslant 3$ into three disjoint sets, and $A_{i;j}$ be a dichotomic observable with measurement outcomes $\pm 1$ acting on party $i$ with measurement setting $j$. Moreover, let us define collective observables 
\begin{equation}
O_{0;\vec{x}}\eqqcolon \prod_{i\in S_{0}}A_{i;x_{i}},\;\; O_{1;\vec{y}}\coloneqq  \prod_{i\in S_{1}}A_{i;y_{i}},\;\; O_{2;\vec{z}}\coloneqq \prod_{i\in S_{2}}A_{i;z_{i}}
\end{equation}
acting on spatially separated sets of parties $S_{0},S_{1},S_{2}$. Then for any choice of free parameters $\alpha_{0},\alpha_{1}\in [-1,1]$ and any non-signalling behaviour we have
\begin{equation}
\alpha_{0}\langle O_{0;\vec{x}}O_{1;\vec{y}}\rangle + \alpha_{1}\langle  O_{0;\vec{x}}O_{2;\vec{z}}\rangle \leqslant \alpha_{0}\alpha_{1}\langle O_{1;\vec{y}}O_{2;\vec{z}}\rangle + 1.
\end{equation}
\end{lem}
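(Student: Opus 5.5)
The plan is to reduce the statement to a pointwise inequality for three $\pm1$-valued random variables, using linearity in the behaviour together with non-signalling. Fix the settings $\vec{x},\vec{y},\vec{z}$. By non-signalling, the three two-block marginals $\langle O_{0;\vec{x}}O_{1;\vec{y}}\rangle$, $\langle O_{0;\vec{x}}O_{2;\vec{z}}\rangle$ and $\langle O_{1;\vec{y}}O_{2;\vec{z}}\rangle$ are all computed from one and the same joint distribution. This is the distribution $p(\mathbf{a}|\mathbf{x}')$, where $\mathbf{x}'$ is the full setting string assembled from $\vec{x}$ on $S_0$, $\vec{y}$ on $S_1$ and $\vec{z}$ on $S_2$; the sets are disjoint, so this string is well defined. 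Marginalising over the outcomes of a block does not depend on that block's setting, so each correlator equals the expectation of the corresponding product of outcomes under this single distribution.

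With this distribution fixed, define the random variables $a=\prod_{i\in S_0}a_i$, $b=\prod_{i\in S_1}a_i$ and $c=\prod_{i\in S_2}a_i$, each taking values in $\{\pm1\}$. The three correlators become $\mathbb{E}[ab]$, $\mathbb{E}[ac]$ and $\mathbb{E}[bc]$. The claim is linear in expectations, so it suffices to prove, for every $a,b,c\in\{\pm1\}$, that
\begin{equation}
\alpha_0 ab+\alpha_1 ac-\alpha_0\alpha_1 bc\leqslant 1 .
\end{equation}
Averaging this pointwise bound over the joint distribution then gives the lemma.

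The key step is the pointwise bound, which is elementary. Since $a^2=1$, we have $bc=(ab)(ac)$. Setting $u=ab$ and $v=ac$, with $u,v\in\{\pm1\}$, the left-hand side becomes $\alpha_0u+\alpha_1v-\alpha_0\alpha_1uv$. This factorises as
\begin{equation}
1-(1-\alpha_0u)(1-\alpha_1v).
\end{equation}
Because $|\alpha_0|,|\alpha_1|\leqslant1$, both factors $1-\alpha_0u$ and $1-\alpha_1v$ are nonnegative. Their product is therefore nonnegative, and the expression is at most $1$, which is the pointwise bound.

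The only point needing care is the reduction to a single joint distribution, since non-signalling behaviours need not be local overall. The three correlators nevertheless involve only the one fixed setting string $\mathbf{x}'$, and non-signalling guarantees that each pairwise marginal is consistent with $p(\mathbf{a}|\mathbf{x}')$. No hidden-variable model is required. The assumption $n\geqslant3$ only ensures that all three blocks can be nonempty; if a block were empty, its product would be $1$ and the argument would still go through.
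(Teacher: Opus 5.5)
Your proof is correct and takes essentially the same route as the paper. The paper also fixes one setting string and expands over the eight outcome triples, phrased through $p(O_i=O_j)=(1+\langle O_iO_j\rangle)/2$; it bounds the expression by its largest coefficient and writes each of the four resulting cases as $1-(1\pm\alpha_0)(1\pm\alpha_1)$ with independent signs, which is exactly your factorisation $1-(1-\alpha_0u)(1-\alpha_1v)$ evaluated at $u,v\in\{\pm1\}$.
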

A proof can be found in Appendix \ref{app:ineq_basta}. As stated before, this is a direct generalisation of an inequality derived in Ref. \cite{PhysRevA.72.022340}. With this, we can proceed to the first main result of this work.

\begin{thm}\label{thm: gmnl_standard}
Let us consider a connected graph $G$ and associate a qubit with each vertex. We construct the following Bell functional associated with the graph $G$,
\begin{equation}\label{eq: gmnl_ineq_general}
\begin{aligned}
\langle I\rangle =& \sum_{i\in \dot{N}_{r}}\langle \tilde{g}_{i}\rangle+2\sum_{i\notin \dot{N}_{r}}\langle \tilde{g}_{i}\rangle,
\end{aligned}
\end{equation}
where $r\in [n]$ is the rotated party, $\dot{N}_{r}=N_{r}\cup\{r\}$, and $\tilde{g_{i}}$ is an operator constructed form a stabilizing operator $g_{i}$ of the graph state $\ket{G}$ via the substitution \eqref{eq: substitution_graph_state_ineq}. This functional is upper bounded by $\beta_{BL}=2(n-1)$ and $\beta_{Q}=2\sqrt{1+|N_{r}|^{2}} + 2(n-|N_{r}|-1)$ for bi-local and quantum behaviors respectively. 
\end{thm}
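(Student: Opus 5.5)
The quantum and bi-local bounds need different techniques, so I treat them separately. Throughout, write $k=|N_r|$.

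\emph{Quantum bound.} I split the Bell operator into the part supported on $\dot N_r$ and the rest. For $i\notin\dot N_r$, the operator $\tilde g_i$ is a product of $\pm1$ observables, so $\langle\tilde g_i\rangle\leqslant 1$. These terms therefore contribute at most $2(n-k-1)$.

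For the star around $r$, I collect the observables of party $r$. Write $P$ for the product $\prod_{j\in N_r}A_{j;1}$ that multiplies $A_{r;0}+A_{r;1}$ in $\tilde g_r$. Write $M=\sum_{i\in N_r}M_i$, where $M_i$ is the remaining factor multiplying $A_{r;0}-A_{r;1}$ in $\tilde g_i$. Then
\begin{equation*}
\sum_{i\in\dot N_r}\tilde g_i=A_{r;0}(P+M)+A_{r;1}(P-M).
\end{equation*}
For any state $\ket\psi$, Cauchy--Schwarz gives $\langle\cdot\rangle\leqslant\|(P+M)\psi\|+\|(P-M)\psi\|$. The parallelogram law gives $\|(P+M)\psi\|^2+\|(P-M)\psi\|^2=2\|P\psi\|^2+2\|M\psi\|^2\leqslant 2(1+k^2)$, using $\|M\|\leqslant k$. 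Hence $\|(P+M)\psi\|+\|(P-M)\psi\|\leqslant 2\sqrt{1+k^2}$. This works for any Hilbert space dimension and gives the stated $\beta_Q$.

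For attainability, I take $\ket G$ with $A_{r;j}=(X+(-1)^jkZ)/\sqrt{1+k^2}$ and $A_{i;0}=X$, $A_{i;1}=Z$ for $i\neq r$. Then $A_{r;0}+A_{r;1}=2X/\sqrt{1+k^2}$ and $A_{r;0}-A_{r;1}=2kZ/\sqrt{1+k^2}$. The star thus contributes $(2+2k^2)/\sqrt{1+k^2}=2\sqrt{1+k^2}$, and every other stabilizer gives $1$. This generalises the $n=3$ computation.

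\emph{Bi-local bound.} As in Sec.~\ref{sec:3partite}, it suffices to bound the functional on behaviours local across a single bipartition $Q|\overline Q$. The trivial bound is $2+2k+2(n-k-1)=2n$, so for every bipartition I must gain $2$.

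In case (a) some neighbour $m\in N_r$ lies on the opposite side from $r$. Consider $\langle(A_{r;0}+A_{r;1})P\rangle+\langle(A_{r;0}-A_{r;1})M_m\rangle$. Here $P$ contains $A_{m;1}$ and $M_m$ contains $A_{m;0}$. Grouping all factors on $r$'s side with $A_{r;\cdot}$ and all factors on $m$'s side with $A_{m;\cdot}$, this pair becomes a CHSH expression evaluated on a model local across the cut, so it is bounded by $2$. The trivial bound for the pair is $4$, so the gain is exactly $2$. The remaining neighbour terms are each bounded by $2$ and the outer terms by $2$ each, giving $2n-2$.

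In case (b) all of $\dot N_r$ lies on one side. Since $G$ is connected, some edge $(i,j)$ crosses the cut with $j\notin\dot N_r$. Here I would mimic the $12|3$ step of Sec.~\ref{sec:3partite}. I apply Lemma~\ref{lem: general_ineq_expected} to a pair of terms, either $\tilde g_i$ with coefficient $1$ and $2\tilde g_j$ (when $i\in N_r$), or two outer terms. The sets $S_0,S_1,S_2$ are chosen so that the shared factor $O_0$ sits across the cut from the others. The lemma then trades the pair for a single correlator plus $1$. Combined with the remaining terms touching $i$ and $j$, this should produce a CHSH-type expression on a cut-local model, bounded by $2$ instead of $4$.

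\emph{Main obstacle.} I expect case (b) to be the hardest step: choosing the right partition $S_0|S_1|S_2$ and the right coefficients $\alpha_0,\alpha_1$ in the lemma so that exactly $2$ is lost. The difficulty is sharpest when both endpoints of the crossing edge lie outside $\dot N_r$. There the two outer terms have the form $A_{i;0}A_{j;1}(\cdots)$ and $A_{i;1}A_{j;0}(\cdots)$, which alone do not form a CHSH combination. My first attempt would be to pick a path in $G$ from the cut edge back to $\dot N_r$ and propagate the lemma along it, reducing to case (a) or to the $n=3$ pattern.
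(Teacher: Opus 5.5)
Your quantum bound is correct and takes a different route from the paper. Where the paper writes an explicit sum-of-squares decomposition, you bound $A_{r;0}(P+M)+A_{r;1}(P-M)$ by Cauchy--Schwarz and the parallelogram law. This is shorter and needs only $\|A_{i;j}\|\leqslant 1$. The SOS, on the other hand, directly yields the stabilizer-type relations used for self-testing in Theorem~\ref{thm: s-t}.

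In case (a) of the bi-local bound your conclusion is right, but the pair is not literally a CHSH expression. On $r$'s side the cofactor of $A_{r;x}$ is $\prod_{j\in N_r\cap Q}A_{j;1}$ in one correlator and $\prod_{j\in (N_m\cap Q)\setminus\{r\}}A_{j;1}$ in the other. So the observable on $r$'s side changes between the two terms, and the $Q$-side behaviour is only non-signalling. The bound $2$ still holds because every party in $Q\setminus\{r\}$ carries setting $1$ in both terms. This is exactly what Corollary~\ref{cor:bound_2} (two applications of Lemma~\ref{lem: general_ineq_expected}) establishes, and you should invoke it rather than CHSH.

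The genuine gap is case (b): you do not prove it, and the concrete step you propose cannot work. Lemma~\ref{lem: general_ineq_expected} needs every party to carry a single fixed setting across $O_{0;\vec x},O_{1;\vec y},O_{2;\vec z}$. For a crossing edge $(i,j)$, however, $\tilde g_i$ contains $A_{i;0},A_{j;1}$ while $\tilde g_j$ contains $A_{i;1},A_{j;0}$. The lemma therefore does not apply to them, and the same holds for two outer terms on adjacent vertices. The $12|3$ step of Sec.~\ref{sec:3partite} that you want to mimic actually merges $\tilde g_r$ (with $r=2$) and $\tilde g_3$, two vertices at distance two, not the endpoints of the cut edge $(1,3)$.

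The missing idea is the following. Take a shortest path $r=t_0,t_1,\dots,t_m=l$ to some $l\in\overline{Q}$. Merge $\tilde g_r$ successively with $\tilde g_{t_2},\tilde g_{t_4},\dots$, and $\tilde g_{t_1}$ with $\tilde g_{t_3},\dots$. Apply Lemma~\ref{lem: general_ineq_expected} with $\alpha_0=\alpha_1=1$ separately to the $A_{r;0}$ part and to the $A_{r;1}$ part. Each application costs $+1$, which is exactly covered by the weight $2$ on the outer terms.

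Minimality of the path gives $t_i\notin N_{t_j}$ for $|i-j|\geqslant 2$. This makes every application legitimate, since no party appears with conflicting settings. It also guarantees that $l$ ends up with opposite settings in the resulting $(A_{r;0}+A_{r;1})$- and $(A_{r;0}-A_{r;1})$-correlators. The bookkeeping then gives $\langle I\rangle\leqslant(\text{pair linking } r \text{ and } l)+2(n-2)$. Corollary~\ref{cor:bound_2}, whose disjointness hypotheses again follow from the shortest-path property, bounds the pair by $2$ on $Q|\overline{Q}$-product behaviours, giving $2(n-1)$.

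Without this path construction, your bi-local bound is established only for cuts that separate $r$ from one of its neighbours.
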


The proof uses the exact same ideas that were used for the case of $3$-vertex graph; the details can be found in Appendix~\ref{app:ineq_basta}. 

The intuition behind this inequality is straightforward. It comprises two parts, which we term the 'core' and the 'extension'. The core consists of party $r$ and its neighbours. It is constructed to ensure that certain quantum behaviours exceed the bounds of any behaviour which is local with respect to any arbitrary bipartition $Q|\overline{Q}$ such that $\dot{N}_{r}\not\subset Q,\overline{Q}$. The extension $i\notin \dot{N}_{r}$ is the part of a Bell functional that does not contribute directly to the advantage of quantum behaviours, but rather allows us to extend the inequality from the core to the rest of the parties, in such a way as to preserve the advantage of quantum behaviours over bi-local ones. This is performed by upper-bounding the original inequality by a new one using Lemma \ref{lem: general_ineq_expected}, similarly to how it was shown in the example in Sec. \ref{sec:3partite}.

The proof of the quantum bound $\beta_{Q}$ follows easily from a sum-of-squares decomposition (see Eq. \eqref{eq:sos_thm_1} in App. \ref{app:ineq_basta}), with the maximal quantum violation  achieved for a graph state $\ket{G}$ and the observables
\begin{equation}\label{eq: observable_choice_gmnl}
A_{r;k}=\frac{X+(-1)^{k}|N_{r}|Z}{\sqrt{1+|N_{r}|^{2}}},\qquad A_{i;j}=X^{1-j}Z^{j},
\end{equation}
for $i\in [n]\setminus \{r\}$.

A notable feature of our construction is that the maximal ratio $\beta_{Q}/\beta_{BL}$ is achieved by taking $r$ to be the vertex with minimal neighbourhood $|N_{r}|$. This stands in contrast to the BASTA inequality \eqref{eq: graph_state_ineq}, for which the maximal ratio $\beta_{Q}/\beta_{L}$ is reached by maximising $|N_{r}|$.

To illustrate how to apply Theorem \ref{thm: gmnl_standard} to a specific graph, let us consider another example $G=(V,E)$, where $V=[4]$ and $E={(1,2),(1,3),(2,3),(3,4)}$. As stated above, the optimal choice of the rotated party $r$ is the one minimising $|N_{r}|$. Considering the fact that $|N_{3}|=3$ and $|N_{1}|=|N_{2}|=2$, the optimal choice for this $G$ is $r=4$ with $|N_{r}|=1$. The generators of the graph state $\ket{G}$ are given by $g_{1}=X_{1}Z_{2}Z_{3}$, $g_{2}=Z_{1}X_{2}Z_{3}$, $g_{3}=Z_{1}Z_{2}X_{3}Z_{4}$, $g_{4}=Z_{3}X_{4}$. Then, following Theorem \ref{thm: gmnl_standard} we use substitution \eqref{eq: substitution_graph_state_ineq} to get $\tilde{g}_{1} = A_{1;0}A_{2;1} A_{3;1}$, $\tilde{g}_{2}=A_{1;1}A_{2;0}A_{3;1}$, $\tilde{g}_{3}=A_{1;1}A_{2;1}A_{3;0}(A_{4;0}-A_{4;1})$, and $\tilde{g}_{4}=A_{3;1}(A_{4;0}+A_{4;1})$. This gives us the following Bell inequality
\begin{equation}
\langle \tilde{g}_{3}\rangle + \langle \tilde{g}_{4}\rangle+2\langle\tilde{g}_{1}\rangle+2\langle\tilde{g}_{2}\rangle\leqslant 6,
\end{equation}
which is satisfied for all bi-local behaviours \eqref{eq:bilocal}. The quantum bound equals $\beta_{1}=2\sqrt{2}+4$ with the optimal state being $\ket{G}$ and the optimal observable choice given by Eq. \eqref{eq: observable_choice}.

Let us finally note that the Bell functional in Theorem \ref{thm: gmnl_standard} is also closely related to the Bell inequality derived in Ref. \cite{Coiteux-Roy2025-es} for a more restrictive notion of GMNL, called LOSR-GMNL. The advantage of our construction is that it is more versatile, i.e., it applies to all graph states, whereas the other construction does not, and it requires only two measurements per party.

\subsection{Local complementation improvement}\label{sec:LC}

Interestingly, the ratio $\beta_{Q}/\beta_{BL}$ can further be improved, beyond a simple optimization over $|N_{r}|$. The idea is to use local complementation on the target graph $G$ to find an equivalent graph $G'$ with a vertex of a smaller neighbourhood, which in turn increases $\beta_{Q}/\beta_{BL}$. To illustrate this idea, let us consider a complete graph $G$ consisting of four vertices, where all pairs of vertices are connected by an edge. Due to the symmetry, choosing any vertex as $r$ yields the ratio $\beta_{Q}/\beta_{BL}=\sqrt{10}/3\approx 1.05$. On the other hand, using local complementation, we can transform the fully connected graph $G$ into the star graph $G'=(\{1,2,3,4\},\{(1,2),(1,3),(1,4)\})$. Then, taking $r=2$, we get a ratio $\beta_{Q}/\beta_{BL}=(\sqrt{2}+2)/3\approx 1.14$.

This improvement through local complementation requires only a minor adjustment to our construction. Since $\ket{G}$ and $\ket{G'}$ are related by a local unitary transformation $U=U_1\otimes\ldots\otimes U_4$, adapting the inequality derived for $G'$ to the graph state $\ket{G}$ simply entails transforming the observables \eqref{eq: observable_choice_gmnl} by $U^{\dagger}$. For this example, the observables that give rise to the maximal violation of the inequality constructed for $G'$ with a graph state $\ket{G}$ are given by
\begin{equation}
A_{r;k}=\frac{X+(-1)^{k}Y}{\sqrt{2}},\qquad A_{i;j}=Y^{1-j}Z^{j},
\end{equation}
where $Y=\mathbb{i}XZ$ is the third Pauli matrix. 

While this strategy allows one to increase the ratio $\beta_{Q}/\beta_{BL}$ for the inequality originating from a graph $G$, finding the best graph $G'$ requires a search over the equivalence class of $G$ with respect to local complementation. The best known algorithm that finds the minimal degree up to local complementation runs in $O^{*}(1.938^{n})$ \cite{Cattan_o_2015}. Therefore, for large graphs, a complete search like this is inefficient. One approach to this issue is to make a "limited" search that does not guarantee the best possible ratio $\beta_{Q}/\beta_{BL}$, but yet has a chance of significantly improving it.

Alternatively, one can find an upper bound on $\beta_{Q}/\beta_{BL}$ without constructing $G'$ through an entanglement property called $k$-uniformity \cite{Gisin_1998, Higuchi_2000,K_obus_2019}. We say that a quantum state $\rho$ of $n$ qubits is $k$-uniform if all reduced states of $\rho$ after tracing out $n-k$ qubits equal the maximally mixed state. For graph states, this property allows us to lower bound the minimal size of any neighbourhood $N_{i}$ (and so also $|N_{r}|$) over all $G'$ by $|N_{i}|\geqslant k$. First, it is a well-known fact that for any graph state $\ket{G}$ we have $\ket{G}\!\bra{G}=1/2^{n}\prod_{i=1}^{n}(\mathbb{1}+g_{i})$. Then, since $\operatorname{Tr}_{[n]\setminus \dot{N}_{i}}(g_{i})\not\approx
\mathbb{1}$ for any $i\in[n]$, we have that $\operatorname{Tr}_{[n]\setminus \dot{N}_{i}}(\ket{G}\!\bra{G})\not\approx
\mathbb{1}$ for any $i\in[n]$. However, by definition, for a $k$-uniform state $\rho$, $\operatorname{Tr}_{S}(\rho)\approx \mathbb{1}$ for any $S\subset [n]$ s.t. $|S|=n-k$, and so $\operatorname{Tr}_{[n]\setminus \dot{N}_{i}}(\ket{G}\!\bra{G})\not\approx
\mathbb{1}$ implies $k\leqslant |N_{i}|$ for all $i\in G'$ and all $G'$. Substituting $N_{r}\rightarrow k$ in the expression for $\beta_{Q}$ gives us the following upper-bound
\begin{equation}
\frac{\beta_{Q}}{\beta_{BL}}\leqslant 1+\frac{\sqrt{1+k^{2}} -k}{n-1}.
\end{equation}
While computing $k$-uniformity is a computationally demanding task in and of itself, one can utilise this bound for some chosen classes of graph states, for which analytical computation of $k$-uniformity is possible.

\subsection{Self-testing}\label{sec:st}

Crucially, our construction retains the self-testing property of the underlying BASTA inequalities. To recall the task of self-testing, assume that a Bell experiment is performed with an $N$-partite quantum state $|\psi\rangle$ and quantum observables $A_{i,x_i}$, producing a behavior $\vec{p}$. Self-testing aims to determine the quantum state $|\psi\rangle$ and observables $A_{i,x_i}$ (up to certain equivalences) based on the observed nonlocal correlations $\vec{p}$. Since self-testing has already been established in Ref.  \cite{PhysRevLett.124.020402} for behaviours achieving the maximal quantum violation of the BASTA inequalities \eqref{eq: graph_state_ineq}, we employ a similar proof structure to show that the state and measurements maximizing the functional in Theorem \ref{thm: gmnl_standard} (achieving $\beta_{Q}$) can also be self-tested. 
\begin{thm}\label{thm: s-t}
Let $\ket{\psi_{\max}}\in \mathcal{H}=\bigotimes_{i=1}^{n}\mathcal{H}_{i}$ be an $n$-partite state and $A_{i;j}$ be finitely dimensional observables measured by party $i$ that give rise to the maximal quantum violation of the Bell functional \eqref{eq: gmnl_ineq_general} for a given connected graph $G$. Moreover, let us assume that a reduced state $\operatorname{Tr}_{[n]\setminus\{i\}}[\ket{\psi_{\max}}\!\bra{\psi_{\max}}]$ for all parties $i\in[n]$ is full-rank. Then, there exists a unitary operation $U=U_1\otimes\ldots\otimes U_n$ with $U_i$ being a local unitary such that $U_i:\mathcal{H}_i\to \mathbbm{C}^2\otimes\mathcal{H}_i'$ for some auxiliary Hilbert space $\mathcal{H}_i'$ and
\begin{equation}
\begin{aligned}
U\ket{\psi_{\max}}&=\ket{G}\ket{\mathrm{aux}},\\ U_i A_{i;j}\, U_i^{\dagger}&=X^{1-j}_{i}Z_{i}^{j}\otimes \mathbb{1}_{i}\quad \mathrm{for }\; i\in[n]\setminus\{r\},  \\
U_r A_{r;j}\,U_r^{\dagger}&=\frac{X_{r}+(-1)^{j}|N_{r}|Z_{r}}{\sqrt{1+|N_{r}|^{2}}}\otimes \mathbb{1}_{r},
\end{aligned}
\end{equation}
where $\ket{G}\in (\mathbb{C}^{2})^{\otimes n}$ is the graph state to be self-tested, $\ket{\mathrm{aux}}\in \bigotimes_{i=1}^{n}\mathcal{H}_{i}'$, and $\mathbb{1}_{i}$ is an identity operator acting on $\mathcal{H}_{i}'$.
\end{thm}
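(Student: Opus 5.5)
The plan is to extract operator relations from the sum-of-squares (SOS) decomposition behind the quantum bound $\beta_Q$, and then build the local isometry with the standard swap construction, as in Ref.~\cite{PhysRevLett.124.020402}.

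\textbf{Step 1: operator identities from the SOS.} I would write the SOS for $\beta_Q\mathbb{1}-\mathcal{B}$ in the form that generalises the tripartite one:
\begin{equation}
\beta_Q\mathbb{1}-\mathcal{B}=\sum_i \mu_i\left(\mathbb{1}-\nu_i\tilde g_i\right)^2,
\end{equation}
where the constants $\mu_i,\nu_i$ are fixed so that the cross terms cancel. Two facts make this possible. Since $A_{i;j}^2=\mathbb{1}$, we have $(A_{r;0}\pm A_{r;1})^2=2\mathbb{1}\pm\{A_{r;0},A_{r;1}\}$. The anticommutator contributions coming from $\tilde g_r$ and from the $\tilde g_i$ with $i\in N_r$ enter with opposite signs, and they cancel once the weights are chosen proportional to $|N_r|$. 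Saturation, $\langle\psi_{\max}|\mathcal{B}|\psi_{\max}\rangle=\beta_Q$, then forces every square to annihilate the state. This gives
\begin{equation}
\tilde g_i\ket{\psi_{\max}}=\nu_i^{-1}\ket{\psi_{\max}}\quad\text{for all } i.
\end{equation}
For $i\notin\dot N_r$ we have $\nu_i=1$, so $\tilde g_i$ acts as a genuine stabiliser on $\ket{\psi_{\max}}$.

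\textbf{Step 2: local anticommutation relations.} For every unrotated party $j$ I want $\{A_{j;0},A_{j;1}\}\ket{\psi_{\max}}=0$. For the rotated party I want the Pauli-like operators
\begin{equation}
\hat X_r=\frac{\sqrt{1+|N_r|^2}}{2}(A_{r;0}+A_{r;1}),\qquad \hat Z_r=\frac{\sqrt{1+|N_r|^2}}{2|N_r|}(A_{r;0}-A_{r;1})
\end{equation}
to anticommute, square to identity, and be unitary on the support of the state. This follows the BASTA argument. From $\tilde g_r\ket\psi\propto\ket\psi$ and $\tilde g_i\ket\psi\propto\ket\psi$ for $i\in N_r$, I combine the two eigen-relations by multiplying one by the operators acting outside party $r$. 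All other factors commute or square to $\mathbb{1}$, so this isolates a relation involving only $(A_{r;0}+A_{r;1})$ and $(A_{r;0}-A_{r;1})$ applied to $\ket\psi$.

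The norm conditions $\|(A_{r;0}\pm A_{r;1})\psi\|$ obtained from the SOS coefficients force the anticommutator $\{A_{r;0},A_{r;1}\}$ to be proportional to $\mathbb{1}$ on the state, with the specific angle $\cos\theta\propto(|N_r|^2-1)/(|N_r|^2+1)$. Full rank of the single-party reduced state lifts these state-dependent relations to operator identities on $\mathcal{H}_r$. For an unrotated vertex $j$, connectedness supplies a neighbour $k$. Using $g_j$ and $g_k$, which both act on $j$ (one with $X$, one with $Z$), the two commuting stabilizer relations combine to give anticommutation of $A_{j;0}$ and $A_{j;1}$ on the state. If $k=r$, the rotated operators $\hat X_r,\hat Z_r$ are used in place of $A_{k;\cdot}$. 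Full rank again lifts the relation to an operator identity.

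\textbf{Step 3: the isometry.} Jordan's lemma, or the standard argument from anticommuting involutions, gives $U_j:\mathcal{H}_j\to\mathbb{C}^2\otimes\mathcal{H}_j'$ with $A_{j;0}\mapsto X\otimes\mathbb{1}$ and $A_{j;1}\mapsto Z\otimes\mathbb{1}$. For the rotated party, the same construction is applied to $\hat X_r,\hat Z_r$, which yields the claimed form of $U_rA_{r;j}U_r^\dagger$. After this change of frame, $U\ket{\psi_{\max}}$ is stabilised by all $g_i\otimes\mathbb{1}$. Since the graph state is the unique joint $+1$ eigenvector of the $g_i$, it follows that $U\ket{\psi_{\max}}=\ket G\ket{\mathrm{aux}}$.

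\textbf{Main obstacle.} I expect Step 2 at the rotated party to be the hardest part. The coefficients $\nu_i$ differ between the core and the extension, so I must verify carefully that the exact SOS weights reproduce the angle $\tan\theta=|N_r|$. I must also check that the extension terms, which carry weight $2$, do not spoil the cancellation of cross terms.
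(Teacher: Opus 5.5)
Your overall architecture matches the paper's: read off the vanishing squares of the SOS \eqref{eq: sos_standard}, pass to $\hat X_r,\hat Z_r$, establish local anticommuting involutions, apply the Jordan-type lemma, and conclude by uniqueness of $\ket{G}$. However, Step~2, where the real content lies, is wrong as stated.

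\textbf{Unrotated vertices.} Take an unrotated $j$ with neighbour $k$, and set $(\hat X_k,\hat Z_k)=(A_{k;0},A_{k;1})$ for $k\neq r$. The relations $\tilde g_j\ket{\psi_{\max}}=\ket{\psi_{\max}}$ and $\tilde g_k\ket{\psi_{\max}}=\ket{\psi_{\max}}$ do not by themselves force $\{A_{j;0},A_{j;1}\}\ket{\psi_{\max}}=0$. Solving them for $A_{j;0}\ket{\psi_{\max}}$ and $A_{j;1}\ket{\psi_{\max}}$ only gives $\{A_{j;0},A_{j;1}\}\ket{\psi_{\max}}=\{\hat X_k,\hat Z_k\}W\ket{\psi_{\max}}$, where $W$ is a product of $Z$-type observables on parties other than $j,k$. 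So the edge relations merely transfer anticommutation from $k$ to $j$. For example, $\ket{\phi^+}$ with every observable equal to $Z$ satisfies both edge relations, has full-rank marginals, and everything commutes. The only seed of anticommutation is the rotated party. You must therefore induct outward from $r$: first the vertices in $N_r$, then their neighbours, and so on. This is where connectedness is genuinely used, not merely to supply some neighbour $k$.

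\textbf{The rotated party.} Here you have the difficulty inverted. Anticommutation is automatic, since $\{A_{r;0}+A_{r;1},A_{r;0}-A_{r;1}\}=2(A_{r;0}^2-A_{r;1}^2)=0$, with no SOS weights involved. What must be proved is $\hat X_r^2=\hat Z_r^2=\mathbb{1}$.
\begin{itemize}
\item Norm conditions cannot give this: $\|\hat X_r\ket{\psi_{\max}}\|=1$ only fixes $\bra{\psi_{\max}}\{A_{r;0},A_{r;1}\}\ket{\psi_{\max}}$.
\item Combining $\tilde g_r$ with $\tilde g_i$ cannot isolate party $r$ either, because they carry $A_{i;1}$ and $A_{i;0}$ respectively at site $i$. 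That combination is precisely the transfer step to $i$.
\end{itemize}
The paper instead applies a single relation twice. With $P=\prod_{i\in N_r}A_{i;1}$, an involution commuting with $\hat X_r$, the relation $\hat X_rP\ket{\psi_{\max}}=\ket{\psi_{\max}}$ gives $\hat X_r^2\ket{\psi_{\max}}=\ket{\psi_{\max}}$. The same argument with any $\tilde g_i$, $i\in N_r$, gives the identity for $\hat Z_r$. Full rank at $r$ then makes both operator identities. These identities must be in place before the induction starts. For instance, rewriting $\tilde g_i\ket{\psi_{\max}}=\ket{\psi_{\max}}$ as $A_{i;0}\ket{\psi_{\max}}=\hat Z_r(\cdots)\ket{\psi_{\max}}$ already uses $\hat Z_r^2=\mathbb{1}$.

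\textbf{Your anticipated obstacle.} It does not arise. The SOS is the one already established for $\beta_Q$, and the extension terms do not act on $r$, so they produce no anticommutators. The claimed form of $U_rA_{r;j}U_r^\dagger$ follows simply by inverting the definitions: $A_{r;j}=(\hat X_r+(-1)^j|N_r|\hat Z_r)/\sqrt{1+|N_r|^2}$.
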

A proof can be found in Appendix \ref{app: s-t}. While Theorem \ref{thm: s-t} provides a self-testing statement, it requires that the measurement data achieve the maximal quantum value, which is impossible to realise in an experiment. To address this issue, we formulate and analyse a robust self-testing statement for a few exemplary graph states. 

Given an observed violation $\beta$ of a Bell inequality, a robust self-testing statement can be made by bounding the following quantity 
\begin{align}\label{eq:defF}
    F_{\rm DI}(\beta)=\inf_{{\rho}\in \mathcal{S}(\beta)} \max_{\lbrace{\Lambda_i} \rbrace_{i=1}^n}F((\Lambda_{1}\otimes\ldots \otimes \Lambda_{n})({\rho}),\ketbra{G}{G}),
\end{align}
where $\ket{G}$ is the target graph state, $\Lambda_i$ with $i\in [n]$ are quantum channels also called extraction channels, 
$F(\rho,\sigma)=(\Tr|\sqrt{\rho}\sqrt{\sigma}|)^2$ is the fidelity of quantum states, and the infimum taken over the set $\mathcal{S}(\beta)$ of all states that are compatible with the observed violation $\beta$ for the given inequality \cite{PhysRevA.80.062327,Kaniewski2015analytic}.
Operationally, Eq. \eqref{eq:defF} is the lowest fidelity with the self-tested state one can obtain 
by applying local extraction channels $\Lambda_i$, which in the self-testing context represent the local unitary operations $U_i$ and tracing out the auxiliary degrees of freedom, to the measured state $\rho$.

As an analysis of the noise-robustness of the family of inequalities \eqref{eq: gmnl_ineq_general}, we numerically compute lower bounds to the corresponding fidelity for two specific cases. In particular, we consider two families of inequalities detecting GMNL for the GHZ state, which corresponds to the $n-$partite star graph $G=(\{1,2,\ldots, n\},\{(1,2),(1,3), \ldots (1,n)\})$. In analogy with Sec. \ref{sec:3partite}, we consider both the case where the rotated party is $r = 1$ and $r = 2$. This leads to the following two families of inequalities
\begin{equation}\label{eq: NGHZ_gmnl_r=1}
\begin{aligned}
\langle (A_{1;0} &+A_{1;1})A_{2;1} \ldots A_{n;1}\rangle \\
& + \sum_{i = 2}^n \langle (A_{1;0} - A_{1;1}) A_{i;0}\rangle \leq 2 (n-1), 
\end{aligned}
\end{equation}
for $r=1$ and
\begin{equation}\label{eq: NGHZ_gmnl_r=2}
\begin{aligned}
\langle A_{1;1}(A_{2;0}+A_{2;1})& \rangle + \langle A_{1;0}(A_{2;0}-A_{2;1})A_{3;1} \ldots A_{n;1}\rangle\\
&+ \sum_{i =2}^n 2 \langle (A_{1;1} A_{i;0}\rangle \leq 2(n-1),
\end{aligned}
\end{equation}
for $r=2$.
The maximal quantum violation of the two inequalities amounts to $\beta_Q^{r=1} = 2 \sqrt{(n-1)^2 + 1}$ and $\beta_Q^{r=2} = 2 \sqrt{2} + 2 (n-2)$ respectively. Notice that for $n = 3$ the expressions \eqref{eq: NGHZ_gmnl_r=1} and \eqref{eq: NGHZ_gmnl_r=2} recover exactly those of the tripartite examples of Sec. \ref{sec:3partite}, namely \eqref{eq: G3_gmnl_r=1} and \eqref{eq: G3_gmnl_r=2} respectively. In Fig. \ref{fig:fidelity} we report the numerical lower bounds to the extractability \eqref{eq:defF} for the two families of inequalities above, for a number of particles $n = 3,\ldots,6$. Our numerical estimation, explained in detail in Appendix ~\ref{app:fid}, is based on the self-testing through operator inequalities methods of \cite{Kaniewski2015analytic,Coopmans2019robust}. The fidelity is compared to the one that can be extracted from the original BASTA inequalities, for the same choice of star graph and rotated parties. To make the two results comparable, the fidelity is plotted against the visibility of $p$, the violation, namely the amount of white noise that has not been mixed with the GHZ state to reach a given value of violation $\beta = p \beta_Q$. As expected, the GMNL Bell inequalities are more sensitive to noise and lead to a lower extractable fidelity for the same value of visibility.

\begin{figure*}
  \centering
  \includegraphics[width=0.88\linewidth]{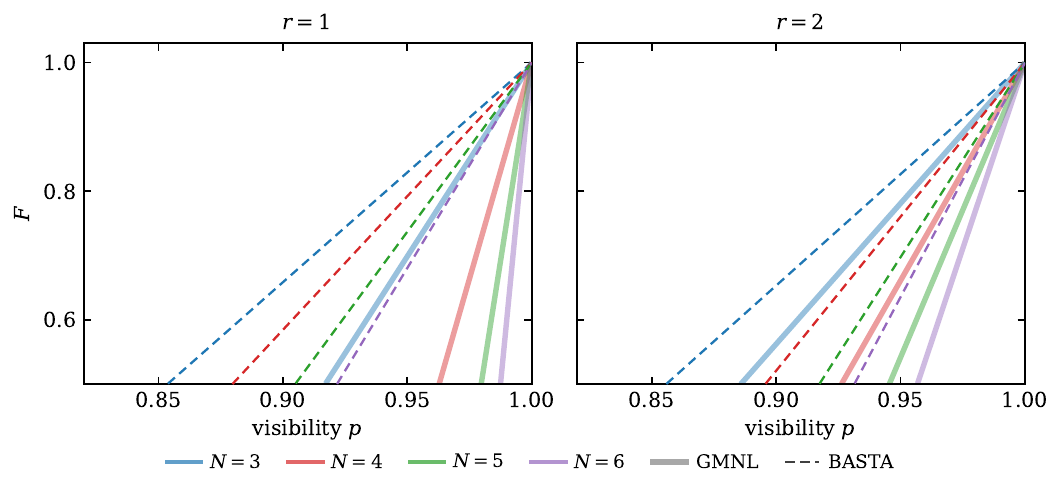}
  \caption{Numerical estimations of the fidelity with the target $N$-partite GHZ state as a function of the visibility $p = \beta/\beta_Q$, computed for both the inequalities where the rotated party is $r = 1$ and $r = 2$. The fidelity is the GMNL inequality is compared against the equivalent BASTA inequalities for the same graph and rotated party choice.}
  \label{fig:fidelity}
\end{figure*}

\subsection{Muli-rotation case}\label{sec:multi}
In Ref. \cite{PhysRevResearch.4.043215}, it was shown that the ratio $\beta_Q/\beta_L$ of the BASTA inequality can be improved by increasing the number of rotated parties. Consequently, a natural question arises regarding the potential for a similar improvement in the context of Bell inequalities stated in Theorem \ref{thm: gmnl_standard}.

Let us consider an example of a line graph $G=(\{1,2,3,4\},\{(1,2),(2,3),(3,4)\})$ and let us choose the rotated parties to be $r\in \{1,4\}$, giving us the following Bell functional
\begin{equation}
\begin{aligned}
\langle&(A_{1;0}+A_{1;1})A_{2;1}\rangle + \langle (A_{1;0}-A_{1;1})A_{2;0}A_{3;1}\rangle\\
&+ \langle A_{2;1}A_{3;0}(A_{4;0}-A_{4;1})\rangle + \langle A_{3;1}(A_{4;0}+A_{4;1})\rangle.
\end{aligned}
\end{equation}
It is direct to prove through a suitable sum-of-squares decomposition that the maximal quantum value of this expression amounts to $\beta_{Q}=4\sqrt{2}\approx 5.66$. However, a bi-local strategy employing a PR box for parties 3 and 4, combined with deterministic assignments $\langle A_{i;j}\rangle=1$ for $i=1,2$ and $j=0,1$, yields a value of 6. Consequently, this expression is incapable of detecting GMNL in quantum behaviours.

This is despite the fact that, for the same graph, choosing either $r=1$ or $r=4$ gives us a functional detecting GMNL. This shows another discrepancy between general Bell nonlocality and genuine multipartite nonlocality--an additional rotated party does not necessarily lead to an improvement in the ratio $\beta_Q/\beta_{BL}$. In fact, we suspect that in the case of GMNL detection, additional rotated parties always lead to a reduction of this ratio. In all examples we have examined, adding an extra rotated site left $\beta_{BL}$ unchanged, while decreasing $\beta_{Q}$.

\subsection{Clique-based improvement}\label{sec:clique}

The ratio of the inequality from Theorem \ref{thm: gmnl_standard} is limited by the performance of the expression $\sum_{i\in \dot{N}_{r}}\langle \tilde{g}_{i}\rangle$. If we could exchange it for a different functional that nonetheless allows for the application of Lemma \ref{lem: general_ineq_expected}, we could increase the ratio $\beta_{Q}/\beta_{BL}$. 

This can be achieved by increasing the number of correlators by applying Eq. \eqref{eq: substitution_graph_state_ineq} not only to generators, but also to some of their products. As it turns out, one very good strategy, that can be applied to any vertex $r$ for which its neighbourhood forms a clique, is to choose odd-numbered products of generators $g_{i}$ for which $i\in N_{r}$. Surprisingly, such a construction recovers a well-known Svetlichny inequality \cite{PhysRevD.35.3066}, which by itself allows for the detection of GMNL with a high ratio $\beta_{Q}/\beta_{BL}=\sqrt{2}$. Written in our notation, the Svetlichny functional \eqref{eq: svetlichny} can be rewritten, up to relabeling of measurements, as
\begin{equation}
\sum_{\substack{S\subseteq [m]\\|S| \textrm{- odd}}}\langle\tilde{f}_{S}\rangle,
\end{equation}
where we choose one of the vertices of a complete graph $G$ as the rotated party $r$, $\tilde{f}_{S}$ denotes the substitution \eqref{eq: observable_choice} acting on the reduced form of the operator, i.e., the resulting operator does not contain products of observables acting on the same party, and $f_{S}=\prod_{i\in S}g_{i}$ with $g_{i}$ defined in Eq. \eqref{eq: generators}. Notice that for $m=2$, this corresponds to the core in Eq. \eqref{eq: gmnl_ineq_general} for the case $|N_{r}|=1$, i.e., the case giving the largest ratio $\beta_{Q}/\beta_{BL}$. 

Before we move to the general construction of an inequality based on this core, let us consider the simplest example of such an inequality that differs from any inequality from Theorem \ref{thm: gmnl_standard}. To this end, let us consider a graph $G=(\{1,2,3,4\},\{(1,2),(1,3),(2,3),(3,4)\})$. This modified construction applies a Svetlichny inequality to the largest clique in a target graph - in this example, the largest clique consists of parties $1,2,3$. Then, we take $r=1$ and construct the following functional
\begin{equation}\label{eq:Svetlichny_example}
\begin{aligned}
\langle I\rangle=&\langle (A_{1;0}+A_{1;1})A_{2;1}A_{3;1}\rangle + \langle (A_{1;0}-A_{1;1})A_{2;0}A_{3;1}\rangle\\
&+
\langle (A_{1;0}-A_{1;1})A_{2;1}A_{3;0}A_{4;1}\rangle\\
&-\langle (A_{1;0}+A_{1;1})A_{2;0}A_{3;0}A_{4;1}\rangle +4\langle A_{3;1}A_{4;0}\rangle,
\end{aligned}
\end{equation}
in which the first three expected values correspond to the generators $g_{1},g_{2},g_{3}$, the fourth expected value corresponds to $f_{\{1,2,3\}}=g_{1}g_{2}g_{3}=-X_{1}X_{2}X_{3}Z_{4}$, and the last one originates from $g_{4}$. Comparing the first four terms to the Svetlichny inequality \cite{PhysRevD.35.3066}, we get that our core is equivalent to it up to the exchange of measurement settings in parties $2$ and $3$, and up to the additional observable $A_{4;1}$.

While we give a formal argument of it in the proof of the general construction, one can imagine that the presence of the Svetlichny-like core guarantees that we get a non-trivial bi-local bound $\beta_{BL}$ for any bipartition $Q|\overline{Q}$ that cuts through the core. The bi-local bound on the Svetlichny functional is $4$, and so, for all behaviours local across any such $Q|\overline{Q}$ \eqref{eq:local_bip} we can upper-bound Eq. \eqref{eq:Svetlichny_example} by $8$. Regarding the remaining $123\vert{}4$ bipartition, applying Lemma \ref{lem: general_ineq_expected} to the first two terms and the final term of Eq. \eqref{eq:Svetlichny_example} yields the upper bound: 
\begin{equation}
\begin{aligned}
\langle I\rangle\leqslant&\langle (A_{1;0}+A_{1;1})A_{2;1}A_{4;0}\rangle + \langle (A_{1;0}-A_{1;1})A_{2;0}A_{4;0}\rangle\\
&+
\langle (A_{1;0}-A_{1;1})A_{2;1}A_{3;0}A_{4;1}\rangle\\
&-\langle (A_{1;0}+A_{1;1})A_{2;0}A_{3;0}A_{4;1}\rangle +4,
\end{aligned}
\end{equation}
Since this takes the form of a Svetlichny-like inequality for parties 1, 2, and 4, the expression is bounded above by $8$ for any $123\vert{}4$-local behavior. Then, since a local, deterministic strategy assigning $+1$ to each observable achieves $8$, we get that $\beta_{BL}=8$.

As for the quantum bound, taking observables to be the ones maximizing the BASTA inequality, we get $\langle I\rangle = 4\sqrt{2}+4$. Moreover, the Svetlichny inequality is known to have a quantum upper bound of $4\sqrt{2}$, and so for our inequality $\beta_{Q}=4\sqrt{2}+4$.

Let us now properly introduce the general construction of GMNL detecting inequalities based on the Svetlichny core.
\begin{thm}\label{thm: clique}
Let $G$ be a graph such that there exists a vertex $r$ for which the induced subgraph of its neighbourhood $N_{r}$ forms a complete graph. Then, we construct the following functional associated with $G$
\begin{equation}\label{eq: gmnl_ineq_clique}
\sum_{\substack{S\subseteq \dot{N}_{r}\\ |S|\textrm{ - odd}}} \langle \tilde{f}_{S}\rangle +2^{|\dot{N}_{r}|-1} \sum_{k\in [n]\setminus \dot{N}_{r}}\langle\tilde{g}_{k}\rangle,
\end{equation}
where $\dot{N}_{r}=N_{r}\cup\{r\}$, and $f_{S}=\prod_{i\in S}g_{i}$ which under Eq. \eqref{eq: substitution_graph_state_ineq} yields
\begin{equation}
\tilde{f}_{S}= (-1)^{\frac{|S|-1}{2}}(A_{r;0}+\xi A_{r;1})\prod_{i\in S\setminus\{r\}}A_{i;0}\prod_{j\in \underset{k\in S}{\Delta}N_{k}\setminus\{r\}}A_{j;1}
\end{equation}
with $\xi=1$ if $r\in S$ and $\xi=-1$ for $r\notin S$. For this functional we have $\beta_{BL}=2^{|\dot{N}_{r}|-1}(n-|\dot{N}_{r}|+1)$ and $\beta_{Q}=2^{|\dot{N}_{r}|-1}(n-|\dot{N}_{r}|+\sqrt{2})$.
\end{thm}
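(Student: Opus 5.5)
The plan is to treat the functional \eqref{eq: gmnl_ineq_clique} as a Svetlichny core on the clique $\dot{N}_{r}$ (of size $m=|\dot{N}_{r}|$) plus an extension. We then bound each part separately for quantum behaviours and for each bipartition, as in Theorem \ref{thm: gmnl_standard} and the four-party example above.

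For the quantum bound, we first check that, after grouping the $A_{j;1}$ factors coming from $\Delta_{k\in S}N_{k}\setminus\{r\}$ with vertices outside the clique, the core $\sum_{S\text{ odd}}\langle\tilde f_{S}\rangle$ is, up to relabelling of settings on clique parties, the $m$-partite Svetlichny operator with some parties' observables multiplied by products of $\pm1$-valued observables of other parties. Such a product is again a dichotomic observable on a larger system, so the Svetlichny quantum bound $2^{m-1}\sqrt2$ still applies. Each extension term satisfies $\langle\tilde g_{k}\rangle\le 1$, which gives $\beta_{Q}\le 2^{m-1}(n-m+\sqrt2)$. For achievability, we measure the observables \eqref{eq: observable_choice} on $\ket{G}$. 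Then every extension term equals $1$, since $\tilde g_{k}$ with $k\notin\dot N_r$ involves $r$ only through $Z_r\mapsto A_{r;0}-A_{r;1}=\sqrt2 Z$, so a factor $\sqrt2$ needs care. I expect instead to use $\langle g_{k}\rangle=1$ with the correct normalisation, or to note that $r$ is not adjacent to $k$, so $\tilde g_k$ does not involve $r$ at all. Each core term gives $\langle f_{S}\rangle\cdot\sqrt2$ with the sign $(-1)^{(|S|-1)/2}$ compensating the reduction of Pauli products. This sums to $2^{m-1}\sqrt2$.

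For the bi-local bound, fix a bipartition $Q|\overline Q$ and a behaviour local across it. \emph{Case 1:} $Q$ cuts $\dot N_r$. The core is a Svetlichny expression, with parties on both sides of the cut and with extra dichotomic factors that can be absorbed into the observables of whichever side they lie on. Its bi-local bound is therefore $2^{m-1}$, and the extension contributes at most $2^{m-1}(n-m)$, which gives $2^{m-1}(n-m+1)$. \emph{Case 2:} $\dot N_r\subseteq Q$, say, and $\overline Q\neq\emptyset$. Since $G$ is connected, choose a path from $\dot N_r$ into $\overline Q$. We then apply Lemma \ref{lem: general_ineq_expected} repeatedly with $\alpha_0=\alpha_1=1$ (or $-1$ as signs demand), pairing each core term that contains an $A_{j;1}$ on some $j\notin\dot N_r$ with an extension term $\tilde g_{k}$ sharing the collective observable $O_0$. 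Each application trades a core correlator together with $\tilde g_k$ for a new correlator plus $1$, and the weight $2^{m-1}$ on $\tilde g_k$ is exactly the number of core terms, so it can be spent once per core term. This moves the core onto a set of parties containing a vertex of $\overline Q$. The result is a Svetlichny-like expression across the cut, bounded by $2^{m-1}$, plus at most $2^{m-1}(n-m)$ from constants and untouched extension terms. This matches the four-party example. Tightness follows from the deterministic assignment of all outcomes $+1$, which should give exactly $2^{m-1}(n-m+1)$: the odd-subset signs sum to $2^{m-1}$ for the core, and each extension term gives $1$.

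The main obstacle is Case 2 for general graphs. There we must organise the Lemma \ref{lem: general_ineq_expected} replacements so that the three sets $S_0,S_1,S_2$ are disjoint, the signs are consistent across all $2^{m-1}$ core terms, and the coefficient budget of the extension terms suffices. This is especially delicate when the connecting vertex $k\in\overline Q$ is adjacent to only part of the clique, and when $\overline Q$ is reached only through vertices still in $Q$. In that case I would first apply the lemma along the path inside $Q$, using extension terms of vertices in $Q\setminus\dot N_r$, accepting that each such step costs the same $+2^{m-1}$ bookkeeping, and then check that the total is still $2^{m-1}(n-m+1)$.
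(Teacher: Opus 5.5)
Your Case~2 is more than delicate. As stated, the bi-local bound is false once a vertex outside $\dot N_r$ has two neighbours in $N_r$, so no arrangement of Lemma~\ref{lem: general_ineq_expected} can close it. Take vertices $r,u,v,w$ with edges $ru,rv,uv,uw,vw$. Then $N_r=\{u,v\}$ is a clique, $n=4$, $m=3$, and the functional reads
\begin{align*}
&\langle(A_{r;0}+A_{r;1})A_{u;1}A_{v;1}\rangle+\langle(A_{r;0}-A_{r;1})A_{u;0}A_{v;1}A_{w;1}\rangle+\langle(A_{r;0}-A_{r;1})A_{u;1}A_{v;0}A_{w;1}\rangle\\
&-\langle(A_{r;0}+A_{r;1})A_{u;0}A_{v;0}\rangle+4\langle A_{u;1}A_{v;1}A_{w;0}\rangle,
\end{align*}
with claimed $\beta_{BL}=8$ and $\beta_Q=4+4\sqrt{2}$. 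Let $w$ output $+1$ for both settings. Let $r,u,v$ share the following box. The bits $o_r,o_u$ are independent and uniform. Set $o_v=o_u$ if $x_u=x_v=1$, and $o_v=o_r\oplus o_u\oplus\phi$ otherwise, with $\phi=x_r$ for $(x_u,x_v)\in\{(0,1),(1,0)\}$ and $\phi=1$ for $(x_u,x_v)=(0,0)$. Discarding any one of $r,u,v$ leaves a distribution independent of that party's input: it is uniform, except that $o_u=o_v$ when $x_u=x_v=1$, whatever $x_r$. So the box is non-signalling and the behaviour is local across $\{r,u,v\}|\{w\}$. Term by term it gives $0+2+2+2+4=10$, above $\beta_{BL}$ and even above $\beta_Q$.

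In your language: $\tilde f_{\{v\}}$ and $\tilde g_w$ clash on both $v$ and $w$, and $\tilde f_{\{u\}}$ and $\tilde f_{\{r,u,v\}}$ also clash with $\tilde g_w$. The extension weight can therefore only be traded against the two correlators of $\tilde f_{\{r\}}$.

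The paper's Case~2 breaks at exactly this point. It takes a shortest path $r=t_0,t_1,\dots,t_p=l$. It merges the $\tilde g$'s of even and of odd path vertices into $\tilde c_e$ and $\tilde c_o$. It then merges $2^{m-1}\tilde c_e$ into the core terms with $t_1\notin S$ and $2^{m-1}\tilde c_o$ into those with $t_1\in S$, and finishes with Corollary~\ref{cor:bound_2} on pairs. Its assertion that $\tilde f_S$ with $t_1\notin S$ shares only $A_{i;1}$ factors with $\tilde c_e$ tacitly requires $N_{t_2}\cap N_r=\{t_1\}$. This fails for $t_1=u$, $t_2=w$, $S=\{v\}$. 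So the statement needs an extra hypothesis of that type before any Case~2 argument can succeed.

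Even under such a hypothesis, your sketch misses the paper's two devices. First, consecutive path vertices cannot be merged: since $t_i\in N_{t_{i+1}}$, the two correlators contain $A_{t_i;0}$ and $A_{t_i;1}$. This is why the paper splits the path into even and odd vertices. Second, each $\tilde f_S$ carries two correlators through $A_{r;0}\pm A_{r;1}$. One extension weight $2^{m-1}$ therefore pays for only half of the core, not all of it as you count. This is why the paper splits the core according to whether $t_1\in S$.

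Apart from this, your quantum part is correct, and the remaining points need repair rather than new ideas. For $\beta_Q$ you take a different route from the paper's one-line sum of squares $2^{m-1}(n-m+\sqrt{2})\mathbb{1}-I=\sum_S\frac{1}{\sqrt{2}}(\mathbb{1}-\tilde f_S/\sqrt{2})^2+2^{m-2}\sum_k(\mathbb{1}-\tilde g_k)^2$. Absorbing each outside $A_{j;1}$ into the setting-$0$ observables of its clique neighbours does reduce the core to Svetlichny's operator. However, the enlarged observables overlap in support, so you should add that the absorbed $A_{j;1}$ commute with every core operator. Restricting to their joint eigenspaces, where they are just signs, then gives genuine Svetlichny operators; the SOS avoids this detour.

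In Case~1 the same absorption is not a relabelling across the cut. $A_{j;1}$ enters $\tilde f_S$ with the parity of $|S\cap N_j|$, which may be set by clique parties on the other side. An argument by relabelling alone would also cover models that signal inside a group, where the bound is false. For the path $r$--$l$--$j$ with $Q=\{r,j\}$, take all outcomes $+1$ except $j$'s setting-$1$ outcome, set to $(-1)^{x_r}$; this gives $4$ on the core instead of $2$. You have to use no-signalling inside each group. One way is Corollary~\ref{cor:bound_2} with the paper's pairings $S\cup\{r\}\leftrightarrow S\cup\{l\}$ and $S\cup\{r,l\}\leftrightarrow S$. Another is to condition on the outcomes of the outside parties, which only use setting~$1$ in the core.

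Finally, the all-$+1$ assignment does not reach $2^{m-1}$ on the core. The terms with $r\notin S$ vanish, and the others sum to $2\,\mathrm{Re}\,(1+\mathrm{i})^{|N_r|}$, which is $0$ for $|N_r|=2$. The paper's four-party example then gives $4$, not $8$; the paper makes the same claim. Instead, use a Svetlichny-optimal deterministic strategy on the clique and outcome $+1$ for setting~$1$ outside it. Then choose each $A_{k;0}$ so that $\tilde g_k=1$.
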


\begin{figure}[!t]
    \centering
    \includegraphics[width=0.25\linewidth]{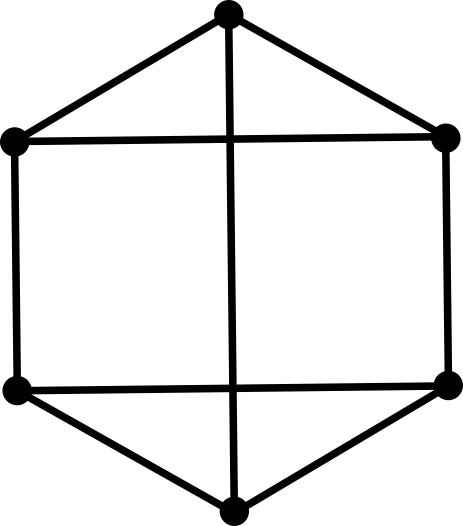}
    \caption{The only connected graph for up to $8$ vertices for which Theorem \ref{thm: clique} is inapplicable.}\label{fig: 6-graph}
\end{figure}

The proof can be found in Appendix \ref{app: clique}. It is not difficult to see that the ratio
\begin{equation}
\frac{\beta_{Q}}{\beta_{BL}}=\frac{n-|\dot{N}_{r}|+\sqrt{2}}{n-|\dot{N}_{r}|+1},
\end{equation}
is always larger than or equal to the ratio for the functional in Theorem \ref{thm: gmnl_standard}. However, this comes with two caveats. First, the number of correlators in the functional \eqref{eq: gmnl_ineq_clique} grows exponentially with the size of $|\dot{N}_{r}|$. Second, the functional \eqref{eq: gmnl_ineq_clique} is not directly applicable to all connected graphs. Luckily, as was the case for optimising ratio for \eqref{eq: gmnl_ineq_general}, here we also can make use of local complementations to identify different functionals for the same graph state $\ket{G}$.

Using the classification of local complementation equivalence classes of graphs up to $8$ vertices from Ref. \cite{PhysRevA.69.062311,Cabello_2009}, we show in Appendix \ref{app: lc} that out of 146 classes, there are 141 classes for which Theorem \ref{thm: clique} not only can be utilized, but it also offers an advantage in the $\beta_{Q}/\beta_{BL}$ ratio over Theorem \ref{thm: gmnl_standard}. Moreover, out of the remaining $5$ classes, Theorem \ref{thm: clique} can still be applied to $4$ of them, but in these cases Theorem \ref{thm: clique} and Theorem \ref{thm: gmnl_standard} produce the same functional. 

Therefore, there is only one class of graph states of up to $8$ qubits, for which the construction in Theorem \ref{thm: clique} is inapplicable - a $6$-vertex graph shown in Fig. \ref{fig: 6-graph}. Interestingly, out of the $4$ graph classes for which the two functionals are equivalent, one is the trivial $2$-vertex graph, whereas the other $3$ are directly related to the graph in Fig. \ref{fig: 6-graph}. 

Lastly, let us note that the inequality from Theorem \ref{thm: clique} can also be used for self-testing. However, in this case the self-testing strategy of the BASTA inequality applies directly. This is because all correlators of the BASTA inequality appear as well in Theorem \ref{thm: clique}, and because the optimal strategy for quantum violation is exactly the same between the two inequalities. For this reason we do not provide the self-testing statement here, instead directing the reader to Ref. \cite{PhysRevLett.124.020402}.

\section{Conclusions}\label{sec:con}

We introduce a family of Bell inequalities designed to detect genuine multipartite nonlocality (GMNL) in quantum correlations. These inequalities are specifically tailored to multiqubit graph states, as they attain their maximal quantum violation for this class of states. Moreover, this maximal violation allows for the self-testing of graph states, providing device-independent certification of the underlying quantum state. 

There are three notable benefits of our construction that make it particularly appealing from an experimental point of view. First, it requires only two measurement settings per party, which is in fact the minimal number necessary to observe quantum nonlocality. Second, for any given graph state, the construction yields a wide family of inequalities from which one can choose. This flexibility allows for an adjustable trade-off between the number of correlators that must be estimated and the magnitude of the ratio between the quantum value and the bi-local bound of the inequality. Increasing this ratio directly enhances the robustness of the inequality against white noise acting on the state. Third, all of the proposed inequalities detect genuine multipartite nonlocality, one of the strongest variants of multipartite nonlocality.

This work leaves many avenues for future investigation. Most importantly, the construction of inequalities that we showed for two specific cases could be extended to other scenarios. For example, using one of the $d$-outcome, bipartite inequalities \cite{PhysRevLett.88.040404, PhysRevLett.119.040402, Kaniewski2019maximalnonlocality, meyer2025quditclauserhorneshimonyholtinequalitynonlocality} could potentially allow one to tailor such inequalities for broader classes of states, like qudit graph states. Additionally, one could also apply the same technique for inequality detecting an even stronger version of multipartite nonlocality called full nonlocality \cite{Almeida_2010}. As inequalities tailored for this task are famously difficult to formulate, a construction such as the one in this work would be highly beneficial.

Another avenue for further exploration are the inequalities with multiple rotated parties. Even though we have provided an example for which such a construction fails in the sense that an additional rotation does not increase the ratio $\beta_Q/\beta_{C}$, we do not have a proof that introducing more rotated sites does not offer benefits for some other graph structures. This aspect certainly requires further investigation. 

Lastly, the applicability of Theorem \ref{thm: clique} remains unclear. While we have shown that this construction fails only for one out of the $146$ graph states of up to $8$ vertices, we are unsure if other such examples do not exist for larger graphs.

\begin{acknowledgments}
We thank Jordi Tura and Edwin Lobo for useful discussions. The Center for Quantum Enabled Computing project is carried out
within the International Research Agendas programme
of the Foundation for Polish Science co-financed by the
European Union under the European Funds for Smart
Economy 2021-2027 (FENG).
F.B. acknowledges financial support from the European Union’s Horizon Europe research and innovation programme under the Marie Skłodowska-Curie Action for the project No. 101148556 (ENCHANT).
The views and opinions expressed here are solely those of the authors and do not necessarily reflect those of the funding institutions.
The funding institutions cannot be held responsible for them.
\end{acknowledgments}
\section{AI usage}
Several LLMs have been used for error detection in the manuscript at the last stage of the writing process. All results have been derived and verified by the authors, and the manuscript has been fully written by the authors.

\bibliographystyle{apsrev4-1}
\bibliography{bib}

\onecolumngrid
\appendix

\section{Proof of Theorem \ref{thm: gmnl_standard}}\label{app:ineq_basta}

Here we prove Theorem \ref{thm: gmnl_standard}. To this end, we start by proving Lemma \ref{lem: general_ineq_expected}.
\setcounter{lem}{0}
\begin{lem}
Let $S_{0}|S_{1}|S_{2}$ be a partition of the set of parties $[n]$ for $n\geqslant 3$ into three disjoint sets, and $A_{i;j}$ be a dichotomic observable with measurement outcomes $\pm 1$ acting on party $i$ with measurement setting $j$. Moreover, let us define collective observables $O_{0;\vec{x}}\eqqcolon \prod_{i\in S_{0}}A_{i;x_{i}},O_{1;\vec{y}}\coloneqq  \prod_{i\in S_{1}}A_{i;y_{i}},O_{2;\vec{z}}\coloneqq \prod_{i\in S_{2}}A_{i;z_{i}}$ acting on spatially separated sets of parties $S_{0},S_{1},S_{2}$. Then for any choice of free parameters $\alpha_{0},\alpha_{1}\in [-1,1]$ and any non-signalling behaviour we have
\begin{equation}
\alpha_{0}\langle O_{0;\vec{x}}O_{1;\vec{y}}\rangle + \alpha_{1}\langle  O_{0;\vec{x}}O_{2;\vec{z}}\rangle \leqslant \alpha_{0}\alpha_{1}\langle O_{1;\vec{y}}O_{2;\vec{z}}\rangle + 1.
\end{equation}
\end{lem}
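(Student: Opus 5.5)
The plan is to reduce the statement to a tripartite problem with dichotomic variables and then to a pointwise inequality on joint outcomes. Fix the settings $\vec{x},\vec{y},\vec{z}$. The three collective observables are products of $\pm1$ outcomes, so each takes values $o_0,o_1,o_2\in\{\pm1\}$. Under the settings $(\vec{x},\vec{y},\vec{z})$, the behaviour induces a joint distribution $P(o_0,o_1,o_2)$ of these three products. Non-signalling guarantees that the marginal of $(\mathbf{a}_{S_0},\mathbf{a}_{S_1})$ does not depend on the settings of $S_2$, and similarly for the other pairs. Hence $\langle O_{0;\vec{x}}O_{1;\vec{y}}\rangle$, $\langle O_{0;\vec{x}}O_{2;\vec{z}}\rangle$ and $\langle O_{1;\vec{y}}O_{2;\vec{z}}\rangle$ are all well defined. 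Moreover, each of them can be computed from the single distribution $P$ obtained with all three settings fixed simultaneously; for instance $\langle O_{0}O_{1}\rangle=\sum P(o_0,o_1,o_2)\,o_0o_1$. This is the only place where non-signalling enters, and it is the step to state carefully: the three correlators come from one common joint distribution, not from three unrelated experiments.

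Next I would write the desired inequality as the expectation, under $P$, of
\begin{equation}
1+\alpha_0\alpha_1 o_1o_2-\alpha_0 o_0o_1-\alpha_1 o_0o_2 ,
\end{equation}
and show that this quantity is nonnegative for every $o_0,o_1,o_2\in\{\pm1\}$. Put $u=o_0o_1$ and $v=o_0o_2$. Then $uv=o_1o_2$ because $o_0^2=1$, and the expression factorises as
\begin{equation}
1-\alpha_0u-\alpha_1v+\alpha_0\alpha_1uv=(1-\alpha_0u)(1-\alpha_1v).
\end{equation}
Since $|\alpha_0u|\le1$ and $|\alpha_1v|\le1$, both factors are nonnegative, so the product is nonnegative. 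Averaging over $P$ then gives the lemma.

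The factorisation check is routine, so the main obstacle is the rigor of the first step: showing that the correlators of the three pairs are consistent marginals of one joint distribution. For a general non-signalling behaviour this holds because, with all settings fixed, $p(\mathbf{a}|\mathbf{x})$ is a genuine probability distribution over all outcomes, and non-signalling makes each pairwise correlator independent of the settings outside the pair. In the application to bi-local models I would also note that the lemma only needs to hold for each component model in the mixture; this is automatic, since the pointwise argument works for any probability distribution. The condition $n\ge3$ just ensures that three nonempty sets can exist. Empty sets would make the corresponding collective observable the identity, and the argument would still go through.
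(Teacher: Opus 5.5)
Your proof is correct and is essentially the paper's argument: the paper also fixes all settings, uses non-signalling to read the three correlators off the single joint distribution $p(o_0,o_1,o_2|\vec{x},\vec{y},\vec{z})$, and bounds the expression outcome by outcome. The paper phrases this through agreement probabilities $p(X=Y)=(1+\langle XY\rangle)/2$ and a maximum over four outcome classes, and the four values in that maximum are exactly $1-(1-\alpha_0u)(1-\alpha_1v)$ for $(u,v)\in\{\pm1\}^2$, so your direct factorisation $(1-\alpha_0u)(1-\alpha_1v)\geq 0$ is a more compact form of the same computation.
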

\begin{proof}
Let us begin with the following expression
\begin{equation}
\alpha_{0}p(O_{0;\vec{x}}=O_{1;\vec{y}})+\alpha_{1}p(O_{0;\vec{x}}=O_{2;\vec{z}})-\alpha_{0}\alpha_{1}p(O_{1;\vec{y}}=O_{2;\vec{z}}),
\end{equation}
where $p(X=Y)$ is the probability of getting the same outcome of measurement of $X$ as of $Y$. Written explicitly in the form $p(o_{0},o_{1},o_{2}|\vec{x},\vec{y},\vec{z})$, where $o_{0},o_{1},o_{2}$ are the outcomes of $O_{0;\vec{x}},O_{1;\vec{y}},O_{2;\vec{z}}$, the above expression equals
\begin{equation}
\begin{aligned}
\alpha_{0}p(O_{0;\vec{x}}=O_{1;\vec{y}})+\alpha_{1}p(O_{0;\vec{x}}=O_{2;\vec{z}})-\alpha_{0}\alpha_{1}p(O_{1;\vec{y}}=O_{2;\vec{z}})=&(\alpha_{0}+\alpha_{1}-\alpha_{0}\alpha_{1})(p(1,1,1|\vec{x},\vec{y},\vec{z})\\
&+p(-1,-1,-1|\vec{x},\vec{y},\vec{z})) \\
&+\alpha_{0}(p(1,1,-1|\vec{x},\vec{y},\vec{z})+p(-1,-1,1|\vec{x},\vec{y},\vec{z}))\\
&+\alpha_{1}(p(1,-1,1|\vec{x},\vec{y},\vec{z})+p(-1,1,-1|\vec{x},\vec{y},\vec{z}))\\
&-\alpha_{0}\alpha_{1}(p(-1,1,1|\vec{x},\vec{y},\vec{z})+p(1,-1,-1|\vec{x},\vec{y},\vec{z})).
\end{aligned}
\end{equation}
Since we have $\sum_{o_{0},o_{1},o_{2}}p(o_{0},o_{1},o_{2}|\vec{x},\vec{y},\vec{z})=1$, we can upper bound the right-hand side by $\max(\alpha_{0}+\alpha_{1}-\alpha_{0}\alpha_{1},\alpha_{0},\alpha_{1},-\alpha_{0}\alpha_{1})$. Then using non-signalling assumption we can make a substitution $p(X=Y)=(1+\langle XY\rangle)/2$ to the left-hand side we get
\begin{equation}
\alpha_{0}\langle O_{0;\vec{x}}O_{1;\vec{y}}\rangle +\alpha_{1}\langle O_{0;\vec{x}}O_{2;\vec{z}}\rangle -\alpha_{0}\alpha_{1}\langle O_{1;\vec{y}}O_{2;\vec{z}}\rangle \leqslant \max(\alpha_{0}+\alpha_{1}-\alpha_{0}\alpha_{1}, \alpha_{0}\alpha_{1}+\alpha_{0}-\alpha_{1},\alpha_{0}\alpha_{1}+\alpha_{1}-\alpha_{0},-\alpha_{0}-\alpha_{1}-\alpha_{0}\alpha_{1}).
\end{equation}
Lastly, note that we can rewrite the individual expressions in the $\max(\cdot)$ as
\begin{equation}
\begin{aligned}
\alpha_{i}\alpha_{j}+\alpha_{i}-\alpha_{j}&=-(1-\alpha_{i})(1+\alpha_{j})+1, \quad \textrm{for }i,j\in \{0,1\},\\
\alpha_{0}+\alpha_{1}-\alpha_{0}\alpha_{1}&= -(1-\alpha_{0})(1-\alpha_{1})+1,\\
-\alpha_{0}-\alpha_{1}-\alpha_{0}\alpha_{1}&= -(1+\alpha_{0})(1+\alpha_{1})+1,
\end{aligned}
\end{equation}
and since $(1+ \alpha_{0})(1+ \alpha_{1})$ is non-negative for $\alpha_{0},\alpha_{1}\in[-1,1]$, we have
\begin{equation}
\alpha_{0}\langle O_{0;\vec{x}}O_{1;\vec{y}}\rangle +\alpha_{1}\langle O_{0;\vec{x}}O_{2;\vec{z}}\rangle -\alpha_{0}\alpha_{1}\langle O_{1;\vec{y}}O_{2;\vec{z}}\rangle \leqslant 1.
\end{equation}
\end{proof}

Let us note here that the restriction to non-signalling sets is not necessary if one conditions the expected value by the choice of measurements of the party that act "trivially", e.g., $\langle O_{0;\vec{x}}O_{1;\vec{y}}\rangle|_{\vec{z}}$. This, however, will not be needed for our applications, and so we formulated it in terms of non-signalling correlations as it simplifies the notation. 

To streamline the usage of Lemma \ref{lem: general_ineq_expected} in the proof of Theorem \ref{thm: gmnl_standard} we formulate and prove the following corollary.
\begin{cor}\label{cor:bound_2}
Let $D_{0;0},D_{0;1},D_{1;0},D_{1;1},D_{2;0},D_{2;1}$ be disjoint subsets of $[n]$, and let $r$ be a part such that $r\in [n]$ but $r\notin D_{i;j}$ for any $i,j$. Then we have
\begin{equation}
\begin{aligned}
\alpha_{0}&\left\langle(A_{r;0}+A_{r;1})\prod_{i_{0}\in D_{0;0}}A_{i_{0};0}\prod_{i_{1}\in D_{0;1}}A_{i_{1};1}\prod_{i_{2}\in D_{1;0}}A_{i_{2};0}\prod_{i_{3}\in D_{1;1}}A_{i_{3};1}\right\rangle\\
&+ \alpha_{1}\left\langle (A_{r;0}-A_{r;1})\prod_{i_{0}\in D_{0;0}}A_{i_{0};0}\prod_{i_{1}\in D_{0;1}}A_{i_{1};1}\prod_{i_{2}\in D_{2;0}}A_{i_{2};0}\prod_{i_{3}\in D_{2;1}}A_{i_{3};1}\right\rangle \leqslant 2,
\end{aligned}
\end{equation}
for any $\alpha_{0},\alpha_{1}\in[-1,1]$.
\end{cor}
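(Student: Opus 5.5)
Write $P_0=\prod_{i\in D_{0;0}}A_{i;0}\prod_{i\in D_{0;1}}A_{i;1}$ for the common block. Write $P_1$ and $P_2$ for the analogous products over $D_{1;0}\cup D_{1;1}$ and $D_{2;0}\cup D_{2;1}$. Each is a product of dichotomic observables on pairwise disjoint sets of parties, hence itself a $\pm1$-valued collective observable. The plan is to expand $(A_{r;0}\pm A_{r;1})$, which splits the left-hand side into four correlators:
\begin{equation}
\alpha_0\langle A_{r;0}P_0P_1\rangle+\alpha_1\langle A_{r;0}P_0P_2\rangle+\alpha_0\langle A_{r;1}P_0P_1\rangle-\alpha_1\langle A_{r;1}P_0P_2\rangle .
\end{equation}
I will then regroup them into two pairs, one for each setting of party $r$.

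For the first pair, I apply Lemma \ref{lem: general_ineq_expected} with $S_0=\{r\}\cup D_{0;0}\cup D_{0;1}$, collective observable $O_0=A_{r;0}P_0$, and $O_1=P_1$, $O_2=P_2$, with coefficients $\alpha_0,\alpha_1$. This gives
\begin{equation}
\alpha_0\langle A_{r;0}P_0P_1\rangle+\alpha_1\langle A_{r;0}P_0P_2\rangle\le \alpha_0\alpha_1\langle P_1P_2\rangle+1 .
\end{equation}
For the second pair, I use $O_0=A_{r;1}P_0$ with coefficients $\alpha_0$ and $-\alpha_1$, both still in $[-1,1]$. This gives
\begin{equation}
\alpha_0\langle A_{r;1}P_0P_1\rangle-\alpha_1\langle A_{r;1}P_0P_2\rangle\le -\alpha_0\alpha_1\langle P_1P_2\rangle+1 .
\end{equation}
Adding the two bounds, the $\langle P_1P_2\rangle$ terms cancel, and the total is at most $2$, which is the claim.

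One point needs care. The Lemma is stated for a partition of all of $[n]$ into three sets. If the $D$'s together with $r$ do not cover $[n]$, I assign the remaining parties to $S_0$ and let them act trivially (outcome identically $1$); under non-signalling, marginalising them out is harmless. The Lemma's proof only uses the outcome triple $(o_0,o_1,o_2)$ and the non-signalling identity $p(X=Y)=(1+\langle XY\rangle)/2$, so it also applies when some $D$'s are empty, in which case the corresponding collective observable is trivial.

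The only real obstacle is noticing that $\langle P_1P_2\rangle$ is the same correlator in both applications. This holds because neither $P_1$ nor $P_2$ involves party $r$, so the setting of $r$ does not enter that term. Because the term appears with opposite signs, it cancels exactly, and no bound on $\langle P_1P_2\rangle$ is needed.
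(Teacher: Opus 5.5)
Your proposal is correct and follows the paper's own proof: two applications of Lemma~\ref{lem: general_ineq_expected}, with $O_0=A_{r;0}P_0$ and then $O_0=A_{r;1}P_0$, whose $\pm\alpha_0\alpha_1\langle P_1P_2\rangle$ terms cancel when the two bounds are added. The only difference is cosmetic: in the second application you flip the sign of the coefficient $\alpha_1$, whereas the paper flips the observable to $O_2=-P_2$, which is equivalent; your remarks on parties not covered by the $D$'s and on empty $D$'s are extra care the paper omits.
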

\begin{proof}
The proof follows from a double application of Lemma \ref{lem: general_ineq_expected}. First, let us take
\begin{equation}
O_{0,\vec{x}}=A_{r;0}\prod_{i_{0}\in D_{0;0}}A_{i_{0};0}\prod_{i_{1}\in D_{0;1}}A_{i_{1};1}, \quad O_{1;\vec{y}}=\prod_{i_{2}\in D_{1;0}}A_{i_{2};0}\prod_{i_{3}\in D_{1;1}}A_{i_{3};1}, \quad O_{2;\vec{z}}=\prod_{i_{2}\in D_{2;0}}A_{i_{2};0}\prod_{i_{3}\in D_{2;1}}A_{i_{3};1}.
\end{equation}
Since the sets $D_{i;j}$ are disjoint, we can use Lemma \ref{lem: general_ineq_expected} which yields 
\begin{equation}
\begin{aligned}
\alpha_{0}&\left\langle A_{r;0}\prod_{i_{0}\in D_{0;0}}A_{i_{0};0}\prod_{i_{1}\in D_{0;1}}A_{i_{1};1}\prod_{i_{2}\in D_{1;0}}A_{i_{2};0}\prod_{i_{3}\in D_{1;1}}A_{i_{3};1}\right\rangle\\
&+ \alpha_{1}\left\langle A_{r;0}\prod_{i_{0}\in D_{0;0}}A_{i_{0};0}\prod_{i_{1}\in D_{0;1}}A_{i_{1};1}\prod_{i_{2}\in D_{2;0}}A_{i_{2};0}\prod_{i_{3}\in D_{2;1}}A_{i_{3};1}\right\rangle \\
\leqslant& \alpha_{0}\alpha_{1}\left\langle\prod_{i_{0}\in D_{1;0}}A_{i_{0};0}\prod_{i_{1}\in D_{1;1}}A_{i_{1};1}\prod_{i_{2}\in D_{2;0}}A_{i_{2};0}\prod_{i_{3}\in D_{2;1}}A_{i_{3};1}\right\rangle+1.
\end{aligned}
\end{equation}
Next, we apply Lemma \ref{lem: general_ineq_expected} for the second time by taking 
\begin{equation}
O_{0,\vec{x}}=A_{r;1}\prod_{i_{0}\in D_{0;0}}A_{i_{0};0}\prod_{i_{1}\in D_{0;1}}A_{i_{1};1}, \quad O_{1;\vec{y}}=\prod_{i_{2}\in D_{1;0}}A_{i_{2};0}\prod_{i_{3}\in D_{1;1}}A_{i_{3};1}, \quad O_{2;\vec{z}}=-\prod_{i_{2}\in D_{2;0}}A_{i_{2};0}\prod_{i_{3}\in D_{2;1}}A_{i_{3};1},
\end{equation}
which gives us 
\begin{equation}
\begin{aligned}
\alpha_{0}&\left\langle A_{r;1}\prod_{i_{0}\in D_{0;0}}A_{i_{0};0}\prod_{i_{1}\in D_{0;1}}A_{i_{1};1}\prod_{i_{2}\in D_{1;0}}A_{i_{2};0}\prod_{i_{3}\in D_{1;1}}A_{i_{3};1}\right\rangle\\
&- \alpha_{1}\left\langle A_{r;1}\prod_{i_{0}\in D_{0;0}}A_{i_{0};0}\prod_{i_{1}\in D_{0;1}}A_{i_{1};1}\prod_{i_{2}\in D_{2;0}}A_{i_{2};0}\prod_{i_{3}\in D_{2;1}}A_{i_{3};1}\right\rangle \\
\leqslant& -\alpha_{0}\alpha_{1}\left\langle\prod_{i_{0}\in D_{1;0}}A_{i_{0};0}\prod_{i_{1}\in D_{1;1}}A_{i_{1};1}\prod_{i_{2}\in D_{2;0}}A_{i_{2};0}\prod_{i_{3}\in D_{2;1}}A_{i_{3};1}\right\rangle+1.
\end{aligned}
\end{equation}
Adding the two bounds together completes the proof.
\end{proof}

\setcounter{thm}{0}
\begin{thm}
Let us consider a connected graph $G$ and associate a qubit with each vertex of $G$. 
Let us then construct the following Bell functional associated with the graph $G$:
\begin{equation}\label{eq: gmnl_ineq_general_app}
\begin{aligned}
\langle I\rangle =& \sum_{i\in \dot{N}_{r}}\langle \tilde{g}_{i}\rangle+2\sum_{i\notin \dot{N}_{r}}\langle \tilde{g}_{i}\rangle,
\end{aligned}
\end{equation}
where $r\in [n]$, $\dot{N}_{r}=N_{r}\cup\{r\}$, and $\tilde{g_{i}}$ is an operator constructed from a stabilizing operator $g_{i}$ of the graph state $\ket{G}$ via the substitution \eqref{eq: substitution_graph_state_ineq}. Then, the maximal values over bi-local and quantum correlations are $\beta_{BL}=2(n-1)$ and $\beta_{Q}=2\sqrt{1+|N_{r}|^{2}} + 2(n-|N_{r}|-1)$, respectively. 
\end{thm}

\begin{proof}
We start the proof from showing, that the bi-local bound for functional \eqref{eq: gmnl_ineq_general_app} equals $\beta_{BL}=2(n-1)$. From the linearity of the Bell functional, it follows that to prove that this bound holds for all bi-local behaviours, it is sufficient to show that this inequality holds for behaviours local across $Q|\overline{Q}$, for all bipartitions $Q|\overline{Q}$. In fact, we can take it one step further and discard the dependence on $\lambda$, since the maximum of a linear Bell functional has to be achieved for some $\lambda=\lambda_{\max}$. Putting it all together, we optimise $\langle I\rangle$ over all bipartitions $Q|\overline{Q}$, for behaviours satisfying
\begin{equation}\label{eq:product_behaviour}
p(\mathbf{a}|\mathbf{x})=p(\mathbf{a}_{Q}|\mathbf{x}_{Q})p(\mathbf{a}_{\overline{Q}}|\mathbf{x}_{\overline{Q}}).
\end{equation}

To simplify the analysis, without loss of generality, we assume that $r\in Q$. Moreover, let us introduce the following notation $g_{i}=g_{i}^{Q}\otimes g_{i}^{\overline{Q}}$, i.e., $g_{i}^{Q}$ is the part of the Pauli string $g_{i}$ that acts on the set of qubits $Q$. Then, for any behaviour \eqref{eq:product_behaviour} we have
\begin{equation}\label{eq:I_expression}
\begin{aligned}
\langle I \rangle =  \sum_{i\in\dot{N}_{r}}\left\langle \tilde{g}_{i}^{Q}\right\rangle_{Q}\left\langle \tilde{g}_{i}^{\overline{Q}}\right\rangle_{\overline{Q}}+2\sum_{i\notin\dot{N}_{r}}\left\langle \tilde{g}_{i}^{Q}\right\rangle_{Q}\left\langle \tilde{g}_{i}^{\overline{Q}}\right\rangle_{\overline{Q}},
\end{aligned}
\end{equation}
where $\langle \cdot \rangle_{Q}$ is evaluated over $p(\mathbf{a}_{Q}|\mathbf{x}_{Q})$ and likewise for $\langle \cdot \rangle_{\overline{Q}}$. 

We split the rest of the proof into two cases: $\dot{N}_{r}\cap Q \neq \dot{N}_{r}$ and $\dot{N}_{r}\cap Q = \dot{N}_{r}$, which corresponds to the scenario in which $Q|\overline{Q}$ respectively cuts and does not cut across the core of the inequality. The plan is as follows: we will first show how a bipartition $Q|\overline{Q}$ cutting across the core of the inequality causes a non-trivial bound for the behaviour local across $Q|\overline{Q}$. Then, for the case in which $Q|\overline{Q}$ does not cut across the core, we will show how to "bring" a party outside of the core into it by using the inequality extension and Lemma \ref{lem: general_ineq_expected}.

We begin with the former case. We will reduce proving the bound to an almost bipartite problem between $r$ (which is in $Q$ by definition) and a party $l$ that is both in $\overline{Q}$ and in $N_r$. Notice that such party $l$ has to exist because $\dot{N}_{r}\cap Q \neq \dot{N}_{r}$. Due to the substitution rules for the rotated party $r$, we have that $\langle \tilde{g}_{i}^{Q}\rangle_{Q}\leqslant 2$ if $i\in \dot{N}_{r}$, whereas $\langle \tilde{g}_{i}^{Q}\rangle_{Q}\leqslant 1$ for $i\notin \dot{N}_{r}$ as well as $\langle \tilde{g}_{i}^{\overline{Q}}\rangle_{\overline{Q}}\leqslant 1$ for any $i$. Let us make use of this fact for all $i\neq r,l$. There are $n-|\dot{N}_{r}|$ terms $i\notin\dot{N}_{r}$ in Eq. \eqref{eq:I_expression}, each upper-bounded by $1$, and since every term is multiplied by $2$ we get a collective upper-bound of $2(n-|\dot{N}_{r}|)$. Next, there are $|\dot{N}_{r}|-2$ terms in $\dot{N}_{r}\setminus\{r,l\}$, each upper-bounded by $2$. Then we have

\begin{equation}
\begin{aligned}
\langle I \rangle \leqslant&\Bigg( \left\langle \tilde{g}_{r}^{Q}\right\rangle_{Q}\left\langle \tilde{g}_{r}^{\overline{Q}}\right\rangle_{\overline{Q}}+ \left\langle \tilde{g}_{l}^{Q} \right\rangle_{Q}\left\langle \tilde{g}_{l}^{\overline{Q}}\right\rangle_{\overline{Q}}\Bigg)+ 2(n-2)\\
=&\left( \left\langle (A_{r;0}+A_{r;1})\prod_{i\in N_{r}\cap Q}A_{i;1}\right\rangle_{Q}\left\langle\prod_{i\in N_{r}\cap \overline{Q}}A_{i;1}\right\rangle_{\overline{Q}}+ \left\langle (A_{r;0}-A_{r;1})\prod_{i\in N_{l}\cap Q\setminus\{r\}}A_{i;1}\right\rangle_{Q}\left\langle A_{l;0}\prod_{i\in N_{l}\cap\overline{Q}}A_{i;1}\right\rangle_{\overline{Q}}\right)\\
&+ 2(n-2).
\end{aligned}
\end{equation}
Finally, to bound the remaining expected values, we need to make use of Corollary \ref{cor:bound_2}. First, by taking $\alpha_{0}=\left\langle\prod_{i\in N_{r}\cap \overline{Q}}A_{i;1}\right\rangle_{\overline{Q}}$, $\alpha_{1}=\left\langle A_{l;0}\prod_{i\in N_{l}\cap\overline{Q}}A_{i;1}\right\rangle_{\overline{Q}}$ we get
\begin{equation}
\begin{aligned}
\langle I \rangle \leqslant
&\left( \alpha_0 \left\langle (A_{r;0}+A_{r;1})\prod_{i\in N_{r}\cap Q}A_{i;1}\right\rangle_{Q} + \alpha_1 \left\langle (A_{r;0}-A_{r;1})\prod_{i\in N_{l}\cap Q\setminus\{r\}}A_{i;1}\right\rangle_{Q} \right)+ 2(n-2).
\end{aligned}
\end{equation}
Then, we take 
\begin{equation}
D_{0;1}=N_{r}\cap N_{l}\cap Q,\quad D_{1;1}=(N_{r}\setminus N_{l})\cap Q,\quad D_{2;1}=((N_{l}\setminus N_{r})\cap Q)\setminus\{r\},
\end{equation}
and $D_{i;0}=\emptyset$ for all $i\in\{0,1,2\}$. Since these sets are disjoint, the conditions of Corollary \ref{cor:bound_2} are satisfied, giving us
\begin{equation}
\begin{aligned}
\langle I \rangle \leqslant
&\left( \alpha_0 \left\langle (A_{r;0}+A_{r;1})\prod_{i\in N_{r}\cap Q}A_{i;1}\right\rangle_{Q} + \alpha_1 \left\langle (A_{r;0}-A_{r;1})\prod_{i\in N_{l}\cap Q\setminus\{r\}}A_{i;1}\right\rangle_{Q} \right)+ 2(n-2)\\
\leqslant& 2(n-1)
\end{aligned}
\end{equation}

Now we tackle the case $\dot{N}_{r}\cap Q = \dot{N}_{r}$. Let us consider a party $l\in \overline{Q}$. Clearly, $l\notin N_{r}$, however, since we assume that $G$ is connected, there exists a path from $r$ to $l$. Let us denote by $t_{0},t_{1},\ldots,t_{m}$ the parties that form the shortest path between $r$ and $l$, where $t_{0}=r$ and $t_{m}=l$. 

For a given path $t_{0},t_{1},\ldots,t_{m}$, we define two sets $T_{0}\coloneq \{t_{2u}\}_{u=1}^{\lfloor m/2\rfloor}$, $T_{1}\coloneq\{t_{2u+1}\}_{u=1}^{\lceil m/2 \rceil-1}$, where $\lceil\cdot\rceil$, $\lfloor\cdot\rfloor$ are the ceiling and floor functions. The purpose of these sets is as follows: we want to create a CHSH-like term between parties $r$ and $l$. This will be achieved with Lemma \ref{lem: general_ineq_expected}, where correlators corresponding to $T_{0}$ will be responsible for creating correlators between  $A_{l;m'}$ and $A_{r;0}+A_{r;1}$ with $T_{1}$ creating the correlators between $A_{l;m'+1}$ and $A_{r;0}-A_{r;1}$, where $m'=m\mod 2$. With this idea in mind, we split $\langle I \rangle$ into three parts
\begin{equation}\label{eq:three_parts}
\langle I\rangle = \left(\langle\tilde{g}_{r}\rangle+2\sum_{i\in T_{0}}\langle\tilde{g}_{i}\rangle\right)+\left(\langle \tilde{g}_{t_{1}}\rangle+2\sum_{i\in T_{1}}\langle \tilde{g}_{i}\rangle\right)+\left(\sum_{i\in N_{r}\setminus\{t_{1}\}}\langle \tilde{g}_{i}\rangle+2\sum_{i\notin \dot{N}_{r}\cup T_{0}\cup T_{1}}\langle \tilde{g}_{i}\rangle\right).
\end{equation}
The purpose of the three parts is to upper-bound the above expression by another that contains an explicit CHSH-like term between parties $r$ and $l$, and so explicitly shows the upper-bound on behaviours local across $Q|\overline{Q}$. To this end, we again make use of Lemma \ref{lem: general_ineq_expected}.

Let us focus on the first term
\begin{equation}\label{eq:first_term}
\langle\tilde{g}_{r}\rangle+2\sum_{i\in T_{0}}\langle\tilde{g}_{i}\rangle=\sum_{k=0}^{1}\left\langle A_{r;k}\prod_{i\in N_{r}}A_{i;1} \right\rangle + 2\sum_{u=1}^{\lfloor m/2\rfloor}\left\langle A_{t_{2u};0}\prod_{j\in N_{t_{2u}}}A_{j;1}\right\rangle.
\end{equation}
We can make use of Lemma \ref{lem: general_ineq_expected} by taking $\alpha_{0},\alpha_{1}=1$, $O_{0;\vec{x}}=\prod_{i\in N_{r}\cap N_{t_{2}}}A_{i;1}$, $O_{1;\vec{y}}=A_{r;a}\prod_{i\in N_{r}\setminus N_{t_{2}}}$, and $O_{2;\vec{z}}=A_{t_{2};0}\prod_{i\in N_{t_{2}}\setminus N_{r}}A_{i;1}$, which yields
\begin{equation}
\left\langle A_{r;a}\prod_{i\in N_{r}}A_{i;1} \right\rangle+\left\langle A_{t_{2};0}\prod_{j\in N_{t_{2}}}A_{j;1}\right\rangle\leqslant \left\langle A_{r;a}A_{t_{2};0}\prod_{i\in N_{r}\Delta N_{t_{2}}}A_{i;1} \right\rangle+1.
\end{equation}
We then continue to apply Lemma \ref{lem: general_ineq_expected} recursively with $\alpha_{0},\alpha_{1}=1$. Let us denote by $\langle O_{i}O_{j}\rangle^{(u)}$ the expected value $\langle O_{i}O_{j}\rangle$ in the $u$'th step in this recursion. Then we take $\langle O_{0;\vec{x}}O_{2;\vec{z}}\rangle^{(u)}=\left\langle A_{t_{2u};0}\prod_{j\in N_{t_{2u}}}A_{j;1}\right\rangle$, and $\langle O_{0;\vec{x}}O_{1;\vec{y}}\rangle^{(u)}=\langle O_{1;\vec{y}}O_{2;\vec{z}}\rangle^{(u-1)}$ with $\langle O_{1;\vec{y}}O_{2;\vec{z}}\rangle^{(0)}=\left\langle A_{r;a}\prod_{i\in N_{r}}A_{i;1} \right\rangle$. 

Let us note here that, without specifying observables $O_{i}$ directly, we are opening ourselves to a potential issue: namely that there could be no way of defining observables $O_{i}$ in such a way that we can split the parties into three sets. This could happen if $O_{0;\vec{x}}O_{1;\vec{y}}$ contains $A_{i;k}$ and $O_{0;\vec{x}}O_{2;\vec{z}}$ contains $A_{i;1-k}$ for some $k\in\{0,1\}$, since then $A_{i;k}$ and $A_{i;1-k}$ would need to both belong to $O_{0;\vec{x}}$ and an observable containing two observables for different measurement choices is not well defined. This issue is avoided, however, as we define $t_{0},\ldots, t_{m}$ as the shortest path between $t_{0}$ and $t_{m}$, which directly implies that $t_{i}\notin N_{t_{j}}$ for $|i-j|\geqslant2$, because otherwise a shorter path would exist. This consequently means that $A_{t_{2u};1}$ does not appear as part of any of the expectations $\langle O_{i}O_{j}\rangle$. Moreover, since there are no observables $A_{i;0}$ in Eq. \eqref{eq:first_term} other than $A_{t_{2u};0}$, we can be sure that all operators $O_{0;\vec{x}},O_{1;\vec{y}},O_{2;\vec{z}}$ are well defined.

With this issue resolved we are free to apply Lemma \ref{lem: general_ineq_expected} recursively for all $u\in\{1,\ldots,\lfloor m/2\rfloor\}$ which gives us
\begin{equation}\label{eq:term_1_bound}
\sum_{k=0}^{1}\left\langle A_{r;k}\prod_{i\in N_{r}}A_{i;1} \right\rangle + 2\sum_{u=1}^{\lfloor m/2\rfloor}\left\langle A_{t_{2u};0}\prod_{j\in N_{t_{2u}}}A_{j;1}\right\rangle\leqslant \sum_{k=0}^{1}\left\langle A_{r;k}\prod_{u=1}^{\lfloor m/2\rfloor}A_{t_{2u};0}\prod_{i\in N_{r}\Delta N_{t_{2}}\Delta\ldots\Delta N_{t_{2\lfloor m/2\rfloor}}}A_{i;1} \right\rangle +2\left\lfloor\frac{m}{2}\right\rfloor.
\end{equation}
Note that the last product goes over the set $N_{r}\Delta N_{t_{2}}\Delta\ldots\Delta N_{t_{2\lfloor m/2\rfloor}}$ arises, due to the fact that $A_{i;j}^{2}=\mathbf{1}$, as this set contains parties appearing an odd number of times in the listed neighbourhoods.

We now turn to the second term in Eq. \eqref{eq:three_parts}. Here, as with the first term, we recursively use Lemma \ref{lem: general_ineq_expected}. We take $\alpha_{0},\alpha_{1}=1$, $\langle O_{0;\vec{x}}O_{2;\vec{z}}\rangle^{(u)}=\langle A_{t_{2u+1};0}\prod_{j\in N_{t_{2u+1}}}A_{j;1}\rangle$, and $\langle O_{0;\vec{x}}O_{1;\vec{y}}\rangle^{(n)}=\langle O_{1;\vec{y}}O_{2;\vec{z}}\rangle^{(n-1)}$ with $\langle O_{1;\vec{y}}O_{2;\vec{z}}\rangle^{(0)}=\langle (-1)^{k}A_{r;k}A_{t_{1};0}\prod_{j\in N_{t_{1}}\setminus \{1\}}A_{j;1}\rangle$. As before, due to $t_{0},\ldots,t_{m}$ being the shortest path between $t_{0}$ and $t_{m}$, we have that $t_{i}\notin N_{t_{j}}$ for $|i-j|\geqslant 2$, which in turn implies that $O_{0;\vec{x}},O_{1;\vec{y}},O_{2;\vec{z}}$ are well defined. This all gives us the following upper bound
\begin{equation}\label{eq:term_2_bound}
\begin{aligned}
\langle \tilde{g}_{t_{1}}\rangle+2\sum_{i\in T_{1}}\langle \tilde{g}_{i}\rangle
&=\sum_{k=0}^{1}(-1)^{k}\left\langle A_{r;k}A_{t_{1};0}\prod_{j\in N_{t_{1}}\setminus \{1\}}A_{j;1} \right\rangle + 2\sum_{u=1}^{\lceil m/2\rceil-1}\left\langle A_{t_{2u+1};0}\prod_{j\in N_{t_{2u+1}}}A_{j;1}\right\rangle\\
&\leqslant \sum_{k=0}^{1}(-1)^{k}\left\langle A_{r;k}\prod_{u=0}^{\lceil m/2\rceil -1}A_{t_{2u+1};0}\prod_{i\in (N_{t_{1}}\setminus\{r\})\Delta N_{t_{3}}\Delta\ldots\Delta N_{t_{2\lceil m/2\rceil-1}}}A_{i;1} \right\rangle +2\left(\left\lceil \frac{m}{2}\right\rceil-1\right).
\end{aligned}
\end{equation}

Lastly, we come to the third term in Eq. \eqref{eq:three_parts}. In contrast to the two other parts, here we upper-bound each expected value directly by its absolute maximal value. For $i\in N_{r}\setminus \{t_{1}\}$ each expected value is upper-bounded by $2$ because of the presence of the term $(A_{r;0} - A_{r;1})$, while for each $i\notin \dot{N}_{r}\cup T_{0}\cup T_{1}$ the upper-bound is $1$, but each expected value is multiplied by $2$. Putting this all together, this gives us
\begin{equation}\label{eq:term_3_bound}
\sum_{i\in N_{r}\setminus\{t_{1}\}}\langle \tilde{g}_{i}\rangle+2\sum_{i\notin \dot{N}_{r}\cup T_{0}\cup T_{1}}\langle \tilde{g}_{i}\rangle\leqslant 2(|N_{r}|-1)+2(n-(|N_{r}|+1+|T_{0}|+|T_{1}|))=2(n-1-m),
\end{equation}
where we use the fact that $|T_{0}|+|T_{1}|=m-1$ and that $\dot{N}_{r},T_{0},T_{1}$ are disjoint sets. Putting Eq. \eqref{eq:term_1_bound}, \eqref{eq:term_2_bound}, \eqref{eq:term_3_bound} together we get the following total upper-bound on $\langle I\rangle$
\begin{equation}
\begin{aligned}
\langle I \rangle \leqslant& \left\langle (A_{r;0}+A_{r;1})\prod_{u=1}^{\lfloor m/2\rfloor}A_{t_{2u};0}\prod_{i\in N_{r}\Delta N_{t_{2}}\Delta\ldots\Delta N_{t_{\lfloor m/2\rfloor}}}A_{i;1}\right\rangle
\\
&+ \left\langle (A_{r;0}-A_{r;1})\prod_{u=0}^{\lceil m/2\rceil-1}A_{t_{2u+1};0}\prod_{i\in (N_{t_{1}}\setminus\{r\})\Delta N_{t_{3}}\Delta\ldots\Delta N_{t_{\lceil m\rceil-1}}}A_{i;1} \right\rangle\\
&+ 2(n-2),
\end{aligned}
\end{equation}
where we used the fact that $2\lfloor m/2\rfloor + 2( \lceil m/2\rceil - 1) + 2(n-1-m)=2(n-2)$.
Notice, that our assumption about $t_{0},\ldots,t_{m}$ being the shortest path between $r$ and $l$ implies that $l\in (N_{t_{1}}\setminus\{r\})\Delta N_{t_{3}}\Delta\ldots\Delta N_{t_{m-1+\mu}}\eqqcolon S_{1}$ and that $l\notin N_{r}\Delta N_{t_{2}}\Delta\ldots\Delta N_{t_{m-\mu}}\eqqcolon S_{0}$.  Then, since $r\in Q$ and $l\in \overline{Q}$ then for any behaviour satisfying Eq. \eqref{eq:product_behaviour} we have
\begin{equation}\label{eq:I_second_case}
\begin{aligned}
\langle I \rangle \leqslant& \left\langle (A_{r;0}+A_{r;1})\prod_{i\in T_{0}\cap Q}A_{i;0}\prod_{j\in S_{0}\cap Q}A_{j;1}\right\rangle_{Q} \left\langle\prod_{i\in T_{0}\cap \overline{Q}}A_{i;0}\prod_{j\in S_{0}\cap \overline{Q}}A_{j;1}\right\rangle_{\overline{Q}}\\
&+\left\langle (A_{r;0}-A_{r;1})\prod_{i\in T_{1}\cap Q}A_{i;0}\prod_{j\in S_{1}\cap Q}A_{j;1}\right\rangle_{Q} \left\langle\prod_{i\in T_{1}\cap \overline{Q}}A_{i;0}\prod_{j\in S_{1}\cap \overline{Q}}A_{j;1}\right\rangle_{\overline{Q}}+2(n-2).
\end{aligned}
\end{equation}
We now again use Corollary \ref{cor:bound_2} to upper-bound the above expression in a similar way that it was done in the first case. To this end, we take
\begin{equation}
\begin{aligned}
D_{0;0}=T_{0}\cap T_{1}\cap Q,\quad D_{0;1}=S_{0}\cap S_{1}\cap Q, \quad D_{1;0}= (T_{0}\setminus T_{1})\cap Q,\\
D_{1;1}=(S_{0}\setminus S_{1})\cap Q,\quad D_{2;0}=(T_{1}\setminus T_{0})\cap Q,\quad D_{2;1}=(S_{1}\setminus S_{0})\cap Q.
\end{aligned}
\end{equation}
It is clear that none of the sets $D_{i;j}$ contain $r$. What is more, one can easily see that $D_{i;0}$ are disjoint for all $i$, and that $D_{i;1}$ are disjoint for all $i$. To show that all of $D_{i;j}$ are disjoint, we will show that sets $S_{0}$ and $S_{1}$ do not contain any parties $t_{u}$ except $t_{m}$. This follows from the shortest path assumption, as a party $t_{u}$ is a member of exactly two neighbourhoods of other $t_{i}'s: $$N_{t_{u-1}}$ and $N_{t_{u+1}}$ for all $u\in\{1,\ldots,m-1\}$. Since sets $S_{0}$ and $S_{1}$ are defined as a symmetric difference of neighbourhoods $N_{t_{i}}$ for even or odd $i$'s respectively, none of the parties $t_{u}$ for all $u\in\{1,\ldots,m-1\}$ appear in $S_{0}$ and $S_{1}$. Therefore, the only party from the path $t_{0},\ldots,t_{m}$ that is a part of $S_{0}$ or $S_{1}$ is $t_{m}=l$. 
However, every single $D_{i;j}$ is defined as an intersection with $Q$
but by our assumption $l\in\overline{Q}$. This implies that $D_{i;1}$ do not contain parties $t_{1},\ldots, t_{m}$, and since $D_{i;0}$ are composed solely of $t_{2},\ldots, t_{m-1}$ this shows that all $D_{i;j}$ are disjoint.

Then, we can use Corollary \ref{cor:bound_2}, which yields
\begin{equation}
\begin{aligned}
\langle I \rangle \leqslant& \left\langle (A_{r;0}+A_{r;1})\prod_{i\in T_{0}\cap Q}A_{i;0}\prod_{j\in S_{0}\cap Q}A_{j;1}\right\rangle_{Q} \left\langle\prod_{i\in T_{0}\cap \overline{Q}}A_{i;0}\prod_{j\in S_{0}\cap \overline{Q}}A_{j;1}\right\rangle_{\overline{Q}}\\
&+\left\langle (A_{r;0}-A_{r;1})\prod_{i\in T_{1}\cap Q}A_{i;0}\prod_{j\in S_{1}\cap Q}A_{j;1}\right\rangle_{Q} \left\langle\prod_{i\in T_{1}\cap \overline{Q}}A_{i;0}\prod_{j\in S_{1}\cap \overline{Q}}A_{j;1}\right\rangle_{\overline{Q}}+2(n-2)\\
\leqslant& 2(n-1).
\end{aligned}
\end{equation}

We can now bring both cases $\dot{N}_{r}\cap Q \neq \dot{N}_{r}$ and $\dot{N}_{r}\cap Q = \dot{N}_{r}$ toghether. For both, we showed that for all behaviours satisfying Eq. \eqref{eq:product_behaviour} we get $\langle I\rangle \leqslant2(n-1)$. And so, we can conclude that $\beta_{BL}\leqslant 2(n-1)$. Lastly, to show that this bound is saturable, i.e., $\beta_{BL}= 2(n-1)$, it is sufficient to consider a fully local behaviour for which $\langle A_{i;j}\rangle =1-2\delta_{i,1}\delta_{j,1}$, where $\delta$ is the Kronecker delta, which gives $\langle I\rangle = 2(n-1)$. 

We now proceed with the proof of the quantum bound $\beta_{Q}$. To this end, we consider the following sum-of-squares decomposition
\begin{equation}\label{eq: sos_standard}
\begin{aligned}
(2\sqrt{1+|N_{r}|^{2}} + 2(n-|N_{r}|-1))\mathbb{1} - I =& \frac{1}{\sqrt{1+|N_{r}|^{2}}}\left(\mathbb{1}-\frac{\sqrt{1+|N_{r}|^{2}}}{2}(A_{r;0}+A_{r;1})\prod_{i\in N_{r}}A_{i;1}\right)^{2}\\
&+\sum_{i\in N_{r}}\frac{|N_{r}|}{\sqrt{1+|N_{r}|^{2}}}\left(\mathbb{1}-\frac{\sqrt{1+|N_{r}|^{2}}}{2|N_{r}|}(A_{r;0}-A_{r;1})A_{i;0}\prod_{j\in N_{i}\setminus \{r\}}A_{j;1} \right)^{2}\\
&+\sum_{i\notin \dot{N}_{r}}\left(\mathbb{1}-A_{i;0}\prod_{j\in N_{i}}A_{j;1}\right)^{2},
\end{aligned}
\end{equation}
where $I$ is the Bell operator from Eq. \eqref{eq: gmnl_ineq_general_app}. The proof of the above is straightforward - let us only remark here that the prefactors have been chosen is such a way as to cancel out the anticomutators $\{A_{r;0},A_{r;1}\}$ that originate from the terms $(A_{r;0}+A_{r;1})^{2}$.

The sum-of-squares only implies that $\beta_{Q}\leqslant2\sqrt{1+|N_{1}|^{2}} + 2(n-|N_{1}|-1)$. To show that $\beta_{Q}=2\sqrt{1+|N_{1}|^{2}} + 2(n-|N_{1}|-1)$, we take a graph state $\ket{G}$ and observables
\begin{equation}
A_{r;k}=\frac{X+(-1)^{k}|N_{r}|Z}{\sqrt{1+|N_{r}|^{2}}},\qquad A_{j;0}=X,\qquad A_{j;1}=Z,
\end{equation}
for all $j\in[n]\setminus \{r\}$. Substitution of these observables into $\tilde{g}_{i}$ from Eq. \eqref{eq: gmnl_ineq_general_app} gives us
\begin{equation}
\begin{aligned}
\tilde{g}_{r}&=\left(\frac{X_{r}+|N_{r}|Z_{r}}{\sqrt{1+|N_{r}|^{2}}}+\frac{X_{r}-|N_{r}|Z_{r}}{\sqrt{1+|N_{r}|^{2}}}\right)\prod_{j\in N_{r}} Z_{j} =\frac{2}{\sqrt{1+|N_{r}|^{2}}} g_{r},\\
\forall_{i\in N_{r}},\quad\tilde{g}_{i}&=X_{i}\left(\frac{X_{r}+|N_{r}|Z_{r}}{\sqrt{1+|N_{r}|^{2}}}-\frac{X_{r}-|N_{r}|Z_{r}}{\sqrt{1+|N_{r}|^{2}}}\right)\prod_{j\in N_{i}\setminus\{r\}} Z_{j}=\frac{2|N_{r}|}{\sqrt{1+|N_{r}|^{2}}} g_{i},\\
\forall_{i\notin \dot{N}_{r}},\quad \tilde{g}_{i}&=X_{i}\prod_{j\in N_{i}}Z_{j}=g_{i}.
\end{aligned}
\end{equation}
Hence $\langle I\rangle$ is evaluated to be
\begin{equation}\label{eq:sos_thm_1}
\begin{aligned}
\langle I\rangle =& \sum_{i\in \dot{N}_{r}}\langle \tilde{g}_{i}\rangle+2\sum_{i\notin \dot{N}_{r}}\langle \tilde{g}_{i}\rangle\\
&= \frac{2}{\sqrt{1+|N_{r}|^{2}}} \bra{G}g_{r}\ket{G}+ \sum_{i\in N_{r}}\frac{2|N_{r}|}{\sqrt{1+|N_{r}|^{2}}} \bra{G}g_{i}\ket{G}+\sum_{i\notin \dot{N}_{r}}\bra{G}g_{i}\ket{G}\\
&=\frac{2}{\sqrt{1+|N_{r}|^{2}}}+\frac{2|N_{r}|^{2}}{\sqrt{1+|N_{r}|^{2}}}+2(n-|N_{r}|-1)=2\sqrt{1+|N_{r}|^{2}}+2(n-|N_{r}|-1),
\end{aligned}
\end{equation}
and so $\beta_{Q}=2\sqrt{1+|N_{r}|^{2}}+2(n-|N_{r}|-1)$ ending the proof.
\end{proof}

\section{Self-testing proof}\label{app: s-t}
Here we show that a behaviour that archives the maximal quantum value of the Bell functional from Theorem \ref{thm: gmnl_standard} can be used to self-test the state and measurements. First, let us recall the following lemma from Ref. \cite{PhysRevLett.124.020402} that is vital for the following proof of self-testing.
\begin{lem}\label{lem: s-t}
Let us consider two, finitely dimensional, unitary operators $M_{1},M_{2}\in\mathcal{B}(\mathcal{H})$. If $M_{1}^{2}=M_{2}^{2}=\mathbb{1}$ and they anticommute $\{M_{1},M_{2}\}=0$, then there exists a unitary $U\in\mathcal{B}(\mathcal{H})$ such that 
\begin{equation}
U M_{1}U^{\dagger}=X\otimes \mathbb{1},\quad UM_{2}U^{\dagger}=Z\otimes \mathbb{1}, 
\end{equation}
where $X\in\mathbb{C}^{2}$, $\mathbb{1}\in \mathcal{H}'$, where $\mathcal{H}=\mathbb{C}^{2}\otimes\mathcal{H}'$.
\end{lem}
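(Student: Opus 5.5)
The plan is to diagonalise $M_{2}$ and use $M_{1}$ to pair up its eigenspaces, which then exhibits the tensor-product structure $\mathbb{C}^{2}\otimes\mathcal{H}'$ explicitly. First I would note that $M_{i}$ is unitary and $M_{i}^{2}=\mathbb{1}$, so $M_{i}^{\dagger}=M_{i}^{-1}=M_{i}$. Hence both operators are Hermitian with spectrum contained in $\{+1,-1\}$. In particular, $\mathcal{H}=V_{+}\oplus V_{-}$ is an orthogonal decomposition, where $V_{\pm}$ are the $\pm1$ eigenspaces of $M_{2}$.

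Next I would use the anticommutation relation. If $M_{2}\ket{v}=\pm\ket{v}$, then
\begin{equation}
M_{2}M_{1}\ket{v}=-M_{1}M_{2}\ket{v}=\mp M_{1}\ket{v}.
\end{equation}
So $M_{1}$ maps $V_{+}$ into $V_{-}$ and $V_{-}$ into $V_{+}$. Since $M_{1}$ is unitary and $M_{1}^{2}=\mathbb{1}$, its restriction $V_{+}\to V_{-}$ is an isometry whose inverse is the restriction $V_{-}\to V_{+}$. Therefore $\dim V_{+}=\dim V_{-}=:k$ and $\dim\mathcal{H}=2k$. This is where finite dimensionality is used. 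I expect this to be the only conceptual point: both eigenspaces must be non-trivial and of equal size so that a factor $\mathbb{C}^{2}$ can split off. Finite dimensionality makes the dimension count immediate.

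Then I would pick an orthonormal basis $\{\ket{e_{j}}\}_{j=1}^{k}$ of $V_{+}$ and set $\ket{f_{j}}:=M_{1}\ket{e_{j}}$. By the previous step, $\{\ket{f_{j}}\}_{j=1}^{k}$ is an orthonormal basis of $V_{-}$. Taking $\mathcal{H}'=\mathbb{C}^{k}$ with basis $\{\ket{j}\}$, I would define the unitary $U$ by
\begin{equation}
U\ket{e_{j}}=\ket{0}\ket{j},\qquad U\ket{f_{j}}=\ket{1}\ket{j}.
\end{equation}
It is unitary because it maps an orthonormal basis to an orthonormal basis.

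Finally I would check the action on this basis. $M_{2}$ acts as $+1$ on each $\ket{e_{j}}$ and as $-1$ on each $\ket{f_{j}}$, so $UM_{2}U^{\dagger}=Z\otimes\mathbb{1}$. For $M_{1}$ we have $M_{1}\ket{e_{j}}=\ket{f_{j}}$ and $M_{1}\ket{f_{j}}=M_{1}^{2}\ket{e_{j}}=\ket{e_{j}}$, so $UM_{1}U^{\dagger}$ swaps $\ket{0}\ket{j}\leftrightarrow\ket{1}\ket{j}$, which is $X\otimes\mathbb{1}$. This completes the argument.
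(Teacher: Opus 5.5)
Your proof is correct and complete. Note that the paper gives no proof of this lemma: it is only recalled from Ref.~\cite{PhysRevLett.124.020402}. So there is no in-text argument to compare with, and what you give is the standard eigenspace-pairing argument.

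One small correction to your commentary: finite dimensionality is not actually used anywhere in your proof. You already show that $M_1|_{V_+}\colon V_+\to V_-$ is a unitary bijection with inverse $M_1|_{V_-}$. Transporting an arbitrary orthonormal basis $\{e_j\}_{j\in J}$ of $V_+$ therefore gives an orthonormal basis $\{M_1e_j\}_{j\in J}$ of $V_-$. The same $U$ then works with $\mathcal{H}'=\ell^2(J)$ for any Hilbert space, with no separate dimension count.

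Finite dimension does matter in a common alternative route. There one gets the balanced spectrum from $\operatorname{Tr}M_2=\operatorname{Tr}(M_1M_2M_1)=-\operatorname{Tr}M_2$. One then writes $M_1$ in block off-diagonal form with a unitary block $C$ and conjugates by $\mathbb{1}\oplus C$. Your basis choice $f_j=M_1e_j$ is exactly the basis-level version of that final conjugation, and it avoids the trace step entirely.
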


\begin{thm}
Let $\ket{\psi_{\max}}\in \mathcal{H}=\bigotimes_{i=1}^{n}\mathcal{H}_{i}$ be an $n$-partite state and $A_{i;j}$ be finitely dimensional observables measured by party $i$ that give rise to the maximal quantum violation of the Bell functional \eqref{eq: gmnl_ineq_general} for a given connected graph $G$. Moreover, let us assume that a reduced state $\operatorname{Tr}_{[n]\setminus\{i\}}[\ket{\psi_{\max}}\!\bra{\psi_{\max}}]$ for all parties $i\in[n]$ is full-rank. Then, there exists a unitary operation $U=U_1\otimes\ldots\otimes U_n$ with $U_i$ being a local unitary such that $U_i:\mathcal{H}_i\to \mathbbm{C}^2\otimes\mathcal{H}_i'$ for some auxiliary Hilbert space $\mathcal{H}_i'$ and
\begin{equation}
\begin{aligned}
U\ket{\psi_{\max}}&=\ket{G}\ket{\mathrm{aux}},\\ U_i A_{i;j}\, U_i^{\dagger}&=X^{1-j}_{i}Z_{i}^{j}\otimes \mathbb{1}_{i}\quad \mathrm{for }\; i\in[n]\setminus\{r\},  \\
U_r A_{r;j}\,U_r^{\dagger}&=\frac{X_{r}+(-1)^{j}|N_{r}|Z_{r}}{\sqrt{1+|N_{r}|^{2}}}\otimes \mathbb{1}_{r},
\end{aligned}
\end{equation}
where $\ket{G}\in (\mathbb{C}^{2})^{\otimes n}$ is the graph state to be self-tested, $\ket{\mathrm{aux}}\in \bigotimes_{i=1}^{n}\mathcal{H}_{i}'$, and $\mathbb{1}_{i}$ is an identity operator acting on $\mathcal{H}_{i}'$.
\end{thm}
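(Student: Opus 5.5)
The plan is to extract, from the sum-of-squares decomposition \eqref{eq: sos_standard}, a set of operator relations that hold on the support of $\ket{\psi_{\max}}$. Then we use the full-rank assumption to promote these relations to operator identities on each local space, and finally invoke Lemma \ref{lem: s-t} together with the standard graph-state argument of Ref. \cite{PhysRevLett.124.020402}. Since $\bra{\psi_{\max}}I\ket{\psi_{\max}}=\beta_Q$, every square in \eqref{eq: sos_standard} must annihilate $\ket{\psi_{\max}}$. Writing $c=\sqrt{1+|N_r|^2}$ and $Z'_j := A_{j;1}$, $X'_j := A_{j;0}$ for $j\neq r$, this yields
\begin{equation}
\frac{c}{2}(A_{r;0}+A_{r;1})\prod_{i\in N_r}A_{i;1}\ket{\psi_{\max}}=\ket{\psi_{\max}},\qquad \frac{c}{2|N_r|}(A_{r;0}-A_{r;1})A_{i;0}\prod_{j\in N_i\setminus\{r\}}A_{j;1}\ket{\psi_{\max}}=\ket{\psi_{\max}},
\end{equation}
for all $i\in N_r$, and $A_{i;0}\prod_{j\in N_i}A_{j;1}\ket{\psi_{\max}}=\ket{\psi_{\max}}$ for $i\notin\dot N_r$.

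The first step is to define the rotated party's effective Paulis, $\hat X_r=\frac{c}{2}(A_{r;0}+A_{r;1})$ and $\hat Z_r=\frac{c}{2|N_r|}(A_{r;0}-A_{r;1})$. A direct expansion gives $\hat X_r^2=\frac{c^2}{4}(2+\{A_{r;0},A_{r;1}\})$ and $\hat Z_r^2=\frac{c^2}{4|N_r|^2}(2-\{A_{r;0},A_{r;1}\})$. To pin down the anticommutator, use the stabilizer equations: each stabilizer-type operator squares to the identity on the other parties. Combining the first relation with itself, the norm of $\hat X_r$ acting on the state must be $1$; on the local support this forces $\hat X_r^2=\mathbb 1$ and $\hat Z_r^2=\mathbb 1$ there. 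I expect this to be the first delicate point. The cleanest route is to note that the equality case of the SOS forces $\langle\{A_{r;0},A_{r;1}\}\rangle$-type terms to cancel exactly as remarked after \eqref{eq: sos_standard}. Then $\hat X_r^\dagger\hat X_r\ket{\psi}=\ket{\psi}$ holds because $\hat X_r\ket{\psi}=(\prod A_{i;1})\ket{\psi}$ is a unit vector, and likewise for $\hat Z_r$. Full rank of the reduced state at $r$ upgrades this to $\{A_{r;0},A_{r;1}\}=2\frac{|N_r|^2-1}{|N_r|^2+1}\mathbb 1$. From this, $\hat X_r,\hat Z_r$ are unitary involutions that anticommute, since $\{\hat X_r,\hat Z_r\}\propto A_{r;0}^2-A_{r;1}^2=0$.

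The second step, which I expect to be the main obstacle, is to show $\{A_{i;0},A_{i;1}\}=0$ for every $i\neq r$. I would follow Ref. \cite{PhysRevLett.124.020402}. For a vertex $i$ with a neighbour $k$, use the two stabilizer relations for $g_i$ and $g_k$ (with $\hat X_r,\hat Z_r$ substituted when $r$ is involved). Commuting the operators of all other parties through, so that the only non-commuting pair is $A_{i;0}$ versus $A_{i;1}$ (and $A_{k;0}$ versus $A_{k;1}$), one derives $A_{i;0}A_{i;1}\ket{\psi}=-A_{i;1}A_{i;0}\ket{\psi}$. The sign comes from comparing the products $g_ig_k$ and $g_kg_i$, which differ by the anticommutation on both $i$ and $k$. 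To avoid circularity, propagate outward from $r$ along the connected graph: for $i\in N_r$ use the already established anticommutation of $\hat X_r,\hat Z_r$, then continue neighbour by neighbour. Full rank of each single-party reduced state converts the statement on the state into the operator identity $\{A_{i;0},A_{i;1}\}=0$.

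Finally, apply Lemma \ref{lem: s-t} to each pair $(A_{i;0},A_{i;1})$, $i\neq r$, and to $(\hat X_r,\hat Z_r)$ to get local unitaries $U_i$. These yield $U_iA_{i;j}U_i^\dagger=X^{1-j}Z^j\otimes\mathbb 1$, and $U_r\hat X_rU_r^\dagger=X\otimes\mathbb1$, $U_r\hat Z_rU_r^\dagger=Z\otimes\mathbb1$. Inverting gives $A_{r;j}=\frac1c(\hat X_r+(-1)^j|N_r|\hat Z_r)$, which is the claimed form. The transformed state $U\ket{\psi_{\max}}$ is then stabilized by $g_i\otimes\mathbb 1$ for all $i$. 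Since the graph state is the unique joint $+1$ eigenvector of the $g_i$ on $(\mathbb C^2)^{\otimes n}$, the projector $\prod_i(\mathbb 1+g_i)/2=\ket G\!\bra G\otimes\mathbb 1$ fixes $U\ket{\psi_{\max}}$, giving $U\ket{\psi_{\max}}=\ket G\ket{\mathrm{aux}}$.
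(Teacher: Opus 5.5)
Your overall route matches the paper's proof. You extract stabilizer relations from the SOS, define effective Paulis $\hat{X}_r,\hat{Z}_r$ at the rotated party, fix their squares via full rank, propagate anticommutation outward from $r$ along the connected graph, and finish with Lemma \ref{lem: s-t} and uniqueness of the graph state. Your comparison of $\tilde{g}_i\tilde{g}_k\ket{\psi_{\max}}$ with $\tilde{g}_k\tilde{g}_i\ket{\psi_{\max}}$ is a valid repackaging of the paper's substitution argument for $\{\tilde{X}_i,\tilde{Z}_i\}\ket{\psi_{\max}}=0$. Your projector $\prod_i(\mathbb{1}+g_i)/2=\ket{G}\!\bra{G}\otimes\mathbb{1}$ makes the last step more explicit than the paper does.

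The step that fails as written is $\hat{X}_r^2=\mathbb{1}$. That $\hat{X}_r\ket{\psi_{\max}}$ is a unit vector only gives $\bra{\psi_{\max}}\hat{X}_r^2\ket{\psi_{\max}}=1$. Since $\hat{X}_r^2=\frac{c^2}{4}(2+\{A_{r;0},A_{r;1}\})$ has spectrum in $[0,c^2]$ with $c^2\geq 2$, this does not make $\ket{\psi_{\max}}$ an eigenvector, so full rank cannot yet be invoked.

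Your remark about anticommutators cancelling in the SOS does not help either. That cancellation is an operator identity valid for every state and says nothing about $\ket{\psi_{\max}}$.

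What you need is the vector identity, obtained by using the stabilizer relation twice, exactly as the paper does. From $\hat{X}_r\ket{\psi_{\max}}=\prod_{i\in N_r}A_{i;1}\ket{\psi_{\max}}$ you get $\hat{X}_r^2\ket{\psi_{\max}}=\hat{X}_r\prod_{i\in N_r}A_{i;1}\ket{\psi_{\max}}=\ket{\psi_{\max}}$. Likewise $\hat{Z}_r^2\ket{\psi_{\max}}=\ket{\psi_{\max}}$ follows from the relation of any neighbour of $r$. Only then does full rank at $r$ give the operator identities.

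Your value of the anticommutator also has the wrong sign. The condition $\hat{X}_r^2=\mathbb{1}$ forces $\{A_{r;0},A_{r;1}\}=\frac{2(1-|N_r|^2)}{1+|N_r|^2}\mathbb{1}$, which the optimal observables confirm. Your expression would instead give $\hat{X}_r^2=|N_r|^2\mathbb{1}$. With these two corrections the rest of your argument goes through.
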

\begin{proof}
From the sum-of-square decomposition \eqref{eq: sos_standard} it follows that for the state and measurement leading to the maximal quantum value $\beta_{Q}$ we have
\begin{equation}
\begin{aligned}
\bra{\psi_{\max}}\left(\mathbb{1}-\frac{\sqrt{1+|N_{r}|^{2}}}{2}(A_{r;0}+A_{r;1})\prod_{i\in N_{r}}A_{i;1}\right)^{2}\ket{\psi_{\max}}=&0,\\
\bra{\psi_{\max}}\left(\mathbb{1}-\frac{\sqrt{1+|N_{r}|^{2}}}{2|N_{r}|}(A_{r;0}-A_{r;1})A_{i;0}\prod_{j\in N_{i}\setminus \{r\}}A_{j;1} \right)^{2}\ket{\psi_{\max}}=&0,\quad\textrm{for all }i\in N_{r},\\
\bra{\psi_{\max}}\left(\mathbb{1}-A_{i;0}\prod_{j\in N_{i}}A_{j;1}\right)^{2}\ket{\psi_{\max}}=&0,\quad \textrm{for all }i\notin \dot{N}_{r},
\end{aligned}
\end{equation}
which implies
\begin{equation}\label{eq: stabilizing_observables}
\begin{aligned}
\frac{\sqrt{1+|N_{r}|^{2}}}{2}(A_{r;0}+A_{r;1})\prod_{i\in N_{r}}A_{i;1}\ket{\psi_{\max}}=&\ket{\psi_{\max}},\\
\frac{\sqrt{1+|N_{r}|^{2}}}{2|N_{r}|}(A_{r;0}-A_{r;1})A_{i;0}\prod_{j\in N_{i}\setminus \{r\}}A_{j;1} \ket{\psi_{\max}}=&\ket{\psi_{\max}},\quad\textrm{for all }i\in N_{r},\\
A_{i;0}\prod_{j\in N_{i}}A_{j;1}\ket{\psi_{\max}}=&\ket{\psi_{\max}},\quad \textrm{for all }i\notin \dot{N}_{r}.
\end{aligned}
\end{equation}
Our goal is to utilize lemma to show that the above relations, are equivalent up to some unitary $U$ to stabilizing relations of a graph state $\ket{G}$. To this end, we need to show that $\tilde{X}^{2}_{i}=\tilde{Z}^{2}_{i}=\mathbb{1}$ and $\{\tilde{X_{i}},\tilde{Z}_{i}\}=0$, where $\tilde{X}_{i}=A_{i;0}$, $\tilde{Z}_{i}=A_{i;1}$ for $i\in [n]\setminus \{r\}$, and 
\begin{equation}
   \tilde{X}_{r}= \frac{\sqrt{1+|N_{r}|^{2}}}{2}(A_{r;0}+A_{r;1}),\qquad \tilde{Z}_{r}=\frac{\sqrt{1+|N_{r}|^{2}}}{2|N_{r}|}(A_{r;0}-A_{r;1}).
\end{equation}

Let us begin from the proof of $\tilde{X}^{2}_{i}=\tilde{Z}^{2}_{i}=\mathbb{1}$. From our construction, each observable is dichotomic. What is more, we do not constrain the dimension of the Hilbert space $\mathcal{H}$, which allows us to assume that the measurements are projective, implying that the observables are unitary. Due to this, $\tilde{X}^{2}_{i}=\tilde{Z}^{2}_{i}=\mathbb{1}$ is satisfied by construction for all $i\neq r$. As for the party $r$, from Eq. \eqref{eq: stabilizing_observables} it follows that
\begin{equation}
\begin{aligned}
\tilde{X}_{r}\prod_{i\in N_{r}}\tilde{Z}_{r}\cdot|\qquad\tilde{X}_{r}\prod_{i\in N_{r}}\tilde{Z}_{r}\ket{\psi_{\max}}&=\ket{\psi_{\max}},\\
\tilde{X}_{r}^{2}\prod_{i\in N_{r}}\tilde{Z}_{r}^{2}\ket{\psi_{\max}}&=\tilde{X}_{r}\prod_{i\in N_{r}}\tilde{Z}_{r}\ket{\psi_{\max}},\\
\tilde{X}_{r}^{2}\ket{\psi_{\max}}&=\ket{\psi_{\max}}.\\
\end{aligned}
\end{equation}
However, since we assume that the reduced state of $\ket{\psi_{\max}}$ is full-rank, the above implies that $\tilde{X}_{r}^{2}=\mathbb{1}$. One can prove $\tilde{Z}_{r}^{2}=\mathbb{1}$ in the same way by considering a stabilising relation where a neighbour of $r$ acts with $\tilde{X}_{i}$.

Next, let us show that $\{\tilde{X_{i}},\tilde{Z}_{i}\}=0$. First, from construction, we have that $\{\tilde{X}_{r},\tilde{Z}_{r}\}=0$. For any neighbour $i\in N_{r}$ we can use this fact as follows
\begin{equation}
\begin{aligned}
\{\tilde{X}_{i},\tilde{Z}_{i}\}\ket{\psi_{\max}}&=(\tilde{X}_{r}\prod_{j\in N_{r}\setminus \{i\}}\tilde{Z}_{j}\cdot\tilde{Z}_{r}\prod_{j\in N_{i}\setminus\{r\}}\tilde{Z}_{j}+\tilde{Z}_{r}\prod_{j\in N_{i}\setminus\{r\}}\tilde{Z}_{j}\cdot\tilde{X}_{r}\prod_{j\in N_{r}\setminus \{i\}}\tilde{Z}_{j})\ket{\psi_{\max}}\\
&=\prod_{j\in (N_{r}\triangle N_{j})\setminus\{r,j\}} \tilde{Z}_{j} \cdot \{\tilde{X}_{r},\tilde{Z}_{r}\}\ket{\psi_{\max}}=0,
\end{aligned}
\end{equation}
where $N_{r}\triangle N_{j}=N_{r}\cup N_{j}\setminus N_{r}\cap N_{j}$. Due to our assumption, that the reduced state $\operatorname{Tr}_{[n]\setminus\{i\}}[\ket{\psi_{\max}}\!\bra{\psi_{\max}}]$ is full-rank for all parties $i\in[n]$, we have that $\{\tilde{X}_{i},\tilde{Z}_{i}\}=0$. Importantly, the same idea can now be repeated again for the neighbours of neighbours of $r$, and so on. And since we assume that $G$ is connected, we can use this procedure to show $\{\tilde{X}_{i},\tilde{Z}_{i}\}=0$ for all $i\in [n]$.

Since all of the conditions of Lemma \ref{lem: s-t} are satisfied, it follows that there exist a unitary $U=\bigotimes_{i=1}^{n}U_{i}$ such that
\begin{equation}
U \tilde{X}_{i} U^{\dagger} = X_{i}\otimes \mathbb{1}_{i},\quad U \tilde{Z}_{i} U^{\dagger} = Z_{i}\otimes \mathbb{1}_{i},
\end{equation}
for all $i\in [n]$. Applying this to Eq. \eqref{eq: stabilizing_observables} yields
\begin{equation}
 X_{i}\prod_{j\in N_{i}}Z_{j}\bigotimes_{k\in N_{i}\cup\{i\}}\mathbb{1}_{k}\ket{\phi}=\ket{\phi},
\end{equation}
where $\ket{\phi}=U\ket{\psi_{\max}}$. The above relations uniquely identify a state acting on $(\mathbb{C}^{2})^{\otimes n}$ as the graph state $\ket{G}$, which means that $U\ket{\psi_{\max}}=\ket{G}\ket{\textrm{aux}}$ for some $\ket{\textrm{aux}}\in \bigotimes_{i=1}^{n}\mathcal{H}_{i}'$, which ends the proof.

\end{proof}

\section{Proof of Theorem \ref{thm: clique}}\label{app: clique}

In this section, we prove Theorem \ref{thm: clique}. To this end, we will repeatedly utilise Lemma \ref{lem: general_ineq_expected}, which offers a challenge. The upper bound this lemma provides also gives a construction of the upper-bounding observable. However, the construction relies on identifying observables $O_{0,\vec{x}},O_{1;\vec{y}},O_{2;\vec{z}}$ for each application of Lemma \ref{lem: general_ineq_expected}, which with the large number of applications becomes cumbersome. An alternative way of identifying the upper-bounding observable is to construct it directly via an application of the substitution \eqref{eq: substitution_graph_state_ineq} to a product of stabilizing operators. This is the subject of the following lemma.
\begin{lem}\label{lem:observable_from_pauli}
Let $h_{0}$ and $h_{1}$ be Pauli strings containing at most one of $X$ and $Z$, at each site. Moreover, let $\tilde{h}_{0}$, $\tilde{h}_{1}$ be observables constructed from $h_{0}$ and $h_{1}$ via the substitution \eqref{eq: substitution_graph_state_ineq}, and let us assume that we can apply Lemma \ref{lem: general_ineq_expected} to the sum of $\langle \tilde{h}_{0}\rangle$ and $ \langle \tilde{h}_{1}\rangle$. Then, in the resulting upper bound
\begin{equation}
\gamma_{1}\langle \tilde{h}_{0}\rangle + \gamma_{0}\langle \tilde{h}_{1}\rangle\leqslant\langle \tilde{h}_{2}\rangle +\gamma_{0}\gamma_{1},
\end{equation}
the observable $\tilde{h_{2}}$ is constructed via the substitution \eqref{eq: substitution_graph_state_ineq} from the Pauli string $h_{2}=h_{0}h_{1}$, where $\gamma_{0},\gamma_{1}\in {1,2}$ are coefficients chosen based on the presence of the rotated party $r$ in $\tilde{h}_{0}$ and $\tilde{h}_{1}$ - if $\tilde{h}_{i}$ acts nontrivially on $r$ then $\gamma_{i}=2$, otherwise $\gamma_{i}=1$.
\end{lem}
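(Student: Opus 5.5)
The plan is to make the application of Lemma \ref{lem: general_ineq_expected} explicit and then compare, site by site, the resulting observable with the substitution \eqref{eq: substitution_graph_state_ineq} applied to $h_{0}h_{1}$.

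\emph{Identifying the three collective observables.} The hypothesis that Lemma \ref{lem: general_ineq_expected} applies to $\langle\tilde h_0\rangle+\langle\tilde h_1\rangle$ means we can write $\tilde h_0 = O_0 O_1$ and $\tilde h_1 = O_0 O_2$ with $O_0,O_1,O_2$ supported on disjoint sets of parties. Concretely, $O_0$ collects the sites where $\tilde h_0$ and $\tilde h_1$ carry the \emph{same} single-party factor; $O_1$ and $O_2$ collect the remaining sites of $\tilde h_0$ and of $\tilde h_1$, respectively. Since each $A_{i;j}^2=\mathbb{1}$, the bounding observable $O_1O_2$ equals the reduced form of $\tilde h_0\tilde h_1$.

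\emph{Non-rotated sites.} For $i\neq r$ the substitution is multiplicative site-wise: $X_i\mapsto A_{i;0}$ and $Z_i\mapsto A_{i;1}$. A common factor squares to $\mathbb{1}$ on both sides, matching $P^2=\mathbb{1}$ in $h_0h_1$. A site present in only one string contributes the same factor to $O_1O_2$ and to $\tilde h_2$. A site where one string has $X$ and the other $Z$ cannot occur, because disjointness of the supports forbids it; this is exactly why applicability of the lemma is assumed. Products of Pauli strings may pick up phases $\pm1,\pm i$, but since at shared sites the two strings agree, the product $h_0h_1$ carries no phase at non-rotated sites.

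\emph{The rotated party and the coefficients.} Three cases arise according to whether $\tilde h_0$ and $\tilde h_1$ act on $r$.
\begin{itemize}
\item If neither acts on $r$, take $\gamma_0=\gamma_1=1$ and use $\alpha_0=\alpha_1=1$ in Lemma \ref{lem: general_ineq_expected}; this gives the constant $1=\gamma_0\gamma_1$ and $h_2=h_0h_1$.
\item If exactly one acts on $r$, say $\tilde h_0$ contains $(A_{r;0}\pm A_{r;1})$, split it into its two terms and apply the lemma once to each term together with $\langle\tilde h_1\rangle$ with unit weights, as in Corollary \ref{cor:bound_2}. This is consistent with the weights stated in the lemma: $\gamma_1\langle\tilde h_0\rangle$ with $\gamma_1=1$ expands into two terms, while $\gamma_0\langle\tilde h_1\rangle$ with $\gamma_0=2$ supplies one copy of $\langle\tilde h_1\rangle$ to each application. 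The two bounds sum to $\langle(A_{r;0}\pm A_{r;1})\cdots\rangle + 2$, which is the substituted $h_0h_1$ (since $r\notin h_1$) with constant $\gamma_0\gamma_1=2$.
\item If both act on $r$, they must carry different Paulis there, $X_r$ and $Z_r$, since an identical factor would simply cancel. Expand both into four terms $A_{r;a}\cdots$, group them so that each grouping shares $A_{r;a}$ inside $O_0$, and apply the lemma twice with weights $\pm$, exactly as in Corollary \ref{cor:bound_2}. The shared $A_{r;a}$ cancels, the constant becomes $4=\gamma_0\gamma_1$, and the bounding observable no longer acts on $r$. This agrees with the reduced substitution, since $X_rZ_r\propto Y_r$ is removed from $h_2$.
\end{itemize}

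\emph{Main obstacle.} The delicate step is the third case. There one must verify that the signs $(-1)^k$ and the Pauli phase $X_rZ_r=-iY_r$ combine with the reduced-form convention into the correct overall sign of $\tilde h_2$. One must also check that the grouping in the lemma's application is legitimate, that is, the supports stay disjoint. I would handle this by tracking the sign exactly as in the second application in the proof of Corollary \ref{cor:bound_2}, where $O_2$ is replaced by $-O_2$.
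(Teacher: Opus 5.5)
Your cases in which at most one of $\tilde h_0,\tilde h_1$ acts on $r$ coincide with the paper's proof. There, $O_0$ collects the common factors and $O_1,O_2$ the leftover factors of each string. Applicability forces $R_{0;0}\cap R_{1;1}=R_{1;0}\cap R_{0;1}=\emptyset$, so $O_1O_2$ is the substitution of $h_0h_1$. When $\tilde h_0$ contains $A_{r;0}\pm A_{r;1}$, one splits it and applies Lemma~\ref{lem: general_ineq_expected} twice, each time with one copy of $\langle\tilde h_1\rangle$, giving $\langle\tilde h_0\rangle+2\langle\tilde h_1\rangle\leqslant\langle\tilde h_2\rangle+2$. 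The paper stops there, treating $\tilde h_1$ acting on $r$ as symmetric. That is all the proof of Theorem~\ref{thm: clique} needs, since $\tilde c_e$, $\tilde c_o$ and $\tilde g_{t_u}$ for $u\geqslant 2$ never act on $r$.

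Your third case, however, fails, because the inequality you assert there is false.
\begin{itemize}
\item \emph{Different Paulis at $r$.} Take $h_0=X_rZ_a$ and $h_1=Z_rZ_b$ with $a\neq b$. Your claim reads $2\langle(A_{r;0}+A_{r;1})A_{a;1}\rangle+2\langle(A_{r;0}-A_{r;1})A_{b;1}\rangle\leqslant\pm\langle A_{a;1}A_{b;1}\rangle+4$, whichever sign your tracking produces. The deterministic assignment $A_{r;0}=A_{r;1}=A_{a;1}=1$, $A_{b;1}=\mp1$ gives $4\leqslant 3$. What your two applications of Lemma~\ref{lem: general_ineq_expected} actually yield is Corollary~\ref{cor:bound_2}. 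The contributions $+\langle O_1O_2\rangle$ and $-\langle O_1O_2\rangle$ cancel, leaving $\langle\tilde h_0\rangle+\langle\tilde h_1\rangle\leqslant 2$ with no bounding observable at all. This matches the fact that $X_rZ_r=-iY_r$ lies outside the domain of the substitution \eqref{eq: substitution_graph_state_ineq}; there is no convention in the paper that simply deletes $Y_r$.
\item \emph{Equal Paulis at $r$.} Your dismissal of this case is also wrong. $A_{r;0}\pm A_{r;1}$ is not a $\pm1$ observable and cannot be placed in $O_0$. Splitting it gives $\langle\tilde h_0\rangle+\langle\tilde h_1\rangle\leqslant 2\langle\tilde h_2\rangle+2$. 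The stated form $2\langle\tilde h_0\rangle+2\langle\tilde h_1\rangle\leqslant\langle\tilde h_2\rangle+4$ is violated by $h_0=X_rZ_a$, $h_1=X_rZ_b$ with all outcomes $+1$, where $8>5$.
\end{itemize}
So when both observables act on $r$ the statement is either ill-posed or false. You should restrict the lemma to at most one of $\tilde h_0,\tilde h_1$ acting on $r$, as the paper does, and drop the third case rather than try to repair its signs.
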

\begin{proof}
Let $h_{0}=\prod_{i\in R_{0;0}}X_{i}\prod_{j\in R_{0;1}}Z_{j}$, and $h_{1}=\prod_{i\in R_{1;0}}X_{i}\prod_{j\in R_{1;1}}Z_{j}$ for some sets $R_{i,j}$ with $i,j\in\{0,1\}$. Since we assume that 
$h_{0}$ and $h_{1}$ do not contain a matrix product of $X$ and $Z$ we have that $R_{i;0}\cap R_{i;1}=\emptyset$ for all $i\in\{0,1\}$. Moreover, let us first consider the case where $\tilde{h}_{0}$ and $\tilde{h}_{1}$ do not contain $(A_{r;0}\pm A_{r;1})$. Then the observables are given as
\begin{equation}
\tilde{h}_{0}=\prod_{i\in R_{0;0}}A_{i;0}\prod_{j\in R_{0;1}}A_{j;1},\quad \tilde{h}_{1}=\prod_{i\in R_{1;0}}A_{i;0}\prod_{j\in R_{1;1}}A_{j;1}.
\end{equation}
For Lemma \ref{lem: general_ineq_expected} to be applicable to $\langle \tilde{h}_{0}\rangle + \langle \tilde{h}_{1}\rangle$, we need that $R_{i;0}\cap R_{1-i;1}=\emptyset$ for all $i\in\{0,1\}$, since otherwise there would not exist a splitting of parties into three disjoint sets, or the observables would not be well-defined. To use Lemma \ref{lem: general_ineq_expected} we take the three observables as
\begin{equation}\label{eq:three_O_def}
O_{0,\vec{x}}=\prod_{i\in R_{0;0}\cap R_{1;0}}A_{i;0}\prod_{j\in R_{0;1}\cap R_{1;1}}A_{j;1},\; O_{1;\vec{y}}=\prod_{i\in R_{0;0}\setminus R_{1;0}}A_{i;0}\prod_{j\in R_{0;1}\setminus R_{1;1}}A_{j;1},\; O_{2;\vec{z}}=\prod_{i\in R_{1;0}\setminus R_{0;0}}A_{i;0}\prod_{j\in R_{1;1}\setminus R_{0;1}}A_{j;1}.
\end{equation}
Since $\tilde{h}_{2}=O_{1,\vec{y}}O_{2,\vec{z}}$, we get
\begin{equation}
\tilde{h}_{2}=\prod_{i\in R_{0;0}\Delta R_{1;0}}A_{i;0}\prod_{j\in R_{0;1}\Delta R_{1;1}} A_{j;1},
\end{equation}
where $\Delta$ is a symmetric difference. On the other hand, the product $h_{0}h_{1}$ equals
\begin{equation}
h_{0}h_{1}=\prod_{i\in R_{0;0}}X_{i}\prod_{j\in R_{0;1}}Z_{j}\prod_{i\in R_{1;0}}X_{i}\prod_{j\in R_{1;1}}Z_{j}=\prod_{i\in R_{0;0}\Delta R_{1;0}}X_{i}\prod_{j\in R_{0;1}\Delta R_{1;1}}Z_{j},
\end{equation}
where we used the fact that $X^{2}=Z^{2}=\mathbb{1}$, and that by assumption $R_{i;0}\cap R_{1-i;1}=\emptyset$ for all $i\in\{0,1\}$. Application of the substitution \eqref{eq: substitution_graph_state_ineq} to the above finishes the first part of the proof. 

Next, let us consider the case where $\tilde{h}_{0}$ contains $(A_{r;0}\pm A_{r;1})$. This requires us to use Lemma \ref{lem: general_ineq_expected} twice: once for each $A_{r;i}$. Let $\tilde{h}_{0,i}$ for $i\in\{0,1\}$ be observables defined as
\begin{equation}
\tilde{h}_{0,0}=A_{r;0}\prod_{i\in R_{0;0}\setminus\{r\}}A_{i;0}\prod_{j\in R_{0;1}\setminus\{r\}}A_{j;1},\quad \tilde{h}_{0,1}=A_{r;1}\prod_{i\in R_{1;0}\setminus\{r\}}A_{i;0}\prod_{j\in R_{1;1}\setminus\{r\}}A_{j;1},
\end{equation}
which gives us $\tilde{h}_{0,0}\pm \tilde{h}_{0,1}=\tilde{h}_{0}$. To use Lemma \ref{lem: general_ineq_expected} we take the three observables as defined in Eq. \eqref{eq:three_O_def}. Then we get
\begin{equation}
\begin{aligned}
\langle \tilde{h}_{0,0}\rangle+\langle\tilde{h}_{1}\rangle \leqslant \langle\tilde{h}_{2,0}\rangle +1,\qquad \pm\langle \tilde{h}_{0,1}\rangle+\langle\tilde{h}_{1}\rangle \leqslant \pm\langle\tilde{h}_{2,1}\rangle +1,
\end{aligned}
\end{equation}
where
\begin{equation}
\tilde{h}_{2,0}=A_{r;0}\prod_{i\in (R_{0;0}\Delta R_{1;0})\setminus\{r\}}A_{i;0}\prod_{j\in (R_{0;1}\Delta R_{1;1})\setminus \{r\}} A_{j;1},\quad \tilde{h}_{2,1}=A_{r;1}\prod_{i\in (R_{0;0}\Delta R_{1;0})\setminus\{r\}}A_{i;0}\prod_{j\in (R_{0;1}\Delta R_{1;1})\setminus \{r\}} A_{j;1}.
\end{equation}
Then we take $\tilde{h}_{2}=\tilde{h}_{2,0}\pm \tilde{h}_{2,1}$, and as in the previous case, we can see that $\tilde{h}_{2}$ is equal to an observable derived from $h_{0}h_{1}$ under the substitution \eqref{eq: substitution_graph_state_ineq}. The case in which $\tilde{h}_{1}$ acts nontrivially on $r$ can be proven in the same way. This ends the proof.
\end{proof}

We can now move to the proof of the Theorem. Let us restate it below for completeness.

\begin{thm}
Let $G$ be a graph such that there exists a vertex $r$ for which the induced subgraph of its neighbourhood $N_{r}$ forms a complete graph. Then, we construct the following functional associated with $G$
\begin{equation}\label{eq: gmnl_ineq_clique_app}
\sum_{\substack{S\subseteq \dot{N}_{r}\\ |S|\textrm{ - odd}}} \langle \tilde{f}_{S}\rangle +2^{|\dot{N}_{r}|-1} \sum_{k\in [n]\setminus \dot{N}_{r}}\langle\tilde{g}_{k}\rangle,
\end{equation}
where $\dot{N}_{r}=N_{r}\cup\{r\}$, and $f_{S}=\prod_{i\in S}g_{i}$ which under Eq. \eqref{eq: substitution_graph_state_ineq} yields
\begin{equation}
\tilde{f}_{S}= (-1)^{\frac{|S|-1}{2}}(A_{r;0}+\xi A_{r;1})\prod_{i\in S\setminus\{r\}}A_{i;0}\prod_{j\in \underset{k\in S}{\Delta}N_{k}\setminus\{r\}}A_{j;1}
\end{equation}
with $\xi=1$ if $r\in S$ and $\xi=-1$ for $r\notin S$. For this functional we have $\beta_{BL}=2^{|\dot{N}_{r}|-1}(n-|\dot{N}_{r}|+1)$ and $\beta_{Q}=2^{|\dot{N}_{r}|-1}(n-|\dot{N}_{r}|+\sqrt{2})$.
\end{thm}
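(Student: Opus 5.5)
We follow the structure of the proof of Theorem \ref{thm: gmnl_standard}. Write $m=|N_{r}|$, so $|\dot N_r|=m+1$. The proof has three parts: the quantum bound, the bi-local bound for cuts through the core, and the bi-local bound for cuts that leave the core intact.

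\textbf{Quantum bound.} We split the functional into the core $C=\sum_{S\subseteq\dot N_r,\,|S|\text{ odd}}\tilde f_S$ and the extension.

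First we show that $C$ is, up to relabelling of settings, an $(m+1)$-partite Svetlichny operator on the clique $\dot N_r$, multiplied by observables $A_{j;1}$ on parties outside $\dot N_r$. Because $N_r$ is a clique, for $S\subseteq\dot N_r$ the symmetric difference $\Delta_{k\in S}N_k$ restricted to $\dot N_r$ is determined by the parity of $|S|$ alone. Each party $i\in N_r$ therefore carries $A_{i;0}$ or $A_{i;1}$ depending on whether $i\in S$. The resulting sign pattern $(-1)^{(|S|-1)/2}$ together with the $\xi$ factor reproduces the recursive form \eqref{eq: svetlichny}. The extra $A_{j;1}$ factors on outside parties are unitary $\pm1$-valued operators, so they do not change the operator-norm bound.

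With this identification, the known Svetlichny quantum bound gives $\langle C\rangle\le 2^{m}\sqrt2$. Each extension term satisfies $\langle\tilde g_k\rangle\le 1$. Together these give $\beta_Q\le 2^{m}(n-m-1+\sqrt2)$.

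For achievability, we use $\ket{G}$ with the observables \eqref{eq: observable_choice}. Under these, each $\tilde f_S$ equals $\sqrt2\, f_S$, which has expectation $\sqrt2$ on $\ket{G}$. There are $2^{m}$ odd subsets, so the core contributes $2^m\sqrt2$. Each $\tilde g_k$ equals $g_k$ and contributes $1$. This gives exactly $2^{m}(n-m-1+\sqrt2)$.

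\textbf{Bi-local bound, cuts through the core.} Fix a bipartition $Q|\overline Q$ with $r\in Q$ and product behaviours \eqref{eq:product_behaviour}. If $\dot N_r$ is split by the cut, the Svetlichny core restricted to product behaviours across this cut obeys the bi-local Svetlichny bound $2^{m}$. The extra $\pm1$ factors on outside parties can be absorbed by flipping the local response functions, so the bound is unaffected. Bounding each extension term by $2^{m}$ then gives $2^{m}(n-m)$, which is the claimed $\beta_{BL}$.

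\textbf{Bi-local bound, core intact.} This is the main obstacle: $\dot N_r\subseteq Q$ and some $l\in\overline Q$. We take the shortest path $r=t_0,t_1,\dots,t_m'=l$ and use Lemma \ref{lem:observable_from_pauli} repeatedly, as in the second case of Theorem \ref{thm: gmnl_standard}.

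The idea is to absorb the generators $g_{t_2},\dots,g_{t_{m'}}$ into the core, multiplying every $f_S$ by a common product of path generators. Each absorption uses an extension term with weight $2^{m}$, split as $2^{m-1}$ copies, one for each of the $2^{m}$ core terms paired appropriately. Each absorption costs an additive constant of the same weight. The target is a new core that is again a Svetlichny-type functional in which $l$ replaces a party of $N_r$ (namely $t_1$), so that $l$ sits across the cut. That new core is bounded by $2^m$, and summing the additive constants and the remaining extension terms gives $2^m(n-m)$.

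The delicate points are twofold. First, the pairing must be chosen so that the coefficients $\gamma$ from Lemma \ref{lem:observable_from_pauli} (equal to $2$ whenever $r$ is involved) match the prefactor $2^{m}$. Second, every application of Lemma \ref{lem: general_ineq_expected} must be well defined, meaning no party appears with both settings. For the second point we rely on the shortest-path property $t_i\notin N_{t_j}$ for $|i-j|\ge2$, which is exactly what made the analogous step work in Theorem \ref{thm: gmnl_standard}. If the direct pairing fails, the fallback is to first apply Lemma \ref{lem: general_ineq_expected} to collapse the core to a CHSH-like pair on $r$ and $t_1$, and then run the Theorem \ref{thm: gmnl_standard} argument with weights scaled by $2^{m-1}$.

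\textbf{Saturation.} The deterministic all-$+1$ local assignment, with signs fixed as in Theorem \ref{thm: gmnl_standard}, saturates $\beta_{BL}=2^{m}(n-m)$.
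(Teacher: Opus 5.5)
\textbf{The gap: the case $\dot N_r\subseteq Q$.} It sits exactly at the step you flagged as delicate.

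Your concrete plan is to multiply every $f_S$ by one common product of path generators. This fails immediately. If $t_1\in S$, then $f_S$ carries $X_{t_1}$ while $g_{t_2}$ carries $Z_{t_1}$, so Lemma~\ref{lem: general_ineq_expected} (equivalently Lemma~\ref{lem:observable_from_pauli}) cannot be applied.

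The paper proceeds differently. It splits the core according to whether $t_1\in S$. It absorbs $c_o=\prod_{i\in T_1}g_i$, with $T_1=\{t_3,t_5,\dots\}$, into the terms with $t_1\in S$, and $c_e=\prod_{i\in T_0}g_i$, with $T_0=\{t_2,t_4,\dots\}$, into those with $t_1\notin S$. This is done twice per term, once for each $A_{r;k}$, which uses up exactly the weight $2^{|N_r|}$. It then closes with Corollary~\ref{cor:bound_2}, with $l$ now acting as $r$'s CHSH partner.

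Your fallback does not help. Lemma~\ref{lem: general_ineq_expected} only turns two correlators of weight at most one into one such correlator plus $1$. So it cannot produce a CHSH pair of weight $2^{|N_r|-1}$, and it leaves additive constants that no extension term pays for.

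\textbf{The statement itself fails here.} The shortest-path property only constrains vertices on the path. It does not stop $t_2$ from having a second neighbour $a\in N_r\setminus\{t_1\}$. In that case $\tilde f_{\{a\}}$ contains $A_{a;0}A_{t_2;1}$ while $\tilde c_e$ contains $A_{a;1}A_{t_2;0}$, so the paper's step breaks as well. This cannot be repaired, because the claimed bound is then false.

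Take $n=4$, $E=\{(1,2),(1,3),(2,3),(2,4),(3,4)\}$ and $r=1$. Then $N_1=\{2,3\}$ is a clique, and the functional is
\[
\langle (A_{1;0}+A_{1;1})(A_{2;1}A_{3;1}-A_{2;0}A_{3;0})\rangle+\langle (A_{1;0}-A_{1;1})(A_{2;0}A_{3;1}+A_{2;1}A_{3;0})A_{4;1}\rangle+4\langle A_{2;1}A_{3;1}A_{4;0}\rangle ,
\]
with claimed $\beta_{BL}=8$ and $\beta_Q=4+4\sqrt2$.

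Construct the behaviour as follows. Party $4$ outputs $+1$ for both settings. Parties $1,2,3$ share this box:
\begin{itemize}
\item for $(x_2,x_3)=(1,1)$, $A_{2;1}=A_{3;1}$ is a uniform random sign and $A_{1;x_1}$ is an independent uniform sign;
\item otherwise the outcomes are uniform subject to $A_{1;x_1}A_{2;x_2}A_{3;x_3}$ equalling that correlator's sign in the functional: $-1$ for $(0,0)$, and $(-1)^{x_1}$ for $(0,1)$ and $(1,0)$.
\end{itemize}
All one- and two-party marginals are uniform, except the $(2,3)$ marginal at $(1,1)$, which does not depend on $x_1$. Hence the box is non-signalling, and the behaviour is local across $\{4\}|\{1,2,3\}$. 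Its value is $6+0+4=10$. This exceeds $8$, and even exceeds $\beta_Q\approx 9.66$.

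So the core-intact case needs an extra hypothesis on $G$; the paper's argument tacitly uses $N_{t_2}\cap N_r=\{t_1\}$. No completion of your sketch can prove the statement as given.

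\textbf{Smaller points.} Your quantum bound and core-cut case are legitimate alternatives to the paper's sum of squares and Corollary~\ref{cor:bound_2} pairing, since you quote the known Svetlichny bounds instead. However, three things need fixing.
\begin{itemize}
\item The factors $A_{j;1}$, $j\notin\dot N_r$, depend on $S$, so ``they are unitary'' is not a valid reason. What works is to condition on, or jointly diagonalise, these setting-$1$ outcomes $\epsilon_j$. Term $S$ then acquires the sign $\prod_{k\in S}\eta_k$ with $\eta_k=\prod_{j\in N_k\setminus\dot N_r}\epsilon_j$ (note $\eta_r=1$). This is a local relabelling $A_{k;0}\to\eta_kA_{k;0}$.
\item Match the core to the Mermin-polynomial form of Svetlichny's operator. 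The recursion as printed gives $I_3'=I_3$ and degenerates for $n\ge4$.
\item The assignments you name, all $+1$ or $A_{r;1}=-1$ with the rest $+1$, give $2\,\mathrm{Re}(1+i)^{|N_r|}$ and $2\,\mathrm{Im}(1+i)^{|N_r|}$ respectively on the core. Both fall short of $2^{|N_r|}$ once $|N_r|\ge3$. To saturate, use a Svetlichny-optimal deterministic strategy on $\dot N_r$, set $A_{j;1}=+1$ outside, and choose each $A_{k;0}$, $k\notin\dot N_r$, so that $\tilde g_k=1$.
\end{itemize}
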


\begin{proof}
Before we proceed with the proof of the bounds for bi-local and quantum behaviours, we first need to show that $\tilde{f}_{S}$ are well-defined observables. This is because, they are constructed via a substitution \eqref{eq: substitution_graph_state_ineq} which works only if the Pauli string is composed solely of $X$ and $Z$ and not their product. Therefore, we need to prove that this is the case for $f_{S}$.

Let $f_{S}^{\{j\}}$ be a truncation of $f_{S}$ to the qubit $j$, i.e., the part of $f_{S}$ that acts on qubit $j$. Then for all $j\notin S$ we have 
\begin{equation}
f_{S}^{\{j\}}=\begin{cases}
Z &\textrm{if}\quad  j\in\underset{i\in S}{\Delta}N_{i}, \\
\mathbb{1} &\textrm{if}\quad  j\notin\underset{i\in S}{\Delta}N_{i} \cup S,
\end{cases}
\end{equation}
where we used the fact that $S\cap \underset{i\in S}{\Delta}N_{i}=\emptyset$ which follows from $|S|$ being odd and from $N_{r}$ being a clique. Therefore, the substitution \eqref{eq: substitution_graph_state_ineq} on $j\notin S$ is well-defined. 

Next, let us consider $f_{S}^{S}$, so the part of $f_{S}$ that is acting on qubits from $S$. Let us denote $S= \{s_{j}\}_{j=1}^{|S|}$. Then, since $|S|$ is odd, we have
\begin{equation}
f_{S}^{S}=\prod_{j=1}^{|S|}Z_{s_{j}}^{j-1}X_{s_{j}}Z^{|S|-j}_{s_{j}}=\prod_{j=1}^{|S|}(-1)^{j-1}X_{s_{j}}=(-1)^{\sum_{k=0}^{|S|-1}k}\prod_{j=1}^{|S|}X_{s_{j}}=(-1)^{\frac{(|S|-1)|S|}{2}}\prod_{j=1}^{|S|}X_{s_{j}}=(-1)^{\frac{|S|-1}{2}}\prod_{j=1}^{|S|}X_{s_{j}},
\end{equation}
and so the substitution \eqref{eq: substitution_graph_state_ineq} is well defined on $j\in S$. Taken all together $\tilde{f}_{S}$ equals
\begin{equation}\label{eq:f_tilde}
\tilde{f}_{S}= (-1)^{\frac{|S|-1}{2}}(A_{r;0}+\xi A_{r;1})\prod_{i\in S\setminus\{r\}}A_{i;0}\prod_{j\in \underset{k\in S}{\Delta}N_{k}\setminus\{r\}}A_{j;1},
\end{equation}
where
\begin{equation}
\xi = \begin{cases}
1 \quad \textrm{if}\quad r\in S,\\
-1\quad \textrm{if}\quad r\notin S.\\
\end{cases}
\end{equation}
In the next step of the proof, we show that the bi-local bound for the functional \eqref{eq: gmnl_ineq_clique_app} equals $\beta_{BL}=2^{|\dot{N}_{r}|-1}+2^{|\dot{N}_{r}|}(n-|\dot{N}_{r}|)$. As in the proof of Theorem \ref{thm: gmnl_standard}, here we also optimise over bipartitions $Q|\overline{Q}$ and behaviours that are product across $Q|\overline{Q}$ (see Eq. \eqref{eq:product_behaviour} for the definition). For such behaviors, we have that the functional $\langle I\rangle$ given by Eq. \eqref{eq: gmnl_ineq_clique_app} is expressed as 
\begin{equation}
\langle I\rangle=\sum_{\substack{S\subseteq \dot{N}_{r}\\ |S|\textrm{ - odd}}} \left\langle \tilde{f}_{S}^{Q}\right\rangle_{Q}\left\langle \tilde{f}_{S}^{ \overline{Q}}\right\rangle_{\overline{Q}} +2^{|\dot{N}_{r}|-1} \sum_{k\in [n]\setminus \dot{N}_{r}}\left\langle\tilde{g}_{k}^{Q}\right\rangle_{Q}\left\langle\tilde{g}_{k}^{\overline{Q}}\right\rangle_{\overline{Q}},
\end{equation}
where $\langle \cdot \rangle_{Q}$ is evaluated over $p(\mathbf{a}_{Q}|\mathbf{x}_{Q})$ and likewise for $\langle \cdot \rangle_{\overline{Q}}$, and the upper index $Q/\overline{Q}$ denotes the truncation of a given operator to observables acting only on the set of parties $Q/\overline{Q}$. Without loss of generality, we assume that $r\in Q$.

We split the rest of the proof into two cases: $\dot{N}_{r}\cap Q \neq \dot{N}_{r}$ and $\dot{N}_{r}\cap Q = \dot{N}_{r}$, which corresponds to the scenario in which $Q|\overline{Q}$ either cuts across the core of the inequality or it does not. We divide it in such a way, as we can only use Corollary \ref{cor:bound_2} to derive a non-trivial bound for a behaviour local across $Q|\overline{Q}$ if the bipartition cuts across the core of the inequality. This is true for the first case, but not for the second. Therefore, in the first case we can apply our proof directly, whereas for the second case we first need to identify a party $l$ that is in a different partition than the core. Then, using Lemma \ref{lem: general_ineq_expected}, we "move" this party to the core by upper bounding the original inequality by another one in which some of the parties in the core are replaced by $l$, enabling us to use the same strategy as for the first part of the proof.

We begin with the former. For this case, we can upper bound all expected values $\langle\tilde{g}_{k}\rangle$ by $1$. This leaves us with the following expression
\begin{equation}\label{eq:sum_S_1}
\langle I\rangle\leqslant\sum_{\substack{S\subseteq \dot{N}_{r}\\ |S|\textrm{ - odd}}} \langle \tilde{f}_{S}^{Q}\rangle_{Q}\langle \tilde{f}_{S}^{\overline{Q}}\rangle_{\overline{Q}} + 2^{|\dot{N}_{r}|-1}(n-|\dot{N}_{r}|).
\end{equation}
To prove the upper bound for this expression, we wish to apply Corollary \ref{cor:bound_2} to pairs of terms in the above sum. To this end, we first write down the contribution of party $r$ explicitly:
\begin{equation}\label{eq:sum_S_2}
\langle I\rangle\leqslant\sum_{\substack{S\subseteq N_{r}\\ |S|\textrm{ - even}}} \langle (A_{r;0}+A_{r;1})\tilde{f}_{S\cup\{r\}}^{Q\setminus\{r\}}\rangle_{Q}\langle \tilde{f}_{S\cup\{r\}}^{\overline{Q}}\rangle_{\overline{Q}} +\sum_{\substack{S\subseteq N_{r}\\ |S|\textrm{ - odd}}} \langle (A_{r;0}-A_{r;1})\tilde{f}_{S}^{Q\setminus\{r\}}\rangle_{Q}\langle \tilde{f}_{S}^{\overline{Q}}\rangle_{\overline{Q}} + 2^{|\dot{N}_{r}|-1}(n-|\dot{N}_{r}|).
\end{equation}
where the first sum originates from the sets $S$ in Eq. \eqref{eq:sum_S_1} that contain $r$ and the second one from the sets that do not contain $r$. Notice that, by Eq. \eqref{eq:f_tilde}, the splitting separates exactly all the terms in which the rotated party contributes with $A_{r;0} + A_{r;1}$" and $A_{r;0} - A_{r;1}$. Notice the change of the condition on $|S|$ in the first sum - this is due to the fact that if the original set of odd cardinality contained $r$, then taking this party out of the set makes its cardinality even.

The next step is to find a pairing between terms from the first and the second sum that allows us to use Corollary \ref{cor:bound_2}. To satisfy the condition of this corollary, we can apply the following strategy: we pick the terms such that the subset of parties that act with $A_{i;0}$ in $\tilde{f}_{S\cup\{r\}}^{Q\setminus\{r\}}$ (from the first sum) and in $\tilde{f}_{S}^{Q\setminus\{r\}}$ (from the second sum) are the same in both cases. This makes it so that sets $D_{1;0}$ and $D_{2;0}$ from Corollary \ref{cor:bound_2} are empty, which implies that it is always possible to split the parties into sets $D_{i;j}$, allowing us to use Corollary \ref{cor:bound_2}. This can be achieved by utilising Eq. \eqref{eq:f_tilde}, from which it follows that parties acting with $A_{i;0}$ in $\tilde{f}_{S\cup\{r\}}^{Q\setminus\{r\}}$ and $\tilde{f}_{S}^{Q\setminus\{r\}}$ belong to the set $S\cap Q$.

To find such a pairing of the terms from both sums in Eq. \eqref{eq:sum_S_2}, let us first note that by our assumption $\dot{N}_{r}\cap Q\neq \dot{N}_{r}$, there has to exist a party $l$ such that $l\in \dot{N}_{r}$ and $l\in\overline{Q}$. Writing down the contribution of party $l$ explicitly yields
\begin{equation}\label{eq:two_terms_ghz}
\begin{aligned}
\langle I\rangle\leqslant&\sum_{\substack{S\subseteq N_{r}\setminus\{l\}\\ |S|\textrm{ - even}}} \left(\langle(A_{r;0}+A_{r;1}) \tilde{f}_{S\cup\{r\}}^{Q\setminus\{r\}}\rangle_{Q}\langle A_{l;1} \tilde{f}_{S\cup\{r\}}^{\overline{Q}\setminus\{l\}}\rangle_{\overline{Q}}+\langle(A_{r;0}-A_{r;1}) \tilde{f}_{S\cup\{l\}}^{Q\setminus\{r\}}\rangle_{Q}\langle A_{l;0}\tilde{f}_{S\cup\{l\}}^{\overline{Q}\setminus\{l\}}\rangle_{\overline{Q}}\right)\\
&+\sum_{\substack{S\subseteq N_{r}\setminus\{l\}\\ |S|\textrm{ - odd}}}\left( \langle (A_{r;0}+A_{r;1})\tilde{f}_{S\cup\{r,l\}}^{Q\setminus\{r\}}\rangle_{Q}\langle A_{l;0}\tilde{f}_{S\cup\{r,l\}}^{\overline{Q}\setminus\{l\}} \rangle_{\overline{Q}}+\langle (A_{r;0}-A_{r;1})\tilde{f}_{S}^{Q\setminus\{r\}}\rangle_{Q}\langle A_{l;1}\tilde{f}_{S}^{\overline{Q}\setminus\{l\}}\rangle_{\overline{Q}}\right)\\
&+ 2^{|\dot{N}_{r}|-1}(n-|\dot{N}_{r}|).    
\end{aligned}
\end{equation}
Note that we have not transformed the functional: the only thing that we did in this step is that we specified directly to which sets parties $r$ and $l$ belong, and we wrote the observables of party $r$ and $l$ explicitly. However, the way that we arranged the terms is significant, as we will apply Corollary \ref{cor:bound_2} separately to the first and second sum.

Let us first consider a term in the first sum for some given $S\subseteq N_{r}\setminus \{l\}$ s.t. $|S|$ is even. We want to apply Corollary \ref{cor:bound_2} to expected values on the subset $Q$, while we take $\alpha_{0}=\langle A_{l;1}\tilde{f}_{S\cup\{r\}}^{\overline{Q}\setminus\{l\}}\rangle_{\overline{Q}}$, and $\alpha_{1}=\langle A_{l;0}\tilde{f}_{S\cup\{l\}}^{\overline{Q}\setminus\{l\}}\rangle_{\overline{Q}}$. To this end, we need to find a division of parties into disjoint sets $D_{i;j}$. Let us construct $D_{0;0}$ and $D_{0;1}$ by taking the intersection of sets of parties acting with $A_{i;0}$ and $A_{i;1}$ respectively in $\tilde{f}_{S\cup\{r\}}^{Q\setminus\{r\}}$ and $\tilde{f}_{S\cup\{l\}}^{Q\setminus\{r\}}$ (where as a reminder $f_{S}=\prod_{i\in S}g_{i}$), and let us construct the rest of $D_{i;j}$ by taking appropriate differences of these sets. Then we have
\begin{equation}
\begin{aligned}
D_{0;0}&= (S\cup\{r\})\cap (S\cup\{l\})\cap(Q\setminus\{r\})=S\cap Q,\\
D_{1;0}&= ((S\cup\{r\})\setminus (S\cup\{l\}))\cap(Q\setminus\{r\})=\emptyset,\\
D_{2;0}&= ((S\cup\{l\})\setminus (S\cup\{r\}))\cap(Q\setminus\{r\})=\emptyset,\\
D_{0;1}&=\left(\left(\underset{i\in S\cup\{r\}}{\Delta} N_{i}\right)\cap\left(\underset{i\in S\cup \{l\}}{\Delta} N_{i}\right)\right)\cap (Q\setminus \{r\}),\\
D_{1;1}&=\left(\left(\underset{i\in S\cup\{r\}}{\Delta} N_{i}\right)\setminus\left(\underset{i\in S\cup \{l\}}{\Delta} N_{i}\right)\right)\cap (Q\setminus \{r\}),\\
D_{2;1}&=\left(\left(\underset{i\in S\cup \{l\}}{\Delta} N_{i}\right)\setminus \left(\underset{i\in S\cup\{r\}}{\Delta} N_{i}\right)\right)\cap (Q\setminus \{r\}).\\
\end{aligned}
\end{equation}
From construction, $D_{i;1}$ are disjoint for all $i\in\{0,1,2,\}$. Next, since $\tilde{f}_{S\cup\{r\}}^{Q\setminus\{r\}}$ and $\tilde{f}_{S\cup\{l\}}^{Q\setminus\{r\}}$ are well-defined observables, we have that $D_{0;0}D_{0;1},D_{1;0},D_{1;1}$ are disjoint, and so are $D_{0;0}D_{0;1},D_{2;0},D_{2;1}$. And since $D_{1;0},D_{2;0}=\emptyset$, we have that all $D_{i;j}$ are disjoint. Therefore, we can apply Corollary \ref{cor:bound_2}
\begin{equation}
\begin{aligned}
&\alpha_{0}\left\langle (A_{r;0}+A_{r;1})\tilde{f}_{S\cup\{r\}}^{Q\setminus\{r\}}\right\rangle_{Q}+\alpha_{1}\left\langle (A_{r;0}-A_{r;1})\tilde{f}_{S\cup\{l\}}^{Q\setminus\{r\}}\right\rangle_{Q}\leqslant 2.
\end{aligned}
\end{equation}

Coming back to Eq. \eqref{eq:two_terms_ghz}, the above expression appears there in the first sum, and since the above derivation works for any $S$, the whole term can be upper bounded as
\begin{equation}\label{eq:first_term_ghz}
\sum_{\substack{S\subseteq N_{r}\setminus\{l\}\\ |S|\textrm{ - even}}} \left(\langle(A_{r;0}+A_{r;1}) \tilde{f}_{S\cup\{r\}}^{Q\setminus\{r\}}\rangle_{Q}\langle A_{l;1} \tilde{f}_{S\cup\{r\}}^{\overline{Q}\setminus\{l\}}\rangle_{\overline{Q}}+\langle(A_{r;0}-A_{r;1}) \tilde{f}_{S\cup\{l\}}^{Q\setminus\{r\}}\rangle_{Q}\langle A_{l;0}\tilde{f}_{S\cup\{l\}}^{\overline{Q}\setminus\{l\}}\rangle_{\overline{Q}}\right)\leqslant 2\cdot 2^{|N_{r}|-2}=2^{|\dot{N}_{r}|-2}.
\end{equation}

We can now move to a term in the second sum in Eq. \eqref{eq:two_terms_ghz} for some given $S\subseteq N_{r}\setminus\{l\}$ s.t. $|S|$ is odd. The strategy here is the same as with the first term: we use Corollary \ref{cor:bound_2} to derive an upper bound, and so we need to divide the set of parties into the disjoint sets $D_{i,j}$. We construct these sets in the same way as for the first term, which gives us
\begin{equation}
\begin{aligned}
D_{0;0}&= (S\cup\{r,l\})\cap S\cap(Q\setminus\{r\})=S\cap Q,\\
D_{1;0}&= ((S\cup\{r,l\})\setminus S)\cap(Q\setminus\{r\})=\emptyset,\\
D_{2;0}&= (S\setminus (S\cup\{r,l\}))\cap(Q\setminus\{r\})=\emptyset,\\
D_{0;1}&=\left(\left(\underset{i\in S\cup\{r,l\}}{\Delta} N_{i}\right)\cap\left(\underset{i\in S }{\Delta} N_{i}\right)\right)\cap (Q\setminus \{r\}),\\
D_{1;1}&=\left(\left(\underset{i\in S\cup\{r,l\}}{\Delta} N_{i}\right)\setminus\left(\underset{i\in S}{\Delta} N_{i}\right)\right)\cap (Q\setminus \{r\}),\\
D_{2;1}&=\left(\left(\underset{i\in S}{\Delta} N_{i}\right)\setminus \left(\underset{i\in S\cup\{r,l\}}{\Delta} N_{i}\right)\right)\cap (Q\setminus \{r\}).\\
\end{aligned}
\end{equation}
By the same arguments as before, these sets are disjoint. Therefore, we can apply Corollary \ref{cor:bound_2}
\begin{equation}
\alpha_{0}\left\langle (A_{r;0}+A_{r;1})\tilde{f}_{S\cup\{r,l\}}^{Q\setminus\{r\}}\right\rangle_{Q}+\alpha_{1}\left\langle (A_{r;0}-A_{r;1})\tilde{f}_{S}^{Q\setminus\{r\}}\right\rangle_{Q}\leqslant 2,
\end{equation}
where we take $\alpha_{0}=\langle A_{l;0}\tilde{f}_{S\cup\{r,l\}}^{\overline{Q}\setminus\{l\}} \rangle_{\overline{Q}}$, $\alpha_{1}=\langle A_{l;1}\tilde{f}_{S}^{\overline{Q}\setminus\{l\}}\rangle_{\overline{Q}}$. Then the sum of the above over all $S\subseteq N_{r}\setminus\{l\}$ s.t. $|S|$ is odd, gives us
\begin{equation}\label{eq:second_term_ghz}
\sum_{\substack{S\subseteq N_{r}\setminus\{l\}\\ |S|\textrm{ - odd}}}\left( \langle (A_{r;0}+A_{r;1})\tilde{f}_{S\cup\{r,l\}}^{Q\setminus\{r\}}\rangle_{Q}\langle A_{l;0}\tilde{f}_{S\cup\{r,l\}}^{\overline{Q}\setminus\{l\}} \rangle_{\overline{Q}}+\langle (A_{r;0}-A_{r;1})\tilde{f}_{S}^{Q\setminus\{r\}}\rangle_{Q}\langle A_{l;1}\tilde{f}_{S}^{\overline{Q}\setminus\{l\}}\rangle_{\overline{Q}}\right)\leqslant 2\cdot 2^{|N_{r}|-2}=2^{|\dot{N}_{r}|-2}.
\end{equation}

Putting Eq. \eqref{eq:first_term_ghz} and Eq. \eqref{eq:second_term_ghz} together yields the following upper bound
\begin{equation}
\begin{aligned}
\langle I\rangle\leqslant& 2^{|\dot{N}_{r}|-2}+2^{|\dot{N}_{r}|-2}+ 2^{|\dot{N}_{r}|-1}(n-|\dot{N}_{r}|)=2^{|\dot{N}_{r}|-1}(n-|\dot{N}_{r}|+1).
\end{aligned}
\end{equation}

Having proven the upper bound for the case $\dot{N}_{r}\cap Q \neq \dot{N}_{r}$, we now consider $\dot{N}_{r}\cap Q = \dot{N}_{r}$. Let $l\in \overline{Q}$. Since the graph $G$ is connected, there exists a path between $r$ and $l$. Let us denote by $t_{0},t_{1},\ldots,t_{m}$ the parties that form the shortest path between $r$ and $l$, where $t_{0}=r$ and $t_{m}=l$. Notice, that without loss of generality, we can assume that $t_{i}\in Q$ for all $i\in \{0,\ldots,m-1\}$, since if $t_{i}\in \overline{Q}$ for some $i\in \{1,\ldots,m-1\}$ then we simply take $l=t_{i}$. In this part, our goal will be to first use the extension part of the inequality and Lemma \ref{lem: general_ineq_expected} to bring $l$ into the core of the inequality. This is because we would like to use an argument from the previous case; however, that only works if the bipartition cuts through the parties in the core of the inequality.  

First, let us consider the part of Eq. \eqref{eq: gmnl_ineq_clique_app} that corresponds to the sum over the parties $t_{u}$ for $u\in \{2,\ldots,m\}$
\begin{equation}
2^{|\dot{N}_{r}|-1}\sum_{u=2}^{m}\langle\tilde{g}_{t_{u}}\rangle=\left(2^{|\dot{N}_{r}|-1}\sum_{i\in T_{0}}\langle\tilde{g}_{i}\rangle\right)+\left(2^{|\dot{N}_{r}|-1}\sum_{i\in T_{1}} \langle\tilde{g}_{i}\rangle\right),
\end{equation}
where $T_{0}=\{t_{2u}\}_{u=1}^{\lfloor m/2\rfloor}$, $T_{1}=\{t_{2u+1}\}_{u=1}^{\lceil m/2\rceil -1}$, and  $\lfloor\cdot\rfloor$, $\lceil \cdot \rceil$ are respectively the floor and ceiling function
Similarly to the proof of Theorem \ref{thm: gmnl_standard}, we tackle the two terms on the right side separately, by recursively applying Lemma \ref{lem: general_ineq_expected}. 

We start from $\sum_{i\in T_{0}} \langle\tilde{g}_{i}\rangle$. Let us conduct the first step explicitly for clarity of the proof, with all other steps relegated to the recursion given afterwards. The first two terms $\sum_{i\in T_{0}}\langle\tilde{g}_{i}\rangle$ are $\langle \tilde{g}_{2}\rangle+\langle\tilde{g}_{4}\rangle$, application of Lemma \ref{lem: general_ineq_expected} to which gives us
\begin{equation}
\langle \tilde{g}_{2}\rangle+\langle\tilde{g}_{4}\rangle\leqslant \langle\tilde{s}\rangle + 1,
\end{equation}
where by Lemma \ref{lem:observable_from_pauli} $s=g_{2}g_{4}$. This will be a pattern repeated for the rest of the proof: we use Lemma \ref{lem: general_ineq_expected} to find an upper bound on a given expression and Lemma \ref{lem:observable_from_pauli} to find a formulation of the upper-bounding observable via the stabilizer formalism.

Next, to apply this strategy recursively we take $\alpha_{0},\alpha_{1}=1$, $\langle O_{0;\vec{x}}O_{2;\vec{z}}\rangle^{(u)}=\langle\tilde{g}_{t_{2u}}\rangle$, $\langle O_{0;\vec{x}}O_{1;\vec{y}}\rangle^{(u)}=\langle O_{1;\vec{y}}O_{2;\vec{z}}\rangle^{(u-1)}$ for $u\in\{2,\ldots,\lfloor m/2\rfloor\}$ with $\langle O_{1;\vec{y}}O_{2;\vec{z}}\rangle^{(1)}=\langle\tilde{g}_{t_{2}}\rangle$ which yields
\begin{equation}
\sum_{i\in T_{0}}\langle\tilde{g}_{i}\rangle\leqslant \langle \tilde{c}_{e}\rangle + \left\lfloor\frac{m}{2} \right\rfloor-1,
\end{equation}
where by Lemma \ref{lem:observable_from_pauli}
\begin{equation}\label{eq:c_e}
c_{e}\coloneqq \prod_{i\in T_{0}}g_{i}.
\end{equation}
Let us remark here that, as in the proof of Theorem \ref{thm: gmnl_standard}, we potentially open ourselves to an issue with the applicability of Lemma \ref{lem: general_ineq_expected} without directly specifying observables $O_{i}$. This could happen if one party acts with different observables in two expected values that we apply Lemma \ref{lem: general_ineq_expected} to. However, this is avoided as the only parties acting with $A_{i;0}$ in the above are $t_{2u}$ for $u\in\{1,\ldots,\lfloor m/2\rfloor\}$, and due to the fact that $t_{0},\ldots,t_{m}$ is the shortest path between $t_{0}$ and $t_{m}$, $t_{2u}\notin N_{t_{2j}}$ for all $u,j\in\{1,\ldots,\lfloor m/2\rfloor\}$, and so none of these parties act with $A_{t_{2u};1}$ in these expected values.

Similarly, we take $\alpha_{0},\alpha_{1}=1$, $\langle O_{0;\vec{x}}O_{2;\vec{z}}\rangle^{(u)}=\langle\tilde{g}_{t_{2u+1}}\rangle$, $\langle O_{0;\vec{x}}O_{1;\vec{y}}\rangle^{(u)}=\langle O_{1;\vec{y}}O_{2;\vec{z}}\rangle^{(u-1)}$ for $u\in\{2,\ldots,\lceil m/2\rceil-1\}$ with $\langle O_{1;\vec{y}}O_{2;\vec{z}}\rangle^{(1)}=\langle\tilde{g}_{t_{3}}\rangle$ we get
\begin{equation}
\sum_{i\in T_{1}} \langle\tilde{g}_{i}\rangle\leqslant \langle \tilde{c}_{o}\rangle + \left\lceil\frac{m}{2} \right\rceil-2,
\end{equation}
where by Lemma \ref{lem:observable_from_pauli}
\begin{equation}\label{eq:c_o}
c_{o}\coloneqq \prod_{i\in T_{1}}g_{i}.
\end{equation}
The applicability of Lemma \ref{lem: general_ineq_expected} is resolved by the same argument based on $t_{0},\ldots,t_{m}$ being the shortest path between $t_{0}$ and $t_{m}$.

Putting these two upper bounds together, we get
\begin{equation}
\sum_{u=2}^{m}\langle \tilde{g}_{t_{u}}\rangle\leqslant \langle \tilde{c}_{e}\rangle+\langle \tilde{c}_{o}\rangle + m-3,
\end{equation}
We apply this bound to the functional $\langle I\rangle$ which gives us
\begin{equation}
\langle I \rangle \leqslant \sum_{\substack{S\subseteq \dot{N}_{r}\\ |S|\textrm{ - odd}}} \langle \tilde{f}_{S}\rangle + 2^{|\dot{N}_{r}|-1}(\langle \tilde{c}_{e}\rangle+\langle\tilde{c}_{o}\rangle+ m-3)+ 2^{|\dot{N}_{r}|-1}\sum_{k\in ([n]\setminus \dot{N}_{r})\setminus(T_{0}\cup T_{1})} \langle\tilde{g}_{k}\rangle .
\end{equation}
We bound each expected value in the last sum by $1$, which gives us
\begin{equation}
\begin{aligned}
\langle I \rangle &\leqslant \sum_{\substack{S\subseteq \dot{N}_{r}\\ |S|\textrm{ - odd}}} \langle \tilde{f}_{S}\rangle + 2^{|\dot{N}_{r}|-1}(\langle \tilde{c}_{e}\rangle+\langle\tilde{c}_{o}\rangle+ m-3)+ 2^{|\dot{N_{r}}|-1} (n-|\dot{N}_{r}|-m+1)\\
&\leqslant \sum_{\substack{S\subseteq \dot{N}_{r}\\ |S|\textrm{ - odd}}} \langle \tilde{f}_{S}\rangle + 2^{|\dot{N}_{r}|-1}\langle \tilde{c}_{e}\rangle+2^{|\dot{N}_{r}|-1}\langle\tilde{c}_{o}\rangle+ 2^{|\dot{N_{r}}|-1} (n-|\dot{N}_{r}|-2).
\end{aligned}
\end{equation}
Notice that now party $l$ has non-trivial contributions in $\langle \tilde{c}_{o}\rangle$ \eqref{eq:c_o} and $\langle \tilde{c}_{e}\rangle$ \eqref{eq:c_e}, whereas we would like it to be a part of the core with terms $\langle \tilde{f}_{S}\rangle$. To achieve this, we need to apply Lemma \ref{lem: general_ineq_expected} two more times. To this end, we rewrite this upper-bound as follows
\begin{equation}\label{eq: split_ineq}
\begin{aligned}
\langle I \rangle \leqslant&   \left(\sum_{\substack{S\subseteq \dot{N}_{r}\\ |S|\textrm{ - odd}\\ t_{1}\notin S}} \langle \tilde{f}_{S}\rangle +2^{|\dot{N}_{r}|-1}\langle\tilde{c}_{e}\rangle \right)+ \left(\sum_{\substack{S\subseteq \dot{N}_{r}\\ |S|\textrm{ - odd}\\ t_{1}\in S}} \langle \tilde{f}_{S}\rangle  + 2^{|\dot{N}_{r}|-1}\langle\tilde{c}_{o}\rangle\right)+ 2^{|\dot{N}_{r}|-1}(n-|\dot{N}_{r}|-2),
\end{aligned}
\end{equation}
where we split the terms $\langle \tilde{f}_{S}\rangle$ into two groups, depending on the presence of $t_{1}$ in $S$. Let us begin with 
\begin{equation}
\sum_{\substack{S\subseteq \dot{N}_{r}\\ |S|\textrm{ - odd}\\ t_{1}\notin S}} \langle \tilde{f}_{S}\rangle +2^{|\dot{N}_{r}|-1}\langle\tilde{c}_{e}\rangle.
\end{equation}
First, notice that from $t_{1}\notin S$ and $\dot{N}_{r}$ being a clique, it follows that each $\tilde{f}_{S}$ contains the observable $A_{t_{1};1}$. What is more, from the fact that $t_{i}$'s form the shortest path between $r$ and $l$ follows that the only terms in common between $\tilde{f}_{S}$ and $\tilde{c}_{e}$ \eqref{eq:c_e} are $A_{i;1}$ for $i\in N_{t_{2}}\cap \underset{j\in S}{\Delta}N_{j}$. This makes it so that all of the terms $\langle \tilde{f}_{S}\rangle$ and $\langle \tilde{c}_{e}\rangle$ satisfy the conditions of Lemma \ref{lem: general_ineq_expected} as we can take $\alpha_{0},\alpha_{1}=1$ and
\begin{equation}
O_{0,\vec{x}}=\prod_{i\in N_{t_{2}}\cap \underset{j\in S}{\Delta}N_{j}} A_{i;1},\quad O_{1;\vec{y}}=(-1)^{\frac{|S|-1}{2}}\prod_{i\in S} A_{i;0}\prod_{j\in( \underset{k\in S}{\Delta}N_{k})\setminus N_{t_{2}}}A_{j;1},\quad O_{2;\vec{z}}=\prod_{i\in T_{0}}A_{i;0}\prod_{j\in( \underset{k\in T_{0}}{\Delta}N_{k})\setminus ( \underset{k\in S}{\Delta}N_{k}) }A_{j;1}.
\end{equation}
Moreover, there are $2^{|\dot{N}_{r}|-2}$ terms $\langle \tilde{f}_{S}\rangle$, each containing two observables acting on $r$, meaning that in total we have to apply Lemma \ref{lem: general_ineq_expected} $2^{|\dot{N}_{r}|-1}$ times
\begin{equation}\label{eq: fs_bound_o}
\sum_{\substack{S\subseteq \dot{N}_{r}\\ |S|\textrm{ - odd}\\t_{1}\notin S}} \langle \tilde{f}_{S}\rangle +2^{|\dot{N}_{r}|-1}\langle\tilde{c}_{e}\rangle\leqslant \sum_{\substack{S\subseteq \dot{N}_{r}\\ |S|\textrm{ - odd}\\ t_{1}\notin S}} \langle \tilde{f}_{e,S}\rangle + 2^{|\dot{N}_{r}|-1},
\end{equation}
where by Lemma \ref{lem:observable_from_pauli}
\begin{equation}\label{eq:f_o}
f_{e,S}=\prod_{i\in S\cup T_{0}}g_{i}.
\end{equation}

Let us now focus on the term
\begin{equation}
\sum_{\substack{S\subseteq \dot{N}_{r}\\ |S|\textrm{ - odd}\\ t_{1}\in S}} \langle \tilde{f}_{S}\rangle  + 2^{|\dot{N}_{r}|-1}\langle\tilde{c}_{o}\rangle.
\end{equation}
To apply Lemma \ref{lem: general_ineq_expected} we take $\alpha_{0},\alpha_{1}=1$ and
\begin{equation}
\begin{aligned}
O_{0;\vec{x}}&=\prod_{j\in R_{1}\cap R_{2}}A_{j;1},\quad
O_{1;\vec{y}}= (-1)^{\frac{|S-1|}{2}}\prod_{i\in S}A_{i;0}\prod_{j\in R_{1}\setminus R_{2}}A_{j;1},\quad 
O_{2;\vec{z}}=\prod_{i\in T_{1}}A_{i;0}\prod_{j\in R_{2}\setminus R_{1}}A_{j;1},
\end{aligned}
\end{equation}
where we take $R_{1}=\underset{k\in S}{\Delta}N_{k}$ and $R_{2}=\underset{k\in T_{1}}{\Delta}N_{k}$. This gives us
\begin{equation}\label{eq: fs_bound_e}
\sum_{\substack{S\subseteq \dot{N}_{r}\\ |S|\textrm{ - odd}\\ t_{1}\in S}} \langle \tilde{f}_{S}\rangle  + 2^{|\dot{N}_{r}|-1}\langle\tilde{c}_{o}\rangle \leqslant \sum_{\substack{S\subseteq \dot{N}_{r}\\ |S|\textrm{ - odd}\\ t_{1}\in S}} \langle \tilde{f}_{o,S}\rangle +2^{|\dot{N}_{r}|-1},
\end{equation}
where by Lemma \ref{lem:observable_from_pauli}
\begin{equation}\label{eq:f_e}
\tilde{f}_{o,S}=\prod_{i\in S \cup  T_{1}}g_{i}.
\end{equation}
Applying Eq. \eqref{eq: fs_bound_o} and \eqref{eq: fs_bound_e} to Eq. \eqref{eq: split_ineq} gives us
\begin{equation}
\begin{aligned}
\langle I \rangle \leqslant&    \sum_{\substack{S\subseteq \dot{N}_{r}\\ |S|\textrm{ - odd}\\ t_{1}\notin S}} \langle \tilde{f}_{e,S}\rangle+ \sum_{\substack{S\subseteq \dot{N}_{r}\\ |S|\textrm{ - odd}\\ t_{1}\in S}} \langle \tilde{f}_{o,S}\rangle+2^{|\dot{N}_{r}|-1}(n-|\dot{N}_{r}|)
.
\end{aligned}
\end{equation}
We can further transform the above expression by excluding $t_{1}$ from $\dot{N}_{r}$ and instead writing it explicitly in the expected values. This gives us 
\begin{equation}
\begin{aligned}
\langle I \rangle \leqslant&   \sum_{\substack{S\subseteq \dot{N}_{r}\setminus\{t_{1}\}\\ |S|\textrm{ - odd}}} \langle \tilde{f}_{e,S}\rangle+ \sum_{\substack{S\subseteq \dot{N}_{r}\setminus\{t_{1}\}\\ |S|\textrm{ - even}}}\langle \tilde{f}_{o,S\cup \{t_{1}\}}\rangle+2^{|\dot{N}_{r}|-1}(n-|\dot{N}_{r}|),
\end{aligned}
\end{equation}
where we have changed the parity of the condition on $|S|$ in the second sum, as by removing $t_{1}$ from the set $S$ we changed the parity of its cardinality.

Next, we want to apply  Corollary \ref{cor:bound_2} to bound the pairs of expressions in the above. This requires us to specify the contribution of $r$ explicitly as we need to find pairs of observables in which one acts with $A_{r;0}+A_{r;1}$ and the other with $A_{r;0}-A_{r;1}$. We start by excluding $r$ from the set $\dot{N}_{r}\setminus\{t_{1}\}$, similarly as we have done with $t_{1}$
\begin{equation}
\begin{aligned}
\langle I \rangle \leqslant&    \sum_{\substack{S\subseteq N_{r}\setminus\{t_{1}\}\\ |S|\textrm{ - even}}} \langle \tilde{f}_{e,S\cup\{r\}}\rangle+ \sum_{\substack{S\subseteq N_{r}\setminus \{t_{1}\}\\ |S|\textrm{ - odd}}} \langle \tilde{f}_{e,S}\rangle\\
&+ \sum_{\substack{S\subseteq N_{r}\setminus\{t_{1}\}\\ |S|\textrm{ - odd}}} \langle \tilde{f}_{o,S\cup\{r,t_{1}\}}\rangle+ \sum_{\substack{S\subseteq N_{r}\setminus \{t_{1}\}\\ |S|\textrm{ - even}}} \langle \tilde{f}_{o,S\cup\{t_{1}\}}\rangle+2^{|\dot{N}_{r}|-1}(n-|\dot{N}_{r}|)\\
=&\left(\sum_{\substack{S\subseteq N_{r}\setminus\{t_{1}\}\\ |S|\textrm{ - even}}} \langle \tilde{f}_{e,S\cup\{r\}}\rangle+\langle \tilde{f}_{o,S\cup\{t_{1}\}}\rangle\right)+\left(\sum_{\substack{S\subseteq N_{r}\setminus \{t_{1}\}\\ |S|\textrm{ - odd}}} \langle \tilde{f}_{o,S\cup\{r,t_{1}\}}\rangle+\langle \tilde{f}_{e,S}\rangle\right)+2^{|\dot{N}_{r}|-1}(n-|\dot{N}_{r}|).
\end{aligned}
\end{equation}
Next, our plan is to apply Corollary \ref{cor:bound_2} separately to each pair of terms in the first, as well as in the second sum. However, this cannot be applied directly to the expression above: we first need to split the expected values by maximising the above functional over all behaviours local across a given bipartition $Q|\overline{Q}$ \eqref{eq:product_behaviour}
\begin{equation}\label{eq: ineq_two_sums}
\begin{aligned}
\langle I \rangle \leqslant& \left(\sum_{\substack{S\subseteq N_{r}\setminus\{t_{1}\}\\ |S|\textrm{ - even}}} \langle (A_{r;0}+A_{r;1})\tilde{f}_{e,S\cup\{r\}}^{Q\setminus\{r\}}\rangle_{Q}\langle A_{l;\mu}\tilde{f}_{e,S\cup\{r\}}^{\overline{Q}\setminus\{l\}}\rangle_{\overline{Q}}+\langle (A_{r;0}-A_{r;1})\tilde{f}^{Q\setminus\{r\}}_{o,S\cup\{t_{1}\}}\rangle_{Q}\langle A_{l;1-\mu}\tilde{f}^{\overline{Q}\setminus\{l\}}_{o,S\cup\{t_{1}\}}\rangle_{\overline{Q}}\right)\\
&+\left(\sum_{\substack{S\subseteq N_{r}\setminus \{t_{1}\}\\ |S|\textrm{ - odd}}} \langle(A_{r;0}+A_{r;1}) \tilde{f}^{Q\setminus\{r\}}_{o,S\cup\{r,t_{1}\}}\rangle_{Q}\langle A_{l;1-\mu}\tilde{f}^{\overline{Q}\setminus\{l\}}_{o,S\cup\{r,t_{1}\}}\rangle_{\overline{Q}}+\langle (A_{r;0}-A_{r;1})\tilde{f}_{e,S}^{Q\setminus\{r\}}\rangle_{Q}\langle A_{l;\mu}\tilde{f}_{e,S}^{\overline{Q}\setminus\{l\}}\rangle_{\overline{Q}}\right)\\
&+2^{|\dot{N}_{r}|-1}(n-|\dot{N}_{r}|),
\end{aligned}
\end{equation}
where we have also explicitly written the observables of the parties $r$ and $l$, and $\mu=m\mod 2$. Let us begin by applying Corollary \ref{cor:bound_2} to the first term by taking $\alpha_{0}=\langle A_{l;\mu}\tilde{f}_{e,S\cup\{r\}}^{\overline{Q}\setminus\{l\}}\rangle_{\overline{Q}}$, $\alpha_{1}=\langle A_{l;1-\mu}\tilde{f}^{\overline{Q}\setminus\{l\}}_{o,S\cup\{t_{1}\}}\rangle_{\overline{Q}}$. To do so, we first need to prove the decomposition of the set of particles into six disjoint sets $D_{i,j}$. Let us construct $D_{0;0}$ and $D_{0;1}$ by taking the intersection of sets of parties acting with $A_{i;0}$ and $A_{i;1}$ respectively in Eq. \eqref{eq:f_e} and \eqref{eq:f_o}, and let us construct the rest of $D_{i;j}$ by taking appropriate differences of these sets. 
Written explicitly, they read
\begin{equation}
\begin{aligned}
D_{0;0}&=[S\cup\{r\}\cup T_{0}]\cap [S\cup \{t_{1}\}\cup T_{1}]\cap [Q\setminus\{r\}]=S\\
D_{1;0}&=([S\cup\{r\}\cup T_{0}]\setminus [S\cup \{t_{1}\}\cup T_{1}])\cap [Q\setminus\{r\}]=T_{0}\setminus\{l\}\\
D_{2;0}&=([S\cup \{t_{1}\}\cup T_{1}]\setminus [S\cup\{r\}\cup T_{0}])\cap [Q\setminus\{r\}]=\{t_{1}\}\cup T_{1}\setminus \{l\}\\
D_{0;1}&=\left(\underset{i\in S\cup\{r\}\cup T_{0}}{\Delta} N_{i}\right)\cap\left(\underset{i\in S\cup \{t_{1}\}\cup T_{1}}{\Delta} N_{i}\right)\cap (Q\setminus \{r\}),\\
D_{1;1}&=\left(\left(\underset{i\in S\cup\{r\}\cup T_{0}}{\Delta} N_{i}\right)\setminus\left(\underset{i\in S\cup \{t_{1}\}\cup T_{1}}{\Delta} N_{i}\right)\right)\cap (Q\setminus \{r\}),\\
D_{2;1}&=\left(\left(\underset{i\in S\cup \{t_{1}\}\cup T_{1}}{\Delta} N_{i}\right)\setminus\left(\underset{i\in S\cup\{r\}\cup T_{0}}{\Delta} N_{i}\right)\right)\cap (Q\setminus \{r\}),\\
\end{aligned}
\end{equation}
where to simplify the first three expressions, we used the following facts: $T_{0}$, $T_{1}$ are disjoint, the shortest path assumption, the assumptions $Q\cap \dot{N}_{r}=\dot{N}_{r}$, and $t_{i}\in Q$ for all $i\in\{0,\ldots, m-1\}$. From construction it easily follows that $D_{i;0}$ are disjoint for all $i\in\{0,1,2\}$ as well as $D_{i;1}$ for all $i\in\{0,1,2\}$. Since, by previous derivation, $\tilde{f}_{o,S\cup\{r\}}^{Q}$ and $\tilde{f}_{e,S\cup\{t_{1}\}}^{Q}$ are well-defined observables, 
and we have
\begin{equation}
\tilde{f}_{e,S\cup\{r\}}^{Q\setminus\{r\}}\approx \prod_{i\in D_{0;0}\cup D_{1;0}}A_{i;0}\prod_{j\in D_{0;1}\cup D_{1;1}}A_{j;1},\quad \tilde{f}_{o,S\cup\{t_{1}\}}^{Q\setminus\{r\}}\approx \prod_{i\in D_{0;0}\cup D_{2;0}}A_{i;0}\prod_{j\in D_{0;1}\cup D_{2;1}}A_{j;1},
\end{equation}
it follows that $D_{0;0},D_{0;1},D_{1;0},D_{1;1}$ are disjoint, and so are $D_{0;0},D_{0;1},D_{2;0},D_{2;1}$. Therefore, to show that all $D_{i;j}$ are disjoint, we need to show that the pairs $D_{1;0},D_{2;1}$ and $D_{1;1}D_{2;0}$ are disjoint. 

Let us begin from the pair $D_{1;0},D_{2;1}$. Our plan is to count the number of times each party in $D_{1;0}$ appears in the neighbourhoods that define $D_{2;1}$. If we can show that each party appears an even number of times in these neighbourhoods, then this implies that they do not appear in $D_{2;1}$ as a symmetric difference cancels out such contributions.

First, notice that the shortest path assumption implies that parties from $T_{0}\setminus \{l\}$ appear in exactly two neighbourhoods $N_{i}$ for $i\in \{t_{1}\}\cup T_{1}$, and parties from $T_{0}\setminus \{l,t_{2}\}$ do not appear in the neighbourhoods of $S$. And so $T_{0}\setminus \{l,t_{2}\}\notin D_{2;1}$. As for $t_{2}$ it appears in the neighbourhoods of $S$ $|S\cap N_{t_{2}}|$ many times, which can be either even or odd, depending on the graph $G$. However, it appears as many times in the neighbourhoods of $S\cup\{r\}\cup T_{0}$: $t_{2}$ is not connected to $r$ or any party from $T_{0}$ (by the shortest path assumption), and so the only contribution surviving is from the neighbourhoods of $S$, which as previously established, means that it appears $S\cap N_{t_{2}}$. Therefore, contributions from $T_{0}\setminus\{l,t_{2}\}$ are cancelled out by the symmetric difference, whereas $t_{2}$ gets cancelled out by the set difference. And so we have $D_{1;0}\cap D_{2;1}=\emptyset$.

Let us now turn to $D_{1;1}D_{2;0}$. Parties from $T_{1}\setminus \{l\}$ appear in the neighbourhoods of $T_{0}$ exactly two times by the shortest path assumption, and they do not appear in the neighbourhoods of $S\cup\{r\}$. As for $t_{1}$, it appears in the neighbourhoods of $S\cup\{t_{1}\}\cup T_{1}$ exactly $|S\cap N_{t_{1}}|$ times. On the other hand, $t_{1}$ appears in the neighbourhoods of $S\cup\{r\}\cup T_{0}$, $|S\cap N_{t_{1}}|+2$ times: $|S\cap N_{t_{1}}|$ times for $S$, once in $N_{r}$ and once in $N_{t_{2}}$, where as a reminder $t_{2}\in T_{0}$. Notice, that $|S\cap N_{t_{1}}|$ and $|S\cap N_{t_{1}}|+2$ have the same parity, and so the contribution from $t_{1}$ is cancelled out by the symmetric difference if $|S\cap N_{t_{1}}|$ is even, and by set difference if $|S\cap N_{t_{1}}|$ is odd. Therefore, $D_{2;0}\cap D_{1;1}=\emptyset$.

With this we have shown that all $D_{i;j}$ are disjoint, which allows us to use  Corollary \ref{cor:bound_2} to get 
\begin{equation}\label{eq: last_lemma_app_1}
\begin{aligned}
\langle (A_{r;0}+A_{r;1})\tilde{f}_{e,S\cup\{r\}}^{Q\setminus\{r\}}\rangle_{Q}\langle A_{l;\mu} \tilde{f}_{e,S\cup\{r\}}^{\overline{Q}\setminus\{l\}}\rangle_{\overline{Q}}+\langle (A_{r;0}-A_{r;1})\tilde{f}^{Q\setminus\{r\}}_{o,S\cup\{t_{1}\}}\rangle_{Q}\langle A_{l;1-\mu}\tilde{f}^{\overline{Q}\setminus\{l\}}_{o,S\cup\{t_{1}\}}\rangle_{\overline{Q}}\leqslant2.
\end{aligned}
\end{equation}

We now turn our attention to the second term in Eq. \eqref{eq: ineq_two_sums}. We once again want to apply Corollary \ref{cor:bound_2}, with $\alpha_{0}=\langle A_{l;1-\mu}\tilde{f}^{\overline{Q}\setminus\{l\}}_{o,S\cup\{r,t_{1}\}}\rangle_{\overline{Q}} $ and $\alpha_{1}=\langle A_{l;\mu}\tilde{f}^{\overline{Q}\setminus\{l\}}_{e,S}\rangle_{\overline{Q}}$. We once again construct $D_{0;0}$ and $D_{0;1}$ by taking the intersection of sets of parties acting with $A_{i;0}$ and $A_{i;1}$ respectively in Eq. \eqref{eq:f_e} and \eqref{eq:f_o}, and the rest of $D_{i;j}$ by taking appropriate differences of these sets
\begin{equation}
\begin{aligned}
D_{0;0}&=[S\cup\{r,t_{1}\}\cup T_{1}]\cap [S\cup  T_{0}]\cap [Q\setminus \{r\}]=S,\\
D_{1;0}&=([S\cup\{r,t_{1}\}\cup T_{1}]\setminus [S\cup T_{0}])\cap [Q\setminus \{r\}]=\{t_{1}\}\cup T_{1}\setminus\{l\},\\
D_{2;0}&=([S\cup T_{0}]\setminus [S\cup\{r,t_{1}\}\cup T_{1}])\cap [Q\setminus \{r\}]=T_{0}\setminus \{l\},\\
D_{0;1}&=\left(\underset{i\in S\cup\{r,t_{1}\}\cup T_{1}}{\Delta} N_{i}\right)\cap\left(\underset{i\in S\cup T_{0}}{\Delta} N_{i}\right)\cap (Q\setminus \{r\}),\\
D_{1;1}&=\left(\left(\underset{i\in S\cup\{r,t_{1}\}\cup T_{1}}{\Delta} N_{i}\right)\setminus\left(\underset{i\in S\cup T_{0}}{\Delta} N_{i}\right)\right)\cap (Q\setminus \{r\}),\\
D_{2;1}&=\left(\left(\underset{i\in S\cup T_{0}}{\Delta} N_{i}\right)\setminus\left(\underset{i\in S\cup\{r,t_{1}\}\cup T_{1}}{\Delta} N_{i}\right)\right)\cap (Q\setminus \{r\}).\\
\end{aligned}
\end{equation}
Then by the same logic as before, we need to show that the pairs $D_{1;0}, D_{2;1}$ and $D_{2;0},D_{1;1}$ are disjoint to show that all $D_{i;j}$ are disjoint.

First, let us analyse $D_{1;0}, D_{2;1}$. By the shortest path assumption, every party from $T_{1}\setminus \{l\}$ appears twice in the neighbourhoods of $T_{0}$ and does not appear in the neighbourhoods of $S$. As for $t_{1}$, it appears in every neighbourhood of $S$ and in the neighbourhood of $t_{2}\in T_{0}$. Since by Eq. \eqref{eq: ineq_two_sums} $|S|$ is odd, we get that $t_{2}$ appears an even number of times in the neighbourhoods of $S\cup T_{0}$. And so $D_{1;0}\cap D_{2;1}=\emptyset$.

Next, let us look at $D_{2;0},D_{1;1}$. Every party in $T_{0}\setminus\{l,t_{2}\}$ appears twice in the neighbourhoods of $T_{1}$ and never in the neighbourhoods of $S\cup\{r,t_{1}\}$. As for $t_{2}$, it appears in the neighbourhood of $t_{1}$, $t_{3}\in T_{1}$, and $|S\cap N_{t_{2}}|$ times in the neighbourhoods of $S$, giving the total of $|S\cap N_{t_{2}}|+2$ appearances. As this number can be even or odd depending on the graph $G$, we have to look at the neighbourhoods whose contributions we remove in $D_{1;1}$. There, $t_{2}$ appears $|S\cap N_{t_{2}}|$ times in the neighbourhoods of $S$, and never in the neighbourhoods of $T_{0}$. Therefore, the contribution from $t_{2}$ cancels out by symmetric difference if $|S\cap N_{t_{2}}|$ is even, and by set difference if $|S\cap N_{t_{2}}|$ is odd. Putting it all toghether we conclude that $D_{2;0}\cap D_{1;1}=\emptyset.$

With that, we are permitted to use Corollary \ref{cor:bound_2} to derive the following bound
\begin{equation}\label{eq: last_lemma_app_2}
\begin{aligned}
\langle(A_{r;0}+A_{r;1}) \tilde{f}^{Q\setminus\{r\}}_{o,S\cup\{r,t_{1}\}}\rangle_{Q}\langle A_{l;1-\mu}\tilde{f}^{\overline{Q}\setminus\{l\}}_{o,S\cup\{r,t_{1}\}}\rangle_{\overline{Q}}+\langle (A_{r;0}-A_{r;1})\tilde{f}_{e,S}^{Q\setminus\{r\}}\rangle_{Q}\langle A_{l;\mu}\tilde{f}_{e,S}^{\overline{Q}\setminus\{l\}}\rangle_{\overline{Q}}\leqslant2.
\end{aligned}
\end{equation}

Finally, applying Eq. \eqref{eq: last_lemma_app_1} and \eqref{eq: last_lemma_app_2} to Eq. \eqref{eq: ineq_two_sums} we get
\begin{equation}
\langle I\rangle \leqslant  2^{|N_{r}|-2}\cdot2 + 2^{|N_{r}|-2}\cdot2 + 2^{|\dot{N}_{r}|-1}(n-|\dot{N}_{r}|)=2^{|\dot{N}_{r}|-1}(n-|\dot{N}_{r}|+1),
\end{equation}
With this, we have shown that for all $Q|\overline{Q}$, the functional \eqref{eq: gmnl_ineq_clique_app} calculated over any behaviour local across $Q|\overline{Q}$ is bounded from above by $2^{|\dot{N}_{r}|-1}(n-|\dot{N}_{r}|+1)$, and so $\beta_{BL}\leqslant 2^{|\dot{N}_{r}|-1}(n-|\dot{N}_{r}|+1)$. To show that the equality holds, it is sufficient to consider a fully local strategy for which $\langle A_{i;j}\rangle =1$ for all $i\in [n]$ and $j\in\{0,1\}$.

With the bi-local bound established, we now compute the quantum bound of the functional \eqref{eq: gmnl_ineq_clique_app}. This can be easily calculated by the sum-of-squares decomposition
\begin{equation}
\begin{aligned}
2^{|\dot{N}_{r}|-1}(n-|\dot{N}_{r}|+\sqrt{2})\mathbb{1}-I=\sum_{\substack{S\subseteq \dot{N}_{r}\\|S|\textrm{ - odd}}}\frac{1}{\sqrt{2}}\left(\mathbb{1}-\frac{1}{\sqrt{2}}\tilde{f}_{S} \right)^{2}+2^{|\dot{N}_{r}|-2}\sum_{k\in [n]\setminus \dot{N}_{r}}\left(\mathbb{1}-\tilde{g}_{k}\right)^{2}.
\end{aligned}
\end{equation}
The only nontrivial part of showing that this equality holds is proving that all anticommutators $\{A_{r;0},A_{r;1}\}$ that originate from $(A_{r;0}\pm A_{r;1})^{2}=2\mathbb{1}\pm \{A_{r;0},A_{r;1}\}$ cancel out. However, this follows from the fact that there is an equal number of subsets $S$ such that $r\in S$ and that $r\notin S$, and so there is an equal number of anticommutators with a plus and minus sign.

Finally to show that $\beta_{Q}=2^{|\dot{N}_{r}|-1}(n-|\dot{N}_{r}|+\sqrt{2})$, it can be easily shown that this value is achievable by the graph state $\ket{G}$ with observables \eqref{eq: observable_choice}.
\end{proof}

\section{Utility of clique-based inequality}\label{app: lc}
Here, we discuss in detail the applicability of Theorem \ref{thm: clique}. While the condition required by this theorem seems very restrictive, it is very much not the case due to the local complementation equivalence between graphs. In Ref. \cite{PhysRevA.69.062311,Cabello_2009}, the authors classified all classes of connected graphs up to $8$ vertices that are equivalent under local complementation, i.e., they correspond to graph states inequivalent up to local unitaries. In this section, we analyse these classes and the applicability of Theorem \ref{thm: clique} to each of them. In doing so, we use the numbering system of graph classes from \cite{PhysRevA.69.062311,Cabello_2009}.

For the vast majority of class representatives, one cannot use Theorem \ref{thm: clique} to gain an advantage in $\beta_{Q}/\beta_{BL}$ ratio over Theorem \ref{thm: gmnl_standard}. To do so, we need a vertex $r$ whose neighbourhood $N_{r}$ is a complete induced subgraph and $|N_{r}|\geqslant 2$. However, it is also true that for the majority of the class representatives, one needs to apply only one local complementation to get a graph satisfying the conditions of Theorem \ref{thm: clique}. In Table \ref{tab: long_list}, we list all of such classes using the numbering system of graph classes from \cite{PhysRevA.69.062311,Cabello_2009}. Notably, we do not list the graph in a given class that achieves the highest ratio, but the one that achieves the highest ratio with a single application of local complementation.
\begin{table}[]
\begin{tabular}{|l|l|}
\hline
Vertex & Class number \\ \hline
1 & 2,3,5,8,9,17,18,20,40,43,46,58,70,87,89,90,100,128,131,132,138,141,145,146 \\ \hline
2 & 4,6,7,12,14,27,30,36,41,68,104,111,116,121,137,144 \\ \hline
3 & 13,28,29,31,33.34.35,37,42,92,95,101,102,103,106,108,112,115,117,118,119,120,130,133 \\ \hline
4 & 79,86,91,94,105,113,125 \\ \hline
5 & 39,93 \\ \hline
6 & 10,11,15,38,67,97,127,142 \\ \hline
7 & 21,22,23,24,25,26,44,96,129,140 \\ \hline
8 & 47,48,49,50,51,52,53,54,55,56,57,59,60,61,62,63,64,65,66,69,72,74,75,76,77,78,80,81,82,83,84,85,88,98,99,109,110,114,122,\\
& 126,136,139 \\ \hline
\end{tabular}
\caption{The list of graph classes, for which we can find a graph satisfying the conditions of Theorem \ref{thm: clique} with a single application of local complementation. The left column lists the vertex on which one needs to act with local complementation, which at the same time is also the choice for the rotated party, whereas the right column contains numbers of corresponding class representatives as per Ref. \cite{PhysRevA.69.062311,Cabello_2009}}\label{tab: long_list}
\end{table}

The only classes for which one cannot apply Theorem \ref{thm: clique} with a single application of local complementation are 1, 16, 19, 32, 45, 71, 73, 107, 123, 124, 134, 135, 143. Eight of them have to be treated with local complementation up to $3$ times to find a graph complying with Theorem \ref{thm: clique} (see Table \ref{tab: short_list}). However, for classes no. $1, 19,45,134$ and $135$ there is no sequence of local complementations that gives $|N_{r}|\geqslant2$. The class number $1$ is trivial, as it is the only connected graph of order $2$. As for the other four classes, they all originate from a graph of order $6$ (see Fig. \ref{fig: 8-graph} for the graphical representation). Interestingly, this graph of order $6$ (class no. $19$) is the only class of connected graphs up to four vertices for which the application of Theorem \ref{thm: clique} is strictly impossible, as for the other problematic cases there exists a vertex $r$ for which $|N_{r}|=1$, which makes the functional from Theorem \ref{thm: clique} equivalent to the one from Theorem \ref{thm: gmnl_standard}.

\begin{table}[]
\begin{tabular}{|l|l|l|}
\hline
Class number & Local complementation sequence & Rotated party \\ \hline
16 & 2,3,4 & 3 \\ \hline
32 & 5,6,7 & 1 \\ \hline
71 & 4,7,8 & 1 \\ \hline
73 & 4,7,8 & 1 \\ \hline
107 & 1,5 & 1 \\ \hline
123 & 2,3 & 7 \\ \hline
124 & 1,4,5 & 3 \\ \hline
143 & 2,7 &  1\\ \hline
\end{tabular}
\caption{The list of graph classes, for which to find a graph satisfying the conditions of Theorem \ref{thm: clique} we need to apply local complementation more than once. The first column lists the class number of the graph representative as per Ref. \cite{PhysRevA.69.062311,Cabello_2009}, the second column list the sequence of local complementation which gives the graph satisfying the conditions of Theorem \ref{thm: clique} (note that local complementation do not commute, so the order from left to right has to be preserved), and the last column lists the choice of a rotated party $r$ for this transformed graph.}\label{tab: short_list}
\end{table}

\begin{figure}[ht]
    \centering
    \includegraphics[width=0.25\linewidth]{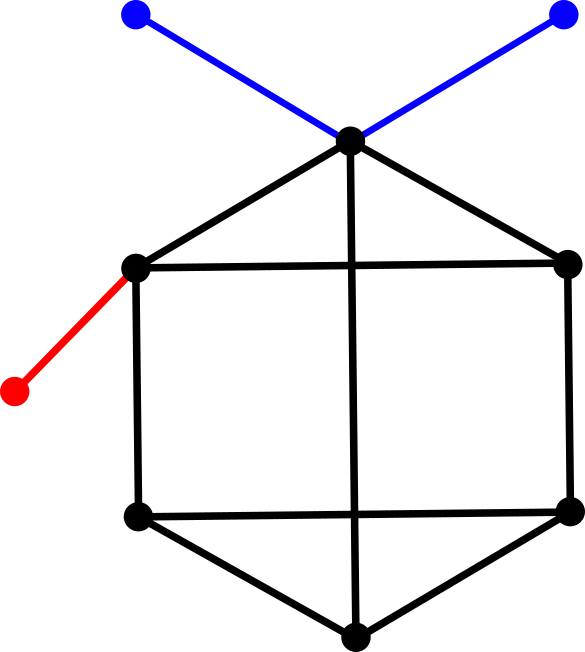}
    \caption{The representatives of classes of graphs for which finding a vertex $r$ with all its neighbours connected to each other and $|N_{r}|\geqslant 2$ is impossible. The class no. $19$ is the graph in black, the class no. $45$ is the graph in black plus one of the blue vertices, the class no. $134$ is the graph in black plus both of the blue vertices and the class no. $135$ is the graph in black plus one of the vertices in blue and the vertex in red.}
    \label{fig: 8-graph}
\end{figure}

\section{Numerical derivation of fidelity bounds}\label{app:fid}

Given an observed violation $\beta_n$ for an $n$-partite Bell inequality, a robust self-testing statement can be made by bounding the following quantity
\begin{align}\label{eq_app:defF}
    F_{\rm DI}(\beta_n)=\inf_{{\rho}\in \mathcal{S}(\beta_n)} \max_{\lbrace{\Lambda_i} \rbrace_{i=1}^n}F(\Lambda_{1}\otimes\ldots \Lambda_{n}({\rho}),\Phi^n),
\end{align}
with $\Lambda_i$, $i\in [n]$ is a quantum channel, $F(\rho,\sigma)=(\Tr|\sqrt{\rho}\sqrt{\sigma}|)^2$ the fidelity of quantum states and the infimum is taken over the set $\mathcal{S}(\beta_n)$ of all states that achieve violation $\beta_n$ for the given inequality.
For each $n$-partite Bell inequality of interest, we proceed by a numerical search that provides coefficients $f_n$ and $\mu_n$ such that the following inequality
\begin{align}\label{eq:Fstandard}
F_{\rm DI}(\beta_n)\geq f_n \beta_n-\mu_n,
\end{align}
applies.
In order to find such bounds, we employ the self-testing from operator inequality (STOPI) method introduced in~[33] 
and further developed in~[34]
. 
The method consists in choosing a particular extractable channel, i.e. fixing the maps $\DE{\Lambda_i}_{i=1}^n$ in Eq. \eqref{eq_app:defF}, and turning \eqref{eq:Fstandard} into an operator inequality that is valid for any quantum state. This is done by noting that for fixed quantum state and channels, the r.h.s. of Eq. \eqref{eq_app:defF} can be turned into
\begin{align}
\begin{split}
    F(\otimes_{i=1}^n \Lambda_i(\rho),\Phi_n)&= \Tr\de{\otimes_{i=1}^n\Lambda_i(\rho)\Phi_n}\\
    &=\Tr\de{\otimes_{i=1}^n\Lambda^\dag_i(\Phi_n)\rho}\\
    &=\Tr\de{K\rho}
    \end{split}
\end{align}
where $K:=\otimes_{i=1}^n \Lambda^\dag_i(\Phi_n)$.
Moreover $\beta_n=\Tr\de{W\rho}$, where $W$ is the Bell operator that represents the corresponding Bell inequality. Therefore a valid bound of the form \eqref{eq:Fstandard} is obtained for every $f_n$
, $\mu_n$ such that following operator
\begin{align}\label{eq:operator}
K-f_n W+\mu_n\id
\end{align}
is positive semidefinite.
For more details we refer the reader to \cite{Kaniewski2015analytic}.

We specialize our analysis to two different Bell inequalities of the form \eqref{eq: gmnl_ineq_general} for the star graph state (the $n$-vertex graph where vertex $1$ is connected to all the others). Notice that the star graph state is unitarily equivalent to the $n$-partite GHZ state. We derive numerical fidelity bounds for two families of inequalities for the graph, one where $r = 1$ and the other for $r = 2$. For completeness, we report their expression here as well: the two inequalities read

\begin{equation}\label{eq_app: NGHZ_gmnl_r=1}
\begin{aligned}
\langle (A_{1;0}+A_{1;1})A_{2;1} \ldots A_{n;1}\rangle 
&+ \sum_{i = 2}^n \langle (A_{1;0} - A_{1;1}) A_{i;0}\rangle \leq 2 (n-1), 
\end{aligned}
\end{equation}
for $r=1$ and
\begin{equation}\label{eq_app: NGHZ_gmnl_r=2}
\begin{aligned}
\langle A_{1;1}(A_{2;0}+A_{2;1})\rangle + \langle A_{1;0}&(A_{2;0}-A_{2;1})A_{3;1} \ldots A_{n;1}\rangle\\
&+ \sum_{i =2}^n 2 \langle (A_{1;1} A_{i;0}\rangle \leq 2(n-1),
\end{aligned}
\end{equation}
for $r=2$.
The maximal quantum violation of the two inequalities amounts to $\beta_Q^{r=1} = 2 \sqrt{(n-1)^2 + 1}$ and $\beta_Q^{r=2} = 2 \sqrt{2} + 2 (n-2)$ respectively.
In both cases, since the inequalities have binary inputs and outputs, the analysis can be restricted to qubits and projective measurements~ (see [33]
). Hence, the local observables $A_{i;x_i}$ can now be parametrized as
\begin{equation}\label{eq:measr}
A_{r;x_r} = \cos{\alpha_r}\, X + 
(-1)^{x_r} \sin{\alpha_r}\, Z, 
\end{equation}
and
\begin{equation}\label{eq:measi}
A_{i;x_i} = \cos{\alpha_i}\, H + 
(-1)^{x_i} \sin{\alpha_i}\, V  
\end{equation}
for $i \neq r$,
where $H = (X + Z)/\sqrt{2}$, $V = (X - Z)/\sqrt{2}$ and $\alpha_i \in [ 0,\pi/2] $. The different choice of parametrization is made so that the measurement leading to the maximal quantum violation
are included in the parametrization. In particular, for \eqref{eq_app: NGHZ_gmnl_r=2} they occur at $\alpha_i = \pi/4$ for all parties, while for \eqref{eq_app: NGHZ_gmnl_r=1} they occur at $\alpha_i = \pi/4$ for $i \neq r$ and at $\alpha_r = \arccos{(\frac{1}{\sqrt{(n-1)^2 +1}})}$ for $i = r$.

For inequality \eqref{eq_app: NGHZ_gmnl_r=2}, we fix the same extractable channels of Ref.~\cite{Kaniewski2015analytic}
, i.e., for each party $i$, $\Lambda_i^\dag$ is a dephasing channel determined by the angle $\alpha_i$ between the observables used for the inequality violation. More explicitly
\begin{equation}\label{eq_app:ex_channel}
\Lambda_i(x) [\rho] = \frac{1 + g(x)}{2} \rho + 
\frac{1 - g(x)}{2} \Gamma_i (x) \rho \Gamma_i (x)
\end{equation}
and the dependence on the measurement angle is encoded in the function $g(x) = (1+ \sqrt{2} )(\sin{x}+\cos{x} -1)$ together with the definition of $\Gamma(x) = M_i^{a}$ if $x \leq \pi /4$ and $\Gamma(x) = M_i^{b}$ if $x > \pi /4$. Lastly, we define $M_r^{a,b} = X,
Z$ and  $M_i^{a,b} = H,V$ for $i \neq r$.

For inequality \eqref{eq_app: NGHZ_gmnl_r=1}, we need a different choice of channels, that take into account that the measurement angle leading to the maximal quantum violation is different than $\pi/4$ for party $i = 1$. In particular, by following the measurement parametrisation of \eqref{eq:measr}, the maximal violation is obtained for $\alpha^{\text{opt}}(n) = \arcsin{(n/\sqrt{(n-1)^2 + 1})}$. Hence, we need to define an extraction channel that acts as the identity at the angle of maximal violation. We find the choice made in Ref. \cite{Coopmans2019robust} to be suitable for our inequalities as well. Define the effective angle as
\begin{equation}
h_{n}(x) := \begin{cases}
    \dfrac{\pi}{4} \dfrac{x}{\alpha^{\text{opt}}(n)} & \text{if } x \in [0, \alpha^{\text{opt}}(n)] \\[10pt]
    \dfrac{\pi}{2} - \dfrac{\pi}{4} \dfrac{\pi - 2x}{\pi - 2\alpha^{\text{opt}}(n)} & \text{if } x \in (\alpha^{\text{opt}}(n), \pi/2]
\end{cases}
\end{equation}
we then take the extraction channel for the first party to be $\Lambda_1(h_n(x))$, where the channel is defined as in \eqref{eq_app:ex_channel}.
Lastly, a lower bound on $F_{\rm DI}$ can then be obtained by (i) fixing $\mu_n$ as a function of $f_n$ in order to obtain fidelity one for maximal violation, then (ii) minimize \eqref{eq:operator} over the angles between the observables used by the parties for each value of $f_n$, and finally (iii) taking $f_n$ to be the threshold value for which the minimum eigenvalue of \eqref{eq:operator} becomes positive. 
\end{document}